\providecommand{\U}[1]{\protect\rule{.1in}{.1in}}
\newtheorem{theorem}{Theorem}
\newtheorem{proposition}{Proposition}
\newtheorem{assumption}{Assumption}
\theoremstyle{definition}
\newtheorem{remark}{Remark}
\def\baselinestretch{1.3}
\begin{document}

\title{Asymptotic expansion for the transition densities of stochastic differential
equations driven by the gamma processes}
\author{Fan Jiang\thanks{School of Mathematical Sciences, Peking University, China,
Email: jiangf\underline{\hbox to 0.15cm{}}math@pku.edu.cn}
\and Xin Zang\thanks{Corresponding author. School of Mathematical Sciences, Peking University, China,
Email: xzang@pku.edu.cn}
\and Jingping Yang\thanks{School of Mathematical Sciences, Peking University,
China, Email: yangjp@math.pku.edu.cn} }
\date{\today }
\maketitle

\begin{abstract}
In this paper, enlightened by the asymptotic expansion methodology developed
by \cite{Chenxu_Li_2013_AOS} and \cite{li2016estimating}, we propose a
Taylor-type approximation for the transition densities of the stochastic
differential equations (SDEs) driven by the gamma processes, a special type of
L\'{e}vy processes. After representing the transition density as a conditional
expectation of Dirac delta function acting on the solution of the related SDE,
the key technical method for calculating the expectation of multiple
stochastic integrals conditional on the gamma process is presented. To
numerically test the efficiency of our method, we examine the pure jump
Ornstein--Uhlenbeck (OU) model and its extensions to two jump-diffusion
models. For each model, the maximum relative error between our approximated
transition density and the benchmark density obtained by the inverse Fourier
transform of the characteristic function is sufficiently small, which shows
the efficiency of our approximated method.

\end{abstract}

\section{Introduction}

It is known that L\'{e}vy-driven stochastic differential equations (SDEs) have
been discussed in detail
\citep{applebaum2009levy, kunita2019stochastic,kohatsujump}. The
jump-diffusion SDE driven by the gamma process, as one important type of the
L\'{e}vy-driven SDEs, has been widely used in finance. For instance, the
Ornstein--Uhlenbeck (OU) type SDEs driven by the gamma processes were applied
for modeling the short rate \citep{eberlein2013simple} and the returns of S\&P
500 index \citep{james2018stochastic}. The various sensitivity indices for the
asset price dynamics driven by the gamma processes were discussed in
\cite{kawai2010sensitivity} and \cite{kawai2011greeks}. Note that the gamma
process is a pure-jump increasing L\'{e}vy process
\citep{yor2007some,ContTankov04,applebaum2009levy}. Starting from the gamma
process, the variance gamma process was defined \citep{madan1990variance, Madan_Carr_Chang_1998_EFR}.

For the financial applications mentioned above, the transition densities of
the related SDEs play a vital role \citep{barndorff2012levy,schoutens03}.
However, except for some special cases, the transition densities or even
characteristic functions of the SDEs usually do not admit closed-form
formulas, which brings difficulties for related applications. In this paper,
enlightened by the asymptotic expansion method presented in
\cite{Chenxu_Li_2013_AOS} and \cite{li2016estimating}, we propose a
Taylor-type closed-form expansion for the transition density of the
jump-diffusion SDE driven by the gamma process.

For the jump-diffusion SDE driven by the gamma process, we start from
representing its transition density as a conditional expectation of a Dirac
delta function acting on the solution of the related SDE, by applying the
theory of Malliavin calculus
\citep{Kanwal2004generalized,hayashi2008asymptotic,ishikawa2013stochastic,kunita2019stochastic}.
The main challenge in our method is to calculate the expectation of the
product of the values of a gamma process at different intermediate times,
conditional on the value of this gamma process at the terminal time.
Consequently, through the distributional property of gamma bridge discussed in
\cite{ribeiro2004valuing}, we express this type of conditional expectation as
a polynomial function of the value of this gamma process at the terminal time.
In this context, the expansion term of the transition density for any finite
order can be analytically calculated in an efficient manner.

To illustrate the efficiency of our method, we conduct numerical analyses
through three examples of the SDEs driven by the gamma processes, i.e., the
pure jump OU model, along with its extensions to the constant diffusion and
the square-root diffusion models. For each model, we compare the true
transition density obtained by the inverse Fourier transform of its
characteristic function with the approximated density obtained by our proposed
asymptotic expansion method. The numerical results show that our approximated
transition density can be efficiently calculated and converge rapidly to the
true density.

The rest of this paper is organized as follows. Section \ref{sec: model_setup}
lays our model setup and gives the general expression of the asymptotic
expansion. Section \ref{sec:calculate_K} provides detailed procedures for
explicitly representing the expansion terms. Section \ref{sec:numerical}
exhibits the numerical performance of our expansion method through three
concrete examples. Section \ref{sec:conclusion} concludes the paper.

\section{The model setup and approximation methodology}

\label{sec: model_setup}

\subsection{Preliminaries of Dirac delta function}

\label{subsec:dirac_delta}

Before introducing our asymptotic expansion methodology, we first give a brief
introduction of the Dirac delta function. Please refer to
\cite{Kanwal2004generalized}, \cite{hayashi2008asymptotic},
\cite{ishikawa2013stochastic} and \cite{kunita2019stochastic} for more details.

For ease of exposition later, we introduce the following notations and
concepts.\ Denote by $\mathcal{S}^{\prime}(\mathbb{R})$ the set of all
real-valued tempered distributions. According to Section 6.2 in
\cite{Kanwal2004generalized}, the Dirac delta function denoted as
$\delta\left(  \cdot\right)  $ and its associated derivative operators
$\frac{d^{\ell}\delta\left(  \cdot\right)  }{dx^{\ell}}\ $for $\ell\geq
1$\ belong to $\mathcal{S}^{\prime}(\mathbb{R})$. Here, for each $\ell\geq1$,
the derivative operator $\frac{d^{\ell}\delta\left(  \cdot\right)  }{dx^{\ell
}}$ is defined through an inner product with an indefinitely differentiable
function $f(\cdot)$ with compact support on $\mathbb{R}$, i.e.,
\begin{equation}
\left\langle \frac{d^{\ell}\delta(x-y)}{dx^{\ell}},f(x)\right\rangle
_{x}=(-1)^{\ell}\left\langle \delta(x-y),\frac{d^{\ell}f}{dx^{\ell}%
}(x)\right\rangle _{x} \label{integrate-by-part}%
\end{equation}
for a fixed $y\in\mathbb{R}$, where the inner product $\left\langle f\left(
x\right)  ,g\left(  x\right)  \right\rangle _{x}:=\int_{\mathbb{-\infty}%
}^{\infty}f(x)g(x)dx$, see Section 2.6 in \cite{Kanwal2004generalized} for
more details.

Let $D_{\infty}(\mathbb{R})$ be the set of all real-valued smooth
Wiener-Poisson functionals and $D_{\infty}^{\prime}(\mathbb{R})$ be the set of
all real-valued generalized Wiener-Poisson functionals
\citep{kunita2019stochastic}. According to Theorem 5.12.1 and equation (5.175)
in \cite{kunita2019stochastic}, for a tempered distribution $\Phi
\in\mathcal{S}^{\prime}(\mathbb{R})$, a regular nondegenerate Wiener-Poisson
functional $F\in D_{\infty}(\mathbb{R})$ and a smooth Wiener-Poisson
functional $G\in D_{\infty}(\mathbb{R})$, the generalized expectation
$\mathbb{E}\left[  \cdot\right]  $ is defined as
\begin{equation}
\mathbb{E}\left[  \Phi\left(  F\right)  \cdot G\right]  :=\frac{1}{2\pi}%
\int_{-\infty}^{+\infty}\int_{\mathbb{-\infty}}^{+\infty}e^{-ivx}\Phi\left(
x\right)  E\left[  Ge^{ivF}\right]  dxdv, \label{generalised Esp}%
\end{equation}
where $F$ and $G$ can be treated as the random variables on the Wiener-Poisson
space and the expectation in the right-hand side of (\ref{generalised Esp}) is
the usual expectation in common sense. Hereafter, the notations $\mathbb{E}%
\left[  \cdot\right]  $ and $E\left[  \cdot\right]  $ represent the
generalized expectation and usual expectation respectively.

For a fixed $y\in\mathbb{R}$, when we take $\Phi(\cdot)=\delta(\cdot-y)$ and
$G\equiv1$ in (\ref{generalised Esp}), from the equation%
\begin{equation}
\int_{\mathbb{-\infty}}^{+\infty}e^{-ivx}\delta(x-y)dx=e^{-ivy},
\label{int2_1}%
\end{equation}
we obtain that
\begin{equation}
\mathbb{E}\left[  \delta\left(  F-y\right)  \right]  =\frac{1}{2\pi}%
\int_{\mathbb{-\infty}}^{\infty}e^{-ivy}E\left[  e^{ivF}\right]  dv.
\label{delta_Fourier}%
\end{equation}
Moreover, the regular nondegenerate Wiener-Poisson functional $F$\ in
(\ref{delta_Fourier}) can also be taken as the strong solution of a
homogeneous jump-diffusion SDE satisfying the nondegenerate bounded (NDB)
condition, see Sections 3.5 -- 3.6 in \cite{ishikawa2013stochastic} for more
details. Especially, taking $F\equiv0$ in (\ref{delta_Fourier}), we can obtain
that%
\begin{equation}
\delta\left(  y\right)  =\frac{1}{2\pi}\int_{-\infty}^{+\infty}e^{ivy}dv.
\label{delta_exp}%
\end{equation}

Given a parameter $\epsilon>0$, let $F(\epsilon)$ be a regular Wiener-Poisson
functional in $D_{\infty}(\mathbb{R})$. For example, $F\left(  \epsilon
\right)  $ can be taken as the strong solution of some jump-diffusion SDE with
a parameter $\epsilon>0$. In the remaining of this section, for a fixed
$y\in\mathbb{R}$, we introduce some theoretical results about the asymptotic
expansion of $\delta(F(\epsilon)-y)$ with respect to the parameter $\epsilon$
(see Section 4.1 of \cite{ishikawa2013stochastic}). We assume that the
functional $F(\epsilon)$ satisfies the uniformly nondegenerate condition
(Definition 4.1 in \cite{ishikawa2013stochastic}), and has an expansion
\begin{equation}
F\left(  \epsilon\right)  =\sum_{j=0}^{\infty}f_{j}\epsilon^{j}
\label{expan_F_eps}%
\end{equation}
with respect to the norm in $D_{\infty}(\mathbb{R})$, where $f_{0},f_{1}%
,f_{2},\ldots$ are smooth Wiener-Poisson functionals.

According to Theorem 4.1 of \cite{ishikawa2013stochastic}, for each fixed
$y\in\mathbb{R}$, $\delta\left(  F\left(  \epsilon\right)  -y\right)  $
belongs to $D_{\infty}^{\prime}(\mathbb{R})$ and has an asymptotic expansion%
\begin{equation}
\delta\left(  F\left(  \epsilon\right)  -y\right)  =\sum_{m=0}^{M}\Phi
_{m}(y)\epsilon^{m}+\mathcal{O}(\epsilon^{M+1}) \label{expan_delta_F}%
\end{equation}
with respect to the norm in $D_{\infty}^{\prime}(\mathbb{R})$, where
$M\in\mathbb{N}$ denotes an arbitrary order of the expansion. Given the
functionals $\left\{  f_{0},f_{1},f_{2},\ldots\right\}  $ defined by
(\ref{expan_F_eps}), the coefficients $\Phi_{m}(y)\in D_{\infty}^{\prime
}(\mathbb{R})$ for $m\geq0$ can be expressed as
\begin{equation}
\Phi_{0}(y)=\delta\left(  f_{0}-y\right)  \text{ and }\Phi_{m}(y)=\sum
_{\left(  \ell,\left(  j_{1},j_{2},\cdots,j_{\ell}\right)  \right)
\in\mathcal{S}_{m}}\frac{1}{\ell!}\frac{d^{\ell}\delta\left(  f_{0}-y\right)
}{dx^{\ell}}\prod\limits_{i=1}^{\ell}f_{j_{i}}\text{ for }m\geq1,
\label{phi_general_def}%
\end{equation}
where the index set $\mathcal{S}_{m}$ is defined as
\begin{align}
\mathcal{S}_{m}  &  :=\left\{  \left.  \left(  \ell,\mathbf{j}\left(
\ell\right)  \right)  \right\vert \ell=1,2,\ldots,\text{ }\mathbf{j}\left(
\ell\right)  =\left(  j_{1},j_{2},\ldots,j_{\ell}\right)  \text{ with }%
j_{1},j_{2},\ldots,j_{\ell}\geq1\right. \nonumber\\
&  \text{ \ \ \ \ }\left.  \text{and }j_{1}+j_{2}+\cdots+j_{\ell}=m\right\}  .
\label{Sm}%
\end{align}
For example, the coefficients $\Phi_{1}(y)$ and $\Phi_{2}(y)$ are given by%
\[
\Phi_{1}(y)=f_{1}\frac{d\delta\left(  f_{0}-y\right)  }{dx}\text{ and }%
\Phi_{2}(y)=f_{2}\frac{d\delta\left(  f_{0}-y\right)  }{dx}+\frac{1}{2}%
f_{1}^{2}\frac{d^{2}\delta\left(  f_{0}-y\right)  }{dx^{2}}.
\]

According to Section 4 in \cite{ishikawa2013stochastic}, by taking the
generalized expectation defined in (\ref{generalised Esp}) on both sides of
equation (\ref{expan_delta_F}), we can obtain that
\begin{equation}
\mathbb{E}\left[  \delta\left(  F\left(  \epsilon\right)  -y\right)  \right]
=\sum_{m=0}^{M}\mathbb{E}\left[  \Phi_{m}(y)\right]  \epsilon^{m}%
+\mathcal{O}(\epsilon^{M+1}). \label{Esp_delta_F}%
\end{equation}
For example, the terms $\mathbb{E}\left[  \Phi_{0}(y)\right]  $ and
$\mathbb{E}\left[  \Phi_{1}(y)\right]  $ in (\ref{Esp_delta_F}) can be
evaluated via (\ref{integrate-by-part}), (\ref{generalised Esp}) and
(\ref{int2_1}) as%
\[
\mathbb{E}\left[  \Phi_{0}(y)\right]  =\mathbb{E}[\delta\left(  f_{0}%
-y\right)  ]=\frac{1}{2\pi}\int_{-\infty}^{+\infty}e^{-ivy}E[e^{ivf_{0}}]dv
\]
and%
\[
\mathbb{E}\left[  \Phi_{1}(y)\right]  =\mathbb{E}[f_{1}\frac{d\delta\left(
f_{0}-y\right)  }{dx}]=\frac{1}{2\pi}\int_{-\infty}^{+\infty}ive^{-ivy}%
E[f_{1}e^{ivf_{0}}]dv.
\]

In the subsequent calculation of the expansion terms for the transition
density, by using equation (\ref{generalised Esp}), we will transform some
specific generalized expectations like $\mathbb{E}\left[  \Phi_{m}(y)\right]
$ into the usual expectations.

\subsection{The model setup}

In this paper, we consider the following homogeneous jump-diffusion SDE driven
by a gamma process
\begin{equation}
dX(t)=\mu(X(t);\bm{\theta})dt+\sigma(X(t);\bm{\theta})dW(t)+dL(t),\text{
}X\left(  0\right)  =x_{0}, \label{model}%
\end{equation}
where the functions $\mu\left(  x;\bm{\theta}\right)  $ and $\sigma
(x;\bm{\theta})$\ are assumed to depend on some parameter vector $\bm{\theta}$
belonging to an open bound set $\Theta$, $\left\{  W(t),t\geq0\right\}  $ is a
Brownian motion and $\left\{  L(t),t\geq0\right\}  $ is a gamma process.
Moreover, we assume that the gamma process $\left\{  L(t),t\geq0\right\}  $
starts at $L(0)=0$ with the density function%
\begin{equation}
p_{L\left(  t\right)  }\left(  x\right)  =\frac{b^{at}x^{at-1}e^{-bx}}%
{\Gamma\left(  at\right)  },\text{ }x\geq0 \label{density_gamma process}%
\end{equation}
at time $t>0$, where $a$ and $b$ are positive constants, and $\Gamma(\cdot)$
denotes the gamma function. The two processes $\left\{  W(t),t\geq0\right\}  $
and $\left\{  L(t),t\geq0\right\}  $ are independent. The characteristic
function of the gamma process $L(t)$ is calculated as%
\begin{equation}
Ee^{i\lambda L(t)}=\left(  1-\frac{i\lambda}{b}\right)  ^{-at}\triangleq
\exp\left[  t\psi_{L}\left(  \lambda\right)  \right]  ,
\label{characteristic exponent}%
\end{equation}
where $\psi_{L}\left(  \lambda\right)  =-a\log\left(  1-i\lambda/b\right)  $
is the characteristic exponent of $L(t)$.

To guarantee the existence and uniqueness of the strong solution $X(t)$\ of
SDE (\ref{model})\ and obtain other desirable properties for implementing our
method, the following standard and technical assumptions are assumed in this paper:

\begin{assumption}
The diffusion function $\sigma\left(  x;\bm{\theta}\right)  $ satisfies that
$\inf_{x\in\mathbb{R}}\sigma\left(  x;\bm{\theta}\right)  >0$\ for any
$\bm{\theta}\in\Theta$. \label{asmp_diffusion}
\end{assumption}

\begin{assumption}
For each $k\in\mathbb{N}_{+}$, the $k$-th order partial derivatives in $x$ of
$\mu\left(  x;\bm{\theta}\right)  $ and $\sigma(x;\bm{\theta})$ are uniformly
bounded for any $\left(  x,\bm{\theta}\right)  \in\mathbb{R}\times\Theta$.
\label{asmp_derivatives}
\end{assumption}

\begin{assumption}
The functions $\mu\left(  x;\bm{\theta}\right)  $ and $\sigma(x;\bm{\theta})$
satisfy the linear growth conditions%
\[
\left\vert \mu\left(  x;\bm{\theta}\right)  \right\vert \leq c_{1}\left(
1+\left\vert x\right\vert \right)  \text{ and }\left\vert \sigma\left(
x;\bm{\theta}\right)  \right\vert \leq c_{2}\left(  1+\left\vert x\right\vert
\right)  ,
\]
for some $c_{1},c_{2}\in\mathbb{R}_{+}$ and any $\left(  x,\bm{\theta}\right)
\in\mathbb{R}\times\Theta$.\label{asmp_linear_growth}
\end{assumption}

Assumption \ref{asmp_diffusion} and Assumption \ref{asmp_derivatives}
guarantee the NDB condition and uniformly nondegenerate condition for
justifying the validity and convergence of our proposed asymptotic expansion
method, which will be shown in section \ref{subsec: expan_tran_density}.
Assumption \ref{asmp_derivatives} and Assumption \ref{asmp_linear_growth}
guarantee the existence and uniqueness of the strong solution $X(t)$\ of SDE
(\ref{model}).

\subsection{The expansion of the transition density}

\label{subsec: expan_tran_density}

For the jump-diffusion SDE (\ref{model}), by the time-homogeneity nature, the
transition density of $X\left(  t+\Delta\right)  $ given $X\left(  t\right)
=x_{0}$ can be expressed as
\begin{equation}
\mathbb{P}\left(  \left.  X\left(  t+\Delta\right)  \in dx\right\vert X\left(
t\right)  =x_{0};\bm{\theta}\right)  =p_{X\left(  \Delta\right)  }\left(
x|x_{0};\bm{\theta}\right)  dx, \label{transition density}%
\end{equation}
where $\Delta$ denotes the time interval. For most SDEs defined in
(\ref{model}), their transition densities do not admit closed-form
expressions. Even for some special cases with closed-form conditional
characteristic functions, the inversion to transition densities may not be
easy, especially for the pure jump processes
\citep{barndorff2012levy, schoutens03}. In the following, we propose a
closed-form expansion for approximating the transition density $p_{X\left(
\Delta\right)  }\left(  x|x_{0};\bm{\theta}\right)  $ of the jump-diffusion
SDE (\ref{model}).

To start with, we parameterize the dynamics of $X(t)$ in (\ref{model}) via a
parameter $\epsilon\in\left(  0,1\right]  $ as
\begin{equation}
dX(\epsilon,t)=\epsilon\left[  \mu(X(\epsilon,t);\bm{\theta})dt+\sigma\left(
X(\epsilon,t);\bm{\theta}\right)  dW(t)+dL(t)\right]  ,\text{ }X(\epsilon
,0)=x_{0}. \label{S-L basic model}%
\end{equation}
Note that the solution $X(\epsilon,t)$ of (\ref{S-L basic model}) satisfies
$\left.  X(\epsilon,t)\right\vert _{\epsilon=1}=X(t)$. By regarding
$\epsilon\in\left(  0,1\right]  $ as an extra element of the parameter vector,
we see that the SDE (\ref{S-L basic model}) still satisfies Assumption
\ref{asmp_derivatives} and Assumption \ref{asmp_linear_growth}, which implies
the existence and uniqueness of the strong solution $X(\epsilon,t)$
\citep{platen2010numerical}. The transition density of $X(\epsilon,t)$ in
(\ref{S-L basic model})\ can be expressed as
\begin{equation}
\mathbb{P}\left(  \left.  X(\epsilon,t+\Delta)\in dx\right\vert X\left(
\epsilon,t\right)  =x_{0};\bm{\theta}\right)  =p_{X\left(  \epsilon
,\Delta\right)  }\left(  x|x_{0};\bm{\theta}\right)  dx.
\label{X_eps_tran_density}%
\end{equation}
Once we obtain an asymptotic expansion of $p_{X\left(  \epsilon,\Delta\right)
}\left(  x|x_{0};\bm{\theta}\right)  $ as a series of $\epsilon$, the
transition density $p_{X\left(  \Delta\right)  }\left(  \left.  x\right\vert
x_{0};\bm{\theta}\right)  $ in (\ref{transition density})\ can be obtained by
letting $\epsilon=1$.

To derive the asymptotic expansion of $p_{X\left(  \epsilon,\Delta\right)
}\left(  x|x_{0};\bm{\theta}\right)  $ in (\ref{X_eps_tran_density}), we first
claim that $X(\epsilon,t)$ satisfies the NDB condition introduced in Section
\ref{subsec:dirac_delta}, which will justify the representation of the
transition density $p_{X\left(  \epsilon,\Delta\right)  }\left(
x|x_{0};\bm{\theta}\right)  $ as a conditional expectation shown below. Note
that for $X(\epsilon,t)$, the NDB condition introduced from Definition 3.5 in
\cite{ishikawa2013stochastic} is transformed to the condition that
$\sigma\left(  x;\bm{\theta}\right)  \neq0$ for any $\left(
x,\bm{\theta}\right)  \in\mathbb{R}\times\Theta$, which is guaranteed by
Assumption \ref{asmp_diffusion}.

Based on the NDB condition and the time-homogeneity nature of $X(\epsilon,t)$,
we represent\newline$p_{X\left(  \epsilon,\Delta\right)  }\left(
x|x_{0};\bm{\theta}\right)  $ as a conditional expectation of Dirac delta
function acting on $X\left(  \epsilon,\Delta\right)  -x$ by
\begin{equation}
p_{X\left(  \epsilon,\Delta\right)  }\left(  \left.  x\right\vert
x_{0};\bm{\theta}\right)  =\mathbb{E}\left[  \left.  \delta\left(  X\left(
\epsilon,\Delta\right)  -x\right)  \right\vert X(\epsilon,0)=x_{0}%
;\bm{\theta}\right]  . \label{p^eps density}%
\end{equation}
The validity of (\ref{p^eps density}) will be verified in Remark
\ref{remark:dirac delta} below in detail. For brevity, we omit the initial
condition $X(\epsilon,0)=x_{0}$ and drop the dependence of $\bm{\theta}$\ in
the dynamics of (\ref{model}) and (\ref{S-L basic model}) hereafter, unless
especially noted. The starting point of the expansion for (\ref{p^eps density}%
) lies in that $X\left(  \epsilon,\Delta\right)  $ admits the pathwise
Taylor-type expansion%
\begin{equation}
X\left(  \epsilon,\Delta\right)  =\sum_{m=0}^{M}X_{m}(\Delta)\epsilon
^{m}+\mathcal{O}(\epsilon^{M+1}), \label{X_path_expan}%
\end{equation}
where $M\in\mathbb{N}$ denotes an arbitrary order of expansion, see, e.g.,
Chapter 4 in \cite{platen2010numerical} for the validity of this expansion.

In the following, we first derive the expressions of the expansion terms
$X_{m}(\Delta)$ in (\ref{X_path_expan}). We rewrite (\ref{S-L basic model}) in
integrated form as%

\begin{equation}
X(\epsilon,\Delta)=x_{0}+\epsilon\left[  \int_{0}^{\Delta}\mu(X(\epsilon
,s))ds+\int_{0}^{\Delta}\sigma\left(  X(\epsilon,s)\right)  dW(s)+L(\Delta
)\right]  . \label{X_eps_integral}%
\end{equation}
Letting $\epsilon\downarrow0$ on both sides in (\ref{X_eps_integral}), we
have
\begin{equation}
X(0,\Delta)=x_{0} \label{X_0}%
\end{equation}
for any fixed $\Delta\geq0$. Further, from the expansion of $X\left(
\epsilon,\Delta\right)  $ in (\ref{X_path_expan}), we can obtain that
\begin{equation}
\mu\left(  X(\epsilon,t)\right)  :=\sum_{m=0}^{M}\mu_{m}(t)\epsilon
^{m}+\mathcal{O}(\epsilon^{M+1}) \label{mu_expan}%
\end{equation}
and%
\begin{equation}
\sigma\left(  X(\epsilon,t)\right)  :=\sum_{m=0}^{M}\sigma_{m}(t)\epsilon
^{m}+\mathcal{O}(\epsilon^{M+1}) \label{sigma_expan}%
\end{equation}
in the SDE (\ref{S-L basic model}), where
\begin{equation}
\mu_{m}(t):=\frac{1}{m!}\left.  \frac{d^{m}\mu\left(  X(\epsilon,t)\right)
}{d\epsilon^{m}}\right\vert _{\epsilon=0}=\sum_{\left(  \ell,\left(
j_{1},j_{2},\cdots,j_{\ell}\right)  \right)  \in\mathcal{S}_{m}}\frac{1}%
{\ell!}\frac{d^{\ell}\mu\left(  x_{0}\right)  }{dx^{\ell}}\prod\limits_{i=1}%
^{\ell}X_{j_{i}}\left(  t\right)  \label{mu_m(t)}%
\end{equation}
and%
\begin{equation}
\sigma_{m}(t):=\frac{1}{m!}\left.  \frac{d^{m}\sigma\left(  X(\epsilon
,t)\right)  }{d\epsilon^{m}}\right\vert _{\epsilon=0}=\sum_{\left(
\ell,\left(  j_{1},j_{2},\cdots,j_{\ell}\right)  \right)  \in\mathcal{S}_{m}%
}\frac{1}{\ell!}\frac{d^{\ell}\sigma\left(  x_{0}\right)  }{dx^{\ell}}%
\prod\limits_{i=1}^{\ell}X_{j_{i}}\left(  t\right)  \label{sigmam(t)}%
\end{equation}
with the index set $\mathcal{S}_{m}$ defined by (\ref{Sm}) and the condition
$X(0,t)=x_{0}$ as in (\ref{X_0}). Plugging (\ref{X_path_expan}),
(\ref{mu_expan}), and (\ref{sigma_expan}) into (\ref{X_eps_integral}), we have%
\begin{align*}
&  \sum_{m=0}^{M}X_{m}(\Delta)\epsilon^{m}+\mathcal{O}(\epsilon^{M+1})\\
=  &  x_{0}+\sum_{m=0}^{M}\left(  \int_{0}^{\Delta}\mu_{m}(s)ds+\int%
_{0}^{\Delta}\sigma_{m}(s)dW(s)\right)  \epsilon^{m+1}+\epsilon L(\Delta
)+\mathcal{O}(\epsilon^{M+2}).
\end{align*}
By comparing the coefficients of $\epsilon^{m}$ for $m=0,1,2,\ldots$, we
conclude that $X_{0}\left(  \Delta\right)  \equiv x_{0}$,
\begin{equation}
X_{1}\left(  \Delta\right)  =\mu\left(  x_{0}\right)  \Delta+\sigma
(x_{0})W(\Delta)+L\left(  \Delta\right)  , \label{X_1}%
\end{equation}
and
\begin{equation}
X_{m}(\Delta)=\int_{0}^{\Delta}\mu_{m-1}(s)ds+\int_{0}^{\Delta}\sigma
_{m-1}(s)dW(s)\text{ for }m\geq2, \label{X_m}%
\end{equation}
where $\mu_{m-1}(s)$ and $\sigma_{m-1}(s)$ are defined in equations
(\ref{mu_m(t)}) and (\ref{sigmam(t)}) respectively. As the expressions
$\mu_{m-1}(s)$ and $\sigma_{m-1}(s)$ involved in the right-hand side of
(\ref{X_m}) are determined by the expansion terms $X_{0}\left(  s\right)  ,$
$X_{1}(s),\ldots,X_{m-1}(s)$ in (\ref{X_path_expan}) with orders at most
$m-1$, we notice that $X_{m}(\Delta)$ is fully determined by $X_{i}%
(s),i=0,1,2,\ldots,m-1$ for all $0\leq s\leq\Delta$.

Next, we illustrate the expansion of $p_{X\left(  \epsilon,\Delta\right)
}\left(  x|x_{0};\bm{\theta}\right)  $ in (\ref{p^eps density}) as a
convergent series of $\epsilon$. To do this, we standardize $X(\epsilon
,\Delta)$ into%
\begin{equation}
Y(\epsilon,\Delta)=\frac{X(\epsilon,\Delta)-x_{0}}{\sigma(x_{0})\sqrt{\Delta
}\epsilon}, \label{Y_eps_delta}%
\end{equation}
from which the transition density $p_{X\left(  \epsilon,\Delta\right)
}\left(  x|x_{0};\bm{\theta}\right)  $ in (\ref{p^eps density}) can be
represented in terms of $Y\left(  \epsilon,\Delta\right)  $ as%
\begin{equation}
p_{X\left(  \epsilon,\Delta\right)  }\left(  x|x_{0};\bm{\theta}\right)
=\left.  \frac{1}{\sigma(x_{0})\sqrt{\Delta}\epsilon}\mathbb{E}\left[
\delta\left(  Y\left(  \epsilon,\Delta\right)  -y\right)  \right]  \right\vert
_{y=\frac{x-x_{0}}{\sigma(x_{0})\sqrt{\Delta}\epsilon}}. \label{density:X-Y}%
\end{equation}
According to \cite{hayashi2012composition} and Definition 4.1 in
\cite{ishikawa2013stochastic}, $Y(\epsilon,\Delta)$ in (\ref{Y_eps_delta}%
)\ satisfies the uniformly nondegenerate condition which is guaranteed by
Assumption \ref{asmp_diffusion} and Assumption \ref{asmp_derivatives}. Then
the expectation in the right-hand side of (\ref{density:X-Y}) admits a
convergent series of $\epsilon$. To obtain an expansion of $\mathbb{E}\left[
\delta\left(  Y\left(  \epsilon,\Delta\right)  -y\right)  \right]  $ in
(\ref{density:X-Y}) with respect to $\epsilon$, we notice from
(\ref{X_path_expan}), (\ref{Y_eps_delta}) and $X_{0}\left(  \Delta\right)
\equiv x_{0}$ that%
\begin{equation}
Y\left(  \epsilon,\Delta\right)  =\sum_{m=0}^{M}Y_{m}(\Delta)\epsilon
^{m}+\mathcal{O}(\epsilon^{M+1}), \label{Y_eps_delta_expan}%
\end{equation}
where%
\begin{equation}
Y_{m}(\Delta)=\frac{X_{m+1}(\Delta)}{\sigma(x_{0})\sqrt{\Delta}},\text{ for
}m=0,1,2,\ldots. \label{X-Y}%
\end{equation}
Since the functional $\delta\left(  \cdot-y\right)  $ belongs to
$\mathcal{S}^{\prime}(\mathbb{R})$, according to (\ref{expan_delta_F}), we
obtain a Taylor-type expansion of $\delta\left(  Y\left(  \epsilon
,\Delta\right)  -y\right)  $ as%
\begin{equation}
\delta\left(  Y\left(  \epsilon,\Delta\right)  -y\right)  =\sum_{m=0}^{M}%
\Phi_{m}(y)\epsilon^{m}+\mathcal{O}(\epsilon^{M+1}) \label{delta_Y_eps_expan}%
\end{equation}
for any $M\in\mathbb{N}$. Here in (\ref{delta_Y_eps_expan}), it follows from
(\ref{phi_general_def}), (\ref{Y_eps_delta_expan}) and (\ref{X-Y}) that%
\begin{equation}
\Phi_{0}\left(  y\right)  =\delta\left(  Y_{0}(\Delta)-y\right)  \label{phi0}%
\end{equation}
and%
\begin{equation}
\Phi_{m}(y)=\sum_{\left(  \ell,\left(  j_{1},j_{2},\cdots,j_{\ell}\right)
\right)  \in\mathcal{S}_{m}}\frac{1}{\ell!}\frac{1}{(\sigma(x_{0})\sqrt
{\Delta})^{\ell}}\frac{d^{\ell}\delta\left(  Y_{0}(\Delta)-y\right)
}{dx^{\ell}}\prod\limits_{i=1}^{\ell}X_{j_{i}+1}(\Delta) \label{phim}%
\end{equation}
for $m\geq1$, with the index set $\mathcal{S}_{m}$ defined in (\ref{Sm}) and
$Y_{0}(\Delta)=\frac{X_{1}(\Delta)}{\sigma(x_{0})\sqrt{\Delta}}$. Further,
based on (\ref{Esp_delta_F}), we take the generalized expectation on both
sides of (\ref{delta_Y_eps_expan}) to obtain that%
\begin{equation}
\mathbb{E}\left[  \delta\left(  Y\left(  \epsilon,\Delta\right)  -y\right)
\right]  =\sum_{m=0}^{M}\Omega_{m}(y)\epsilon^{m}+\mathcal{O}(\epsilon^{M+1}),
\label{Esp_delta_Y_eps}%
\end{equation}
where the generalized expectation $\Omega_{m}\left(  y\right)  :=\mathbb{E}%
\Phi_{m}\left(  y\right)  $ for $m\geq0$ will be explicitly derived and
transformed into some usual expectation below.

For simplicity, we name $\Omega_{0}\left(  y\right)  $ and $\Omega_{m}\left(
y\right)  $ for $m\geq1$ in (\ref{Esp_delta_Y_eps})\ the leading term and the
higher-order terms\ respectively. Combining (\ref{density:X-Y}) and
(\ref{Esp_delta_Y_eps}) by letting $\epsilon=1$, the approximated transition
density of $X(\Delta)$ up to the $M$-th order is proposed as
\begin{equation}
p_{X\left(  \Delta\right)  }^{\left(  M\right)  }\left(  \left.  x\right\vert
x_{0};\bm{\theta}\right)  :=\frac{1}{\sigma(x_{0})\sqrt{\Delta}}\sum_{m=0}%
^{M}\Omega_{m}\left(  \frac{x-x_{0}}{\sigma(x_{0})\sqrt{\Delta}}\right)  .
\label{pX_M_expan}%
\end{equation}
Consequently, to approximate the transition density $p_{X\left(
\Delta\right)  }\left(  \left.  x\right\vert x_{0};\bm{\theta}\right)  $ in
(\ref{transition density}) up to any finite order, it suffices to specify the
functions $\Omega_{0}\left(  y\right)  $ and $\Omega_{m}\left(  y\right)  $
for $m\geq1$ in (\ref{pX_M_expan}), which will be investigated in Section
\ref{subsec:omega_m}.

\begin{remark}
\label{remark:dirac delta} The equation (\ref{p^eps density}) can be verified
as follows. According to Section 6.4 in \cite{kunita2019stochastic}, under the
initial condition $X\left(  \epsilon,0\right)  =x_{0}$, the strong solution
$X\left(  \epsilon,t\right)  $ of SDE (\ref{S-L basic model}) is uniquely tied
to a stochastic flow, which is a regular Wiener-Poisson functional belonging
to the set $D_{\infty}(\mathbb{R})$ (Section 3.1 in
\cite{kunita2019stochastic}). Thus, by noting that $X(\epsilon,t)$ satisfies
the NDB condition, under the initial condition $X\left(  \epsilon,0\right)
=x_{0}$, we take $F=X\left(  \epsilon,\Delta\right)  $ in (\ref{delta_Fourier}%
) to obtain that%
\begin{align*}
&  \mathbb{E}\left[  \left.  \delta\left(  X\left(  \epsilon,\Delta\right)
-x\right)  \right\vert X(\epsilon,0)=x_{0};\bm{\theta}\right] \\
=  &  \frac{1}{2\pi}\int_{-\infty}^{+\infty}e^{-ivx}E\left[  \left.
e^{ivX\left(  \epsilon,\Delta\right)  }\right\vert X(\epsilon,0)=x_{0}%
;\bm{\theta}\right]  dv=p_{X\left(  \epsilon,\Delta\right)  }\left(
x|x_{0};\bm{\theta}\right)  ,
\end{align*}
which verifies (\ref{p^eps density}).
\end{remark}

\begin{remark}
\label{remark:PJ} The above closed-form expansion method can also be applied
to the special case of SDE (\ref{model}) with $\sigma(X(t);\bm{\theta})\equiv
0$. In this context, we only adjust the above algorithm to standardize
$X\left(  \epsilon,\Delta\right)  $ defined by (\ref{S-L basic model})\ into
\[
Y\left(  \epsilon,\Delta\right)  =\frac{X\left(  \epsilon,\Delta\right)
-x_{0}}{\epsilon}%
\]
instead of (\ref{Y_eps_delta}), which implies that%
\[
\mathbb{E}\left[  \delta\left(  X\left(  \epsilon,\Delta\right)  -x\right)
\right]  =\left.  \frac{1}{\epsilon}\mathbb{E}\left[  \delta\left(  Y\left(
\epsilon,\Delta\right)  -y\right)  \right]  \right\vert _{y=\frac{x-x_{0}%
}{\epsilon}}.
\]
The remaining procedures are performed in a similar manner. Therefore, the
approximated transition density of $X(\Delta)$ up to the $M$-th order can be
obtained by
\begin{equation}
p_{X\left(  \Delta\right)  }^{\left(  M\right)  }\left(  \left.  x\right\vert
x_{0};\bm{\theta}\right)  =\sum_{m=0}^{M}\Omega_{m}\left(  x-x_{0}\right)  ,
\label{pX_M_expan_PJ}%
\end{equation}
where $\Omega_{m}(y)=E\Phi_{m}\left(  y\right)  $ and the terms $\Phi
_{m}\left(  y\right)  $ are calculated by
\[
\Phi_{0}\left(  y\right)  =\delta\left(  Y_{0}(\Delta)-y\right)
\]
and%
\[
\Phi_{m}(y)=\sum_{\left(  \ell,\left(  j_{1},j_{2},\cdots,j_{\ell}\right)
\right)  \in\mathcal{S}_{m}}\frac{1}{\ell!}\frac{d^{\ell}\delta\left(
Y_{0}(\Delta)-y\right)  }{dx^{\ell}}\prod\limits_{i=1}^{\ell}X_{j_{i}%
+1}(\Delta)
\]
for $m\geq1$, with the index set $\mathcal{S}_{m}$ defined in (\ref{Sm}) and
$Y_{0}(\Delta)=X_{1}(\Delta)$.
\end{remark}

\subsection{General expressions of the leading term and high-order terms}

\label{subsec:omega_m}

In this part, we give the explicit expression of the leading term $\Omega
_{0}\left(  y\right)  $ and the general representations of the higher-order
terms $\Omega_{m}\left(  y\right)  $ for $m\geq1$ defined\ in
(\ref{Esp_delta_Y_eps}). Throughout this section, we denote by $\phi\left(
\cdot\right)  $ the density function of a standard normal variable and recall
that $p_{L\left(  t\right)  }\left(  \cdot\right)  $ is the density function
of the gamma process $L(t)$ given by (\ref{density_gamma process}).

From (\ref{phi0}), the leading term $\Omega_{0}\left(  y\right)  $ is
expressed as%
\begin{equation}
\Omega_{0}(y)=\mathbb{E}\left[  \delta\left(  Y_{0}(\Delta)-y\right)  \right]
, \label{omega_0(y)}%
\end{equation}
which is exactly the density function of $Y_{0}(\Delta)$ evaluated at $y$. The
explicit expression of $\Omega_{0}(y)$ is given in the following proposition.

\begin{proposition}
The leading term $\Omega_{0}\left(  y\right)  $ in (\ref{pX_M_expan}) admits
the following explicit expression
\[
\Omega_{0}(y)=\int_{0}^{+\infty}\phi\left(  y-\frac{\mu\left(  x_{0}\right)
\Delta+u}{\sigma(x_{0})\sqrt{\Delta}}\right)  p_{L\left(  \Delta\right)
}\left(  u\right)  du,
\]
where $p_{L\left(  \Delta\right)  }(\cdot)$ is the density function of the
gamma process $L(\Delta)$ given by (\ref{density_gamma process}).
\end{proposition}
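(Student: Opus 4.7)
The plan is to exploit the explicit form of $Y_{0}(\Delta)$ dictated by equations (\ref{X_1}) and (\ref{X-Y}), and then reduce $\Omega_{0}(y)$ to an ordinary density via conditioning on the gamma increment. From (\ref{X-Y}) and (\ref{X_1}),
\[
Y_{0}(\Delta)=\frac{X_{1}(\Delta)}{\sigma(x_{0})\sqrt{\Delta}}=\frac{\mu(x_{0})\Delta+\sigma(x_{0})W(\Delta)+L(\Delta)}{\sigma(x_{0})\sqrt{\Delta}},
\]
so $Y_{0}(\Delta)$ is simply an affine combination of an independent Brownian increment and an independent gamma increment.

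First, I would identify the generalized expectation $\Omega_{0}(y)=\mathbb{E}[\delta(Y_{0}(\Delta)-y)]$ in (\ref{omega_0(y)}) with the ordinary probability density of $Y_{0}(\Delta)$ evaluated at $y$. This is the same Fourier-inversion identification carried out in Remark \ref{remark:dirac delta} for $X(\epsilon,\Delta)$: applying (\ref{delta_Fourier}) with $F=Y_{0}(\Delta)$ (which is regular nondegenerate because of the Brownian component, whose coefficient $\sigma(x_{0})/(\sigma(x_{0})\sqrt{\Delta})$ is nonzero by Assumption \ref{asmp_diffusion}) yields
\[
\mathbb{E}[\delta(Y_{0}(\Delta)-y)]=\frac{1}{2\pi}\int_{-\infty}^{+\infty}e^{-ivy}E[e^{ivY_{0}(\Delta)}]\,dv=p_{Y_{0}(\Delta)}(y).
\]

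Next, I would compute $p_{Y_{0}(\Delta)}(y)$ by conditioning on $L(\Delta)=u$. Since $W$ and $L$ are independent and $W(\Delta)\sim N(0,\Delta)$, the conditional law of $Y_{0}(\Delta)$ given $L(\Delta)=u$ is Gaussian with mean $(\mu(x_{0})\Delta+u)/(\sigma(x_{0})\sqrt{\Delta})$ and variance $1$, whose density at $y$ equals $\phi\!\left(y-\frac{\mu(x_{0})\Delta+u}{\sigma(x_{0})\sqrt{\Delta}}\right)$. Integrating this conditional density against the gamma marginal $p_{L(\Delta)}(u)$ supported on $(0,+\infty)$ by the law of total probability gives the claimed identity.

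There is essentially no serious obstacle: the only substantive point is the identification of the generalized expectation with the ordinary density, but that is an immediate transcription of the argument in Remark \ref{remark:dirac delta}. Everything else is routine conditioning on an independent gamma variable followed by Fubini to interchange integration in $u$ with the Gaussian density evaluation.
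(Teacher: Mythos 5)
Your proposal is correct and follows essentially the same route as the paper: identify $\Omega_{0}(y)=\mathbb{E}[\delta(Y_{0}(\Delta)-y)]$ with the density of $Y_{0}(\Delta)$ via the Fourier representation (\ref{delta_Fourier}), condition on $L(\Delta)$ to see that $Y_{0}(\Delta)$ is conditionally $N\bigl((\mu(x_{0})\Delta+u)/(\sigma(x_{0})\sqrt{\Delta}),1\bigr)$, and integrate against $p_{L(\Delta)}$. The paper merely carries out the delta-function and Fourier manipulations more explicitly inside the integral, whereas you package them as the standard density identification; the substance is identical.
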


\begin{proof}
From (\ref{delta_Fourier}) and (\ref{omega_0(y)}), we obtain that%
\begin{align}
\Omega_{0}(y)  &  =\mathbb{E}\left[  \delta\left(  Y_{0}(\Delta)-y\right)
\right] \nonumber\\
&  =\frac{1}{2\pi}\int_{-\infty}^{+\infty}e^{-ivy}E\left[  e^{ivY_{0}\left(
\Delta\right)  }\right]  dv\nonumber\\
&  =E\left[  \frac{1}{2\pi}\int_{-\infty}^{+\infty}e^{-ivy}E\left[  \left.
e^{ivY_{0}\left(  \Delta\right)  }\right\vert L\left(  \Delta\right)  \right]
dv\right]  . \label{int2_2}%
\end{align}
We notice from (\ref{X_1}) and (\ref{X-Y}) that $Y_{0}(\Delta)$ in
(\ref{int2_2}) can be represented as
\begin{equation}
Y_{0}(\Delta)=\frac{X_{1}(\Delta)}{\sigma(x_{0})\sqrt{\Delta}}=\frac
{W(\Delta)}{\sqrt{\Delta}}+\frac{\mu\left(  x_{0}\right)  \Delta+L\left(
\Delta\right)  }{\sigma(x_{0})\sqrt{\Delta}}. \label{Y_0}%
\end{equation}
Here, conditioned on the jump term $L\left(  \Delta\right)  $, the variable
$Y_{0}(\Delta)$ in (\ref{Y_0})\ follows a normal distribution. Therefore, the
inner term of the expectation in the last equation of (\ref{int2_2}) can be
calculated as%
\begin{align}
&  \frac{1}{2\pi}\int_{-\infty}^{+\infty}e^{-ivy}E\left[  \left.
e^{ivY_{0}\left(  \Delta\right)  }\right\vert L\left(  \Delta\right)  \right]
dv\nonumber\\
=  &  E\left[  \left.  \frac{1}{2\pi}\int_{-\infty}^{+\infty}e^{-ivy}%
e^{iv\left(  \frac{W(\Delta)}{\sqrt{\Delta}}+\frac{\mu\left(  x_{0}\right)
\Delta+L\left(  \Delta\right)  }{\sigma(x_{0})\sqrt{\Delta}}\right)
}dv\right\vert L\left(  \Delta\right)  \right] \nonumber\\
=  &  \int_{-\infty}^{\infty}E\left[  \left.  \frac{1}{2\pi}\int_{-\infty
}^{+\infty}e^{-ivy}e^{iv\left(  x+\frac{\mu\left(  x_{0}\right)
\Delta+L\left(  \Delta\right)  }{\sigma(x_{0})\sqrt{\Delta}}\right)
}dv\right\vert L\left(  \Delta\right)  \right]  \phi\left(  x\right)
dx\nonumber\\
=  &  E\left[  \left.  \int_{-\infty}^{\infty}\left(  \frac{1}{2\pi}%
\int_{-\infty}^{+\infty}e^{-ivy}e^{iv\left(  x+\frac{\mu\left(  x_{0}\right)
\Delta+L\left(  \Delta\right)  }{\sigma(x_{0})\sqrt{\Delta}}\right)
}dv\right)  \phi\left(  x\right)  dx\right\vert L\left(  \Delta\right)
\right]  . \label{int2_3}%
\end{align}
By the relation (\ref{delta_exp}), we obtain that
\[
\frac{1}{2\pi}\int_{-\infty}^{+\infty}e^{-ivy}e^{iv\left(  x+\frac{\mu\left(
x_{0}\right)  \Delta+L\left(  \Delta\right)  }{\sigma(x_{0})\sqrt{\Delta}%
}\right)  }dv=\delta\left(  x+\frac{\mu\left(  x_{0}\right)  \Delta+L\left(
\Delta\right)  }{\sigma(x_{0})\sqrt{\Delta}}-y\right)  .
\]
Then plugging the above equation into (\ref{int2_3}) by noting the definition
of the Dirac delta function, we obtain that%
\begin{align}
\frac{1}{2\pi}\int_{-\infty}^{+\infty}e^{-ivy}E\left[  \left.  e^{ivY_{0}%
\left(  \Delta\right)  }\right\vert L\left(  \Delta\right)  \right]  dv  &
=E\left[  \left.  \int_{-\infty}^{\infty}\delta\left(  x+\frac{\mu\left(
x_{0}\right)  \Delta+L\left(  \Delta\right)  }{\sigma(x_{0})\sqrt{\Delta}%
}-y\right)  \phi\left(  x\right)  dx\right\vert L\left(  \Delta\right)
\right] \nonumber\\
&  =E\left[  \left.  \phi\left(  y-\frac{\mu\left(  x_{0}\right)
\Delta+L\left(  \Delta\right)  }{\sigma(x_{0})\sqrt{\Delta}}\right)
\right\vert L\left(  \Delta\right)  \right]  . \label{int2_4}%
\end{align}
Plugging (\ref{int2_4}) into (\ref{int2_2}), the leading term $\Omega_{0}(y)$
can be finally calculated as
\[
\Omega_{0}(y)=E\left[  E\left[  \left.  \phi\left(  y-\frac{\mu\left(
x_{0}\right)  \Delta+L\left(  \Delta\right)  }{\sigma(x_{0})\sqrt{\Delta}%
}\right)  \right\vert L\left(  \Delta\right)  \right]  \right]  =\int%
_{0}^{+\infty}\phi\left(  y-\frac{\mu\left(  x_{0}\right)  \Delta+u}%
{\sigma(x_{0})\sqrt{\Delta}}\right)  p_{L\left(  \Delta\right)  }\left(
u\right)  du.
\]

\end{proof}

To calculate the higher-order terms $\Omega_{m}\left(  y\right)  $ for
$m\geq1$, we introduce the following notations. For $\ell\geq1$ and
$\mathbf{j}\left(  \ell\right)  =\left(  j_{1},j_{2},\ldots,j_{\ell}\right)  $
with $j_{i}\geq1$, we define
\begin{equation}
K_{\left(  \ell,\mathbf{j}\left(  \ell\right)  \right)  }\left(  z_{1}%
,z_{2}\right)  :=\left.  E\left(  \left.  \prod\limits_{i=1}^{\ell}X_{j_{i}%
+1}(\Delta)\right\vert W(\Delta),L\left(  \Delta\right)  \right)  \right\vert
_{W\left(  \Delta\right)  =z_{1}\sqrt{\Delta},L\left(  \Delta\right)  =z_{2}}.
\label{K(z)}%
\end{equation}
Meanwhile, for any bivariate differentiable function $u(x,y)$ defined on
$\mathbb{R}^{2}$, we introduce the following partial differential operators
with respect to the first variable:%
\begin{equation}
\mathcal{D}_{1}^{(1)}\left(  u(x,y)\right)  :=\frac{\partial u(x,y)}{\partial
x}-xu(x,y)\text{ and }\mathcal{D}_{1}^{(n)}\left(  u(x,y)\right)
:=\mathcal{D}_{1}^{(1)}\left(  \mathcal{D}_{1}^{(n-1)}\left(  u(x,y)\right)
\right)  \text{ for }n\geq2. \label{itera derivative}%
\end{equation}
The representations of $\Omega_{m}\left(  y\right)  $ for $m\geq1$ are given
in the following theorem.

\begin{theorem}
\label{theorem:omega_m} For any integer $m\geq1$, the high-order term
$\Omega_{m}\left(  y\right)  $\ in (\ref{pX_M_expan})\ admits the following
expression:%
\begin{equation}
\Omega_{m}(y)=\sum_{\left(  \ell,\left(  j_{1},j_{2},\cdots,j_{\ell}\right)
\right)  \in\mathcal{S}_{m}}\frac{(-1)^{^{\ell}}}{\ell!}\frac{1}{(\sigma
(x_{0})\sqrt{\Delta})^{\ell}}\int_{0}^{+\infty}\mathcal{D}_{1}^{(\ell)}\left(
K_{\left(  \ell,\mathbf{j}\left(  \ell\right)  \right)  }\left(  z_{1}%
,z_{2}\right)  \right)  \cdot\phi(z_{1})\cdot p_{L\left(  \Delta\right)
}(z_{2})dz_{2}, \label{Omega_m(y)}%
\end{equation}
where the index set $\mathcal{S}_{m}$ is defined in (\ref{Sm}), $p_{L\left(
\Delta\right)  }(\cdot)$ is the density function of the gamma process
$L(\Delta)$ given by (\ref{density_gamma process}) and
\begin{equation}
z_{1}=y-\frac{\mu\left(  x_{0}\right)  \Delta+z_{2}}{\sigma(x_{0})\sqrt
{\Delta}}. \label{z1}%
\end{equation}

\end{theorem}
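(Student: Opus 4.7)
The plan is to use linearity and fix one tuple $(\ell, \mathbf{j}(\ell)) \in \mathcal{S}_m$ in the sum defining $\Phi_m(y)$ in (\ref{phim}), evaluate
\[
\mathbb{E}\left[\frac{d^\ell \delta(Y_0(\Delta) - y)}{dx^\ell} \prod_{i=1}^\ell X_{j_i+1}(\Delta)\right]
\]
separately, and then reassemble. For each such term I will invoke the Fourier representation (\ref{generalised Esp}) of the generalized expectation with $F = Y_0(\Delta)$, $G = \prod_{i=1}^\ell X_{j_i+1}(\Delta)$, and $\Phi(x) = d^\ell\delta(x-y)/dx^\ell$. The inner Fourier integral in $x$ collapses via (\ref{integrate-by-part}) to $(iv)^\ell e^{-ivy}$, so the generalized expectation becomes $\frac{1}{2\pi}\int_{-\infty}^{+\infty}(iv)^\ell e^{-ivy} E[G e^{ivY_0(\Delta)}]\,dv$.

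Next, since $Y_0(\Delta)$ is measurable with respect to $(W(\Delta), L(\Delta))$ by (\ref{Y_0}), I would apply the tower property to the usual expectation by conditioning on this pair. Writing $W(\Delta)= z_1\sqrt{\Delta}$ so that $W(\Delta)/\sqrt{\Delta}$ has density $\phi(z_1)$, using the independence of $W(\Delta)$ and $L(\Delta)$, and introducing $K_{(\ell, \mathbf{j}(\ell))}$ as in (\ref{K(z)}), the expression $E[G e^{ivY_0(\Delta)}]$ becomes a double integral over $(z_1, z_2) \in \mathbb{R} \times (0,\infty)$ of $K_{(\ell, \mathbf{j}(\ell))}(z_1, z_2)\,\exp\{iv[z_1 + (\mu(x_0)\Delta + z_2)/(\sigma(x_0)\sqrt{\Delta})]\}$ weighted by $\phi(z_1)\,p_{L(\Delta)}(z_2)$.

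I would then interchange the $v$-integral with the $(z_1, z_2)$-integrals, recognize
\[
\frac{1}{2\pi}\int_{-\infty}^{+\infty}(iv)^\ell e^{iv\xi}\,dv = \frac{d^\ell \delta(\xi)}{d\xi^\ell}
\]
with $\xi = z_1 + (\mu(x_0)\Delta + z_2)/(\sigma(x_0)\sqrt{\Delta}) - y$, and apply (\ref{integrate-by-part}) in the variable $z_1$ to obtain $(-1)^\ell \frac{d^\ell}{dz_1^\ell}[K_{(\ell,\mathbf{j}(\ell))}(z_1, z_2)\phi(z_1)]$ evaluated at the $z_1$ defined by (\ref{z1}). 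The remaining step is the algebraic identity $\phi'(z_1) = -z_1\phi(z_1)$: one differentiation gives $d[K\phi]/dz_1 = (\partial_{z_1}K - z_1 K)\phi = \mathcal{D}_1^{(1)}(K)\phi$, and induction on $\ell$ using the recursion in (\ref{itera derivative}) yields $d^\ell[K\phi]/dz_1^\ell = \mathcal{D}_1^{(\ell)}(K)\phi$. Restoring the prefactors $\frac{1}{\ell!}$ and $(\sigma(x_0)\sqrt{\Delta})^{-\ell}$ from (\ref{phim}) and summing over $(\ell, \mathbf{j}(\ell)) \in \mathcal{S}_m$ produces exactly (\ref{Omega_m(y)}).

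The main obstacle is the rigorous justification of interchanging the $v$-integral, whose integrand is the tempered distribution $(iv)^\ell$ rather than an integrable function, with the $(z_1, z_2)$-integrals against smooth densities. I would handle this inside the distributional framework of Section \ref{subsec:dirac_delta}: the computation is a composition of the generalized expectation (\ref{generalised Esp}) with the usual conditional expectation, and the latter may be pulled inside because $K_{(\ell,\mathbf{j}(\ell))}$ inherits the smoothness and controlled growth of $\prod_i X_{j_i+1}(\Delta)$ via (\ref{X_1})--(\ref{X_m}) under Assumption \ref{asmp_derivatives} and Assumption \ref{asmp_linear_growth}, while Assumption \ref{asmp_diffusion} ensures that $Y_0(\Delta)$ satisfies the uniformly nondegenerate condition needed to treat $d^\ell\delta(Y_0(\Delta)-y)/dx^\ell$ as an element of $D_\infty^\prime(\mathbb{R})$.
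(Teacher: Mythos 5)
Your proposal is correct and follows essentially the same route as the paper: the same Fourier representation of the generalized expectation, the same conditioning on $(W(\Delta),L(\Delta))$ to introduce $K_{(\ell,\mathbf{j}(\ell))}$, and the same reduction of $\partial^{\ell}(K\phi)/\partial z_1^{\ell}$ to $\mathcal{D}_1^{(\ell)}(K)\,\phi$ via $\phi'(z_1)=-z_1\phi(z_1)$. The only cosmetic difference is that you perform the $v$-integration first (producing $d^{\ell}\delta/d\xi^{\ell}$ and then pairing it with $K\phi$ in $z_1$), whereas the paper integrates by parts in $z_1$ before collapsing the $v$-integral; both orderings yield the same expression evaluated at $z_1=y-\bigl(\mu(x_0)\Delta+z_2\bigr)/\bigl(\sigma(x_0)\sqrt{\Delta}\bigr)$.
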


\begin{proof}
We see from the definition of $\Phi_{m}(y)$ in (\ref{phim}) that
\begin{align}
\Omega_{m}(y)  &  =\mathbb{E}\Phi_{m}(y)=\sum_{\left(  \ell,\left(
j_{1},j_{2},\cdots,j_{\ell}\right)  \right)  \in\mathcal{S}_{m}}%
\mathbb{E}\left[  \frac{1}{\ell!}\frac{1}{(\sigma(x_{0})\sqrt{\Delta})^{\ell}%
}\frac{d^{\ell}\delta\left(  Y_{0}(\Delta)-y\right)  }{dx^{\ell}}%
\prod\limits_{i=1}^{\ell}X_{j_{i}+1}(\Delta)\right] \nonumber\\
&  =\sum_{\left(  \ell,\left(  j_{1},j_{2},\cdots,j_{\ell}\right)  \right)
\in\mathcal{S}_{m}}\frac{1}{\ell!}\frac{1}{(\sigma(x_{0})\sqrt{\Delta})^{\ell
}}\mathbb{E}\left[  \frac{d^{\ell}\delta\left(  Y_{0}(\Delta)-y\right)
}{dx^{\ell}}\prod\limits_{i=1}^{\ell}X_{j_{i}+1}(\Delta)\right]  .
\label{int2_5}%
\end{align}
For the generalized expectation in the last line of (\ref{int2_5}), according
to (\ref{generalised Esp}), we have
\begin{align*}
&  \mathbb{E}\left[  \frac{d^{\ell}\delta\left(  Y_{0}(\Delta)-y\right)
}{dx^{\ell}}\prod\limits_{i=1}^{\ell}X_{j_{i}+1}(\Delta)\right] \\
=  &  \frac{1}{2\pi}\int_{-\infty}^{+\infty}\int_{-\infty}^{+\infty}%
e^{-ivx}\frac{d^{\ell}\delta\left(  x-y\right)  }{dx^{\ell}}E\left[
\prod\limits_{i=1}^{\ell}X_{j_{i}+1}(\Delta)\cdot e^{ivY_{0}(\Delta)}\right]
dxdv\\
=  &  \int_{-\infty}^{+\infty}\left(  \frac{1}{2\pi}\int_{-\infty}^{+\infty
}e^{-ivx}\frac{d^{\ell}\delta\left(  x-y\right)  }{dx^{\ell}}dx\right)
E\left[  \prod\limits_{i=1}^{\ell}X_{j_{i}+1}(\Delta)\cdot e^{ivY_{0}(\Delta
)}\right]  dv.
\end{align*}
By the relations (\ref{integrate-by-part}) and (\ref{int2_1}), the above
equation is further calculated as
\begin{align}
&  \mathbb{E}\left[  \frac{d^{\ell}\delta\left(  Y_{0}(\Delta)-y\right)
}{dx^{\ell}}\prod\limits_{i=1}^{\ell}X_{j_{i}+1}(\Delta)\right] \nonumber\\
=  &  \int_{-\infty}^{+\infty}\left(  \frac{1}{2\pi}\left(  -1\right)  ^{\ell
}\int_{-\infty}^{+\infty}\delta\left(  x-y\right)  \frac{d^{\ell}e^{-ivx}%
}{dx^{\ell}}dx\right)  E\left[  \prod\limits_{i=1}^{\ell}X_{j_{i}+1}%
(\Delta)\cdot e^{ivY_{0}(\Delta)}\right]  dv\nonumber\\
=  &  \int_{-\infty}^{+\infty}\frac{1}{2\pi}\left(  iv\right)  ^{\ell}%
e^{-ivy}\cdot E\left[  \prod\limits_{i=1}^{\ell}X_{j_{i}+1}(\Delta)\cdot
e^{ivY_{0}(\Delta)}\right]  dv. \label{int2_6}%
\end{align}
Then following the definition of $Y_{0}\left(  \Delta\right)  $ in (\ref{Y_0})
and using the independence between Brownian motion $W(t)$ and gamma process
$L(t)$, the expectation in the last line of (\ref{int2_6}) is calculated as%
\begin{align}
&  E\left[  \prod\limits_{i=1}^{\ell}X_{j_{i}+1}(\Delta)\cdot e^{ivY_{0}%
(\Delta)}\right] \nonumber\\
=  &  E\left[  \prod\limits_{i=1}^{\ell}X_{j_{i}+1}(\Delta)\cdot e^{iv\left(
\frac{W(\Delta)}{\sqrt{\Delta}}+\frac{\mu\left(  x_{0}\right)  \Delta+L\left(
\Delta\right)  }{\sigma(x_{0})\sqrt{\Delta}}\right)  }\right] \nonumber\\
=  &  \int_{0}^{\infty}\int_{-\infty}^{\infty}E\left.  \left[  \left.
\prod\limits_{i=1}^{\ell}X_{j_{i}+1}(\Delta)\cdot e^{iv\left(  z_{1}+\frac
{\mu\left(  x_{0}\right)  \Delta+z_{2}}{\sigma(x_{0})\sqrt{\Delta}}\right)
}\right\vert W(\Delta),L\left(  \Delta\right)  \right]  \right\vert
_{W(\Delta)=z_{1}\sqrt{\Delta},L\left(  \Delta\right)  =z_{2}}\nonumber\\
&  \times\phi(z_{1})p_{L\left(  \Delta\right)  }(z_{2})dz_{1}dz_{2}\text{.}
\label{int2_7}%
\end{align}
By the definition of $K_{\left(  \ell,\mathbf{j}\left(  \ell\right)  \right)
}\left(  z_{1},z_{2}\right)  $\ in (\ref{K(z)}), the expectation
(\ref{int2_7}) can be expressed as%
\begin{equation}
E\left[  \prod\limits_{i=1}^{\ell}X_{j_{i}+1}(\Delta)\cdot e^{ivY_{0}(\Delta
)}\right]  =\int_{0}^{\infty}\int_{-\infty}^{\infty}e^{iv\left(  z_{1}%
+\frac{\mu\left(  x_{0}\right)  \Delta+z_{2}}{\sigma(x_{0})\sqrt{\Delta}%
}\right)  }K_{\left(  \ell,\mathbf{j}\left(  \ell\right)  \right)  }\left(
z_{1},z_{2}\right)  \phi(z_{1})p_{L\left(  \Delta\right)  }(z_{2})dz_{1}%
dz_{2}. \label{int2_8}%
\end{equation}
Plugging (\ref{int2_8}) into (\ref{int2_6}), we obtain that%
\begin{align}
&  \text{\ }\mathbb{E}\left[  \frac{d^{\ell}\delta\left(  Y_{0}(\Delta
)-y\right)  }{dx^{\ell}}\prod\limits_{i=1}^{\ell}X_{j_{i}+1}(\Delta)\right]
\nonumber\\
=  &  \int_{-\infty}^{+\infty}\frac{1}{2\pi}\left(  iv\right)  ^{\ell}%
e^{-ivy}\int_{0}^{\infty}\int_{-\infty}^{\infty}e^{iv\left(  z_{1}+\frac
{\mu\left(  x_{0}\right)  \Delta+z_{2}}{\sigma(x_{0})\sqrt{\Delta}}\right)
}K_{\left(  \ell,\mathbf{j}\left(  \ell\right)  \right)  }\left(  z_{1}%
,z_{2}\right)  \phi(z_{1})p_{L\left(  \Delta\right)  }(z_{2})dz_{1}%
dz_{2}dv\nonumber\\
=  &  \int_{0}^{+\infty}\int_{-\infty}^{\infty}\frac{1}{2\pi}e^{-ivy}%
p_{L\left(  \Delta\right)  }(z_{2})\left(  \int_{-\infty}^{\infty}%
\frac{\partial^{\ell}e^{iv\left(  z_{1}+\frac{\mu\left(  x_{0}\right)
\Delta+z_{2}}{\sigma(x_{0})\sqrt{\Delta}}\right)  }}{\partial z_{1}^{\ell}%
}K_{\left(  \ell,\mathbf{j}\left(  \ell\right)  \right)  }\left(  z_{1}%
,z_{2}\right)  \phi(z_{1})dz_{1}\right)  dvdz_{2}. \label{int2_9}%
\end{align}
Using integration by parts, the last line of (\ref{int2_9}) can be further
calculated as%
\begin{align}
&  \text{\ }\int_{0}^{+\infty}\int_{-\infty}^{\infty}\frac{1}{2\pi}%
e^{-ivy}p_{L\left(  \Delta\right)  }(z_{2})\int_{-\infty}^{\infty}%
(-1)^{^{\ell}}e^{iv\left(  z_{1}+\frac{\mu\left(  x_{0}\right)  \Delta+z_{2}%
}{\sigma(x_{0})\sqrt{\Delta}}\right)  }\frac{\partial^{^{\ell}}\left(
K_{\left(  \ell,\mathbf{j}\left(  \ell\right)  \right)  }\left(  z_{1}%
,z_{2}\right)  \phi(z_{1})\right)  }{\partial z_{1}^{^{\ell}}}dz_{1}%
dvdz_{2}\nonumber\\
=  &  \int_{0}^{+\infty}\int_{-\infty}^{\infty}\left(  \int_{-\infty}^{\infty
}\frac{1}{2\pi}e^{-ivy}e^{iv\left(  z_{1}+\frac{\mu\left(  x_{0}\right)
\Delta+z_{2}}{\sigma(x_{0})\sqrt{\Delta}}\right)  }dv\right) \nonumber\\
&  \times(-1)^{^{\ell}}\frac{\partial^{^{\ell}}\left(  K_{\left(
\ell,\mathbf{j}\left(  \ell\right)  \right)  }\left(  z_{1},z_{2}\right)
\phi(z_{1})\right)  }{\partial z_{1}^{^{\ell}}}p_{L\left(  \Delta\right)
}(z_{2})dz_{1}dz_{2}. \label{int2_10}%
\end{align}
By the relation (\ref{delta_exp}), it follows that
\[
\frac{1}{2\pi}\int_{-\infty}^{+\infty}e^{-ivy}e^{iv\left(  z_{1}+\frac
{\mu\left(  x_{0}\right)  \Delta+z_{2}}{\sigma(x_{0})\sqrt{\Delta}}\right)
}dv=\frac{1}{2\pi}\int_{-\infty}^{+\infty}e^{iv\left(  z_{1}+\frac{\mu\left(
x_{0}\right)  \Delta+z_{2}}{\sigma(x_{0})\sqrt{\Delta}}-y\right)  }%
dv=\delta\left(  z_{1}+\frac{\mu\left(  x_{0}\right)  \Delta+z_{2}}%
{\sigma(x_{0})\sqrt{\Delta}}-y\right)  ,
\]
Plugging the above equation into (\ref{int2_10}), we obtain that%
\begin{align}
&  \text{\ }\mathbb{E}\left[  \frac{d^{\ell}\delta\left(  Y_{0}(\Delta
)-y\right)  }{dx^{\ell}}\prod\limits_{i=1}^{\ell}X_{j_{i}+1}(\Delta)\right]
\nonumber\\
=  &  \int_{0}^{+\infty}\int_{-\infty}^{\infty}\delta\left(  z_{1}+\frac
{\mu\left(  x_{0}\right)  \Delta+z_{2}}{\sigma(x_{0})\sqrt{\Delta}}-y\right)
(-1)^{^{\ell}}\frac{\partial^{^{\ell}}\left(  K_{\left(  \ell,\mathbf{j}%
\left(  \ell\right)  \right)  }\left(  z_{1},z_{2}\right)  \phi(z_{1})\right)
}{\partial z_{1}^{^{\ell}}}p_{L\left(  \Delta\right)  }(z_{2})dz_{1}%
dz_{2}\nonumber\\
=  &  \int_{0}^{+\infty}(-1)^{^{\ell}}\left.  \frac{\partial^{^{\ell}}\left(
K_{\left(  \ell,\mathbf{j}\left(  \ell\right)  \right)  }\left(  z_{1}%
,z_{2}\right)  \phi(z_{1})\right)  }{\partial z_{1}^{^{\ell}}}\right\vert
_{z_{1}=y-\frac{\mu\left(  x_{0}\right)  \Delta+z_{2}}{\sigma(x_{0}%
)\sqrt{\Delta}}}\cdot p_{L\left(  \Delta\right)  }(z_{2})dz_{2}.
\label{int2_11}%
\end{align}
Thus, plugging (\ref{int2_11}) into (\ref{int2_5}), we obtain that%
\begin{align}
\Omega_{m}(y)  &  =\sum_{\left(  \ell,\left(  j_{1},j_{2},\cdots,j_{\ell
}\right)  \right)  \in\mathcal{S}_{m}}\frac{(-1)^{^{\ell}}}{\ell!}\frac
{1}{(\sigma(x_{0})\sqrt{\Delta})^{\ell}}\nonumber\\
&  \text{ \ \ \ \ \ \ \ \ \ \ \ \ \ }\times\int_{0}^{+\infty}\left.
\frac{\partial^{^{\ell}}\left(  K_{\left(  \ell,\mathbf{j}\left(  \ell\right)
\right)  }\left(  z_{1},z_{2}\right)  \phi(z_{1})\right)  }{\partial
z_{1}^{^{\ell}}}\right\vert _{z_{1}=y-\frac{\mu\left(  x_{0}\right)
\Delta+z_{2}}{\sigma(x_{0})\sqrt{\Delta}}}p_{L\left(  \Delta\right)  }%
(z_{2})dz_{2}. \label{int2_12}%
\end{align}

From the definition (\ref{itera derivative}), for any bivariate differentiable
function $u\left(  z_{1},z_{2}\right)  $, we have%
\[
\frac{\partial}{\partial z_{1}}\left(  u(z_{1},z_{2})\phi(z_{1})\right)
=\left(  \frac{\partial u(z_{1},z_{2})}{\partial z_{1}}-z_{1}u(z_{1}%
,z_{2})\right)  \phi(z_{1})\equiv\mathcal{D}_{1}^{(1)}\left(  u(z_{1}%
,z_{2})\right)  \cdot\phi(z_{1}),
\]
and%
\[
\frac{\partial^{\ell}}{\partial z_{1}^{\ell}}\left(  u(z_{1},z_{2})\phi
(z_{1})\right)  =\mathcal{D}_{1}^{(\ell)}\left(  u(z_{1},z_{2})\right)
\cdot\phi(z_{1})
\]
for any $\ell\geq1$, from which we obtain that%
\begin{equation}
\frac{\partial^{\ell}\left(  K_{\left(  \ell,\mathbf{j}\left(  \ell\right)
\right)  }\left(  z_{1},z_{2}\right)  \phi(z_{1})\right)  }{\partial
z_{1}^{\ell}}=\mathcal{D}_{1}^{(\ell)}\left(  K_{\left(  \ell,\mathbf{j}%
\left(  \ell\right)  \right)  }\left(  z_{1},z_{2}\right)  \right)  \cdot
\phi(z_{1}). \label{int2_13}%
\end{equation}
Plugging (\ref{int2_13}) into (\ref{int2_12}), we obtain the formula
(\ref{Omega_m(y)}).
\end{proof}

According to (\ref{Omega_m(y)}) in Theorem \ref{theorem:omega_m}, to calculate
the high-order terms $\Omega_{m}(y)$ for $m\geq1$, it suffices to derive the
bivariate function $K_{(\ell,\mathbf{j(}\ell))}(z_{1},z_{2})$ in (\ref{K(z)}),
which will be shown in Section \ref{sec:calculate_K}.

\section{Explicit calculation of $K_{\left(  \ell,\mathbf{j}\left(
\ell\right)  \right)  }\left(  z_{1},z_{2}\right)  $}

\label{sec:calculate_K}

In this section, we explicitly derive the function $K_{\left(  \ell
,\mathbf{j}\left(  \ell\right)  \right)  }\left(  z_{1},z_{2}\right)  $ for
every fixed $\ell\geq1$ and $\mathbf{j}\left(  \ell\right)  =\left(
j_{1},j_{2},\ldots,j_{\ell}\right)  $ with $j_{i}\geq1$ in (\ref{K(z)}), from
which we can evaluate $\Omega_{m}\left(  y\right)  $ for $m\geq1$\ in
(\ref{Omega_m(y)}) and obtain the approximated transition density $p_{X\left(
\Delta\right)  }^{\left(  M\right)  }\left(  \left.  x\right\vert
x_{0};\bm{\theta}\right)  $ in (\ref{pX_M_expan}). Moreover, for illustration
purpose, the pure jump OU model, constant diffusion model and square-root
diffusion model are introduced as examples of SDE (\ref{model}) to exhibit the
first several expansion terms of $\left\{  \Omega_{m}(y),m\geq0\right\}  $ in
(\ref{pX_M_expan}).

To present our algorithm for calculating the function $K_{\left(
\ell,\mathbf{j}\left(  \ell\right)  \right)  }\left(  z_{1},z_{2}\right)  $,
we introduce the following notation. For any integer $h\geq1$ and\ arbitrary
$h$-dimensional index $\mathbf{n}(h)=(n_{1},n_{2},\ldots,n_{h})$ with
nonnegative integers $n_{1},n_{2},\ldots,n_{h}$, we define the $h$-dimensional
vector%
\begin{equation}
\mathbf{L}^{\mathbf{n}(h)}(t):=\left(  L^{n_{1}}(t),L^{n_{2}}(t),\ldots
,L^{n_{h}}(t)\right)  \label{multi_gamma process}%
\end{equation}
by using the gamma process $L\left(  \cdot\right)  $. For example,
$\mathbf{L}^{\mathbf{n}(1)}(t)=(L(t))$ when $h=1$ and $\mathbf{n}(1)=(1)$, and
$\mathbf{L}^{\mathbf{n}(2)}(t)=(1,L(t))$ when $h=2$ and $\mathbf{n}\left(
2\right)  =(0,1)$.

For any $h$-dimensional index $\mathbf{i}\left(  h\right)  =\left(
i_{1},i_{2},\ldots,i_{h}\right)  $ with $i_{1},i_{2},\ldots,i_{h}\in\left\{
0,1\right\}  $ and $h$-dimensional vector $\mathbf{L}^{\mathbf{n}(h)}(t)$ in
(\ref{multi_gamma process}), we define an iterated stochastic integral as
\begin{align}
&  \text{ \ \ \ \ \ }\mathbf{I}_{\mathbf{i}\left(  h\right)  ,\mathbf{L}%
^{\mathbf{n}(h)}}(\Delta)\nonumber\\
&  :=\int_{0}^{\Delta}\int_{0}^{s_{h}}\cdots\int_{0}^{s_{2}}L^{n_{1}}%
(s_{1})\cdots L^{n_{h-1}}(s_{h-1})L^{n_{h}}(s_{h})dW_{i_{1}}(s_{1})\cdots
dW_{i_{h-1}}(s_{h-1})dW_{i_{h}}(s_{h}), \label{multi_integral}%
\end{align}
where $W_{0}(t):=t$ and $W_{1}(t):=W(t)$. For example, we have
\begin{align*}
&  \left.  \mathbf{I}_{(0),\mathbf{L}^{(0)}}(\Delta)=\Delta\text{, }%
\mathbf{I}_{(1),\mathbf{L}^{(0)}}(\Delta)=W(\Delta)\text{, }\mathbf{I}%
_{(0),\mathbf{L}^{(1)}}(\Delta)=\int_{0}^{\Delta}L(s_{1})ds_{1}\text{,
}\mathbf{I}_{(1),\mathbf{L}^{(1)}}(\Delta)=\int_{0}^{\Delta}L\left(
s_{1}\right)  dW(s_{1})\text{,}\right. \\
&  \left.  \mathbf{I}_{(0,0),\mathbf{L}^{(0,0)}}(\Delta)=\int_{0}^{\Delta}%
\int_{0}^{s_{2}}ds_{1}ds_{2}\text{, }\mathbf{I}_{(0,1),\mathbf{L}^{(0,0)}%
}(\Delta)=\int_{0}^{\Delta}\int_{0}^{s_{2}}ds_{1}dW(s_{2})\text{,}\right.
\end{align*}
and%
\[
\mathbf{I}_{(1,0),\mathbf{L}^{(0,1)}}(\Delta)=\int_{0}^{\Delta}\int_{0}%
^{s_{2}}L(s_{2})dW(s_{1})ds_{2}\text{, }\mathbf{I}_{(1,1),\mathbf{L}^{(1,0)}%
}(\Delta)=\int_{0}^{\Delta}\int_{0}^{s_{2}}L(s_{1})dW(s_{1})dW(s_{2}).
\]
We notice that the iterated stochastic integral $\mathbf{I}_{\mathbf{i}\left(
h\right)  ,\mathbf{L}^{\mathbf{n}(h)}}(\Delta)$ defined by
(\ref{multi_integral}) involves two independent processes, i.e., the Brownian
motion $W(\cdot)$ and the gamma process $L(\cdot)$. Such independence will
simplify the calculation related to $\mathbf{I}_{\mathbf{i}\left(  h\right)
,\mathbf{L}^{\mathbf{n}(h)}}(\Delta)$ as seen below.

In order to clarify the procedures of calculating $K_{\left(  \ell
,\mathbf{j}\left(  \ell\right)  \right)  }\left(  z_{1},z_{2}\right)  $ in
(\ref{K(z)}), we briefly outline a general algorithm before the detailed
descriptions below, which can be implemented by traditional symbolic
softwares, e.g., Wolfram Mathematica.

\begin{algorithm}[H]
\label{int_algorithm}
\caption{ Framework of calculating $K_{\left(  \ell,\mathbf{j}\left(  \ell\right)
\right)  }\left(  z_1,z_2\right)  $ in (\ref{K(z)}).}
\label{alg:Framwork}
\begin{enumerate}[Step 1]
\item Convert the multiplication of the expansion terms in $K_{\left(
\ell,\mathbf{j}\left(  \ell\right)  \right)  }\left(  z_1,z_2\right)  $, i.e.,
$\prod_{i=1}^{\ell}X_{j_{i}+1}(\Delta)$, to a linear combination of
iterated It\^{o} integrals as defined in
(\ref{multi_integral});
\item Simplify the conditional expectation of the iterated It\^{o} integral
$\mathbf{I}_{\mathbf{i}\left(  h\right)  ,\mathbf{L}^{\mathbf{n}(h)}}(\Delta)$ via Brownian bridge;
\item Compute the conditional expectation of the result from Step 2 with respect to
the gamma process.
\end{enumerate}
\end{algorithm}

In the following Sections \ref{subsec:Step1}, \ref{subsec:Step2} and
\ref{subsec:Step3}, we give the detailed descriptions of Steps 1, 2 and 3 in
the above algorithm respectively. In Section \ref{subsec:examples}, we
consider three examples of SDE (\ref{model}) for illustrations.

\subsection{Conversion of the multiplication $\prod\nolimits_{i=1}^{\ell
}X_{j_{i}+1}(\Delta)$ into a linear combination of iterated It\^{o} integrals}

\label{subsec:Step1}

First, we illustrate that the multiplication of iterated It\^{o} integrals
defined in (\ref{multi_integral}) can be converted into a linear combination
of the iterated It\^{o} integrals taking the same form as in
(\ref{multi_integral}).

Given an index $\mathbf{i}\left(  h\right)  =\left(  i_{1},i_{2},\ldots
,i_{h}\right)  $, we denote by $\mathbf{i}\left(  h\right)  -$ the index
obtained from deleting the last element of index $\mathbf{i}\left(  h\right)
$, i.e.,
\[
\mathbf{i}\left(  h\right)  -:=\left(  i_{1},i_{2},\ldots,i_{h-1}\right)  .
\]
Similarly, we denote by
\[
\mathbf{L}^{\mathbf{n}(h)-}(t):=\left(  L^{n_{1}}(t),L^{n_{2}}(t),\ldots
,L^{n_{h-1}}(t)\right)
\]
the $\left(  h-1\right)  $-dimensional vector obtained from deleting the last
element of $\mathbf{L}^{\mathbf{n}(h)}(t)$ in (\ref{multi_gamma process}).
Consequently, the iterated It\^{o} Integral $\mathbf{I}_{\mathbf{i}\left(
h\right)  \mathbf{-},\mathbf{L}^{\mathbf{n}(h)-}}(\Delta)$ can be defined as
\begin{align*}
\mathbf{I}_{\mathbf{i}\left(  h\right)  \mathbf{-},\mathbf{L}^{\mathbf{n}%
(h)-}}(\Delta)  &  :=\int_{0}^{\Delta}\int_{0}^{s_{h-1}}\cdots\int_{0}^{s_{2}%
}L^{n_{1}}(s_{1})\cdots L^{n_{h-2}}(s_{h-2})\\
&  \quad\quad\times L^{n_{h-1}}(s_{h-1})dW_{i_{1}}(s_{1})\cdots dW_{i_{h-2}%
}(s_{h-2})dW_{i_{h-1}}(s_{h-1}).
\end{align*}

For two fixed positive integers $h,q$ and the gamma process $L\left(
\cdot\right)  $, we consider the $h$-dimensional vector $\mathbf{L}%
^{\mathbf{n}(h)}(t)$ and $q$-dimensional vector $\mathbf{L}^{\mathbf{m}%
(q)}(t)$,%
\[
\mathbf{L}^{\mathbf{n}(h)}(t)=\left(  L^{n_{1}}(t),L^{n_{2}}(t),\ldots
,L^{n_{h}}(t)\right)  \text{ and }\mathbf{L}^{\mathbf{m}(q)}(t)=\left(
L^{m_{1}}(t),L^{m_{2}}(t),\ldots,L^{m_{q}}(t)\right)
\]
for some indices $\mathbf{n}\left(  h\right)  \mathbf{=}\left(  n_{1}%
,n_{2},\ldots,n_{h}\right)  $ and $\mathbf{m}\left(  q\right)  \mathbf{=}%
\left(  m_{1},m_{2},\ldots,m_{q}\right)  $ with nonnegative integers $n_{1}$,
$n_{2},\ldots,n_{h},m_{1},m_{2},\ldots,m_{q}$. Then for two indices
$\mathbf{i}\left(  h\right)  \mathbf{=}\left(  i_{1},i_{2},\ldots
,i_{h}\right)  $ and $\mathbf{j}\left(  q\right)  \mathbf{=}\left(
j_{1},j_{2},\ldots,j_{q}\right)  $ with $i_{1},i_{2,}\ldots,i_{h}$, $j_{1}$,
$j_{2},\ldots,j_{q}\in\left\{  0,1\right\}  $, the product of two iterated
It\^{o} integrals $\mathbf{I}_{\mathbf{i}\left(  h\right)  ,\mathbf{L}%
^{\mathbf{n}(h)}}(\Delta)$ and $\mathbf{I}_{\mathbf{j}\left(  q\right)
,\mathbf{L}^{\mathbf{m}(q)}}(\Delta)$ satisfies the following iterative
relation%
\begin{align}
&  \mathbf{I}_{\mathbf{i}\left(  h\right)  ,\mathbf{L}^{\mathbf{n}(h)}}%
(\Delta)\mathbf{I}_{\mathbf{j}\left(  q\right)  ,\mathbf{L}^{\mathbf{m}(q)}%
}(\Delta)\nonumber\\
=  &  \left[  \int_{0}^{\Delta}\mathbf{I}_{\mathbf{i}\left(  h\right)
-,\mathbf{L}^{\mathbf{n}(h)-}}(s_{1})\cdot L^{n_{h}}(s_{1})dW_{i_{h}}%
(s_{1})\right]  \cdot\left[  \int_{0}^{\Delta}\mathbf{I}_{\mathbf{j}\left(
q\right)  -,\mathbf{L}^{\mathbf{m}(q)-}}(s_{1})\cdot L^{m_{q}}(s_{1}%
)dW_{j_{q}}(s_{1})\right] \nonumber\\
=  &  \int_{0}^{\Delta}\mathbf{I}_{\mathbf{i}\left(  h\right)  ,\mathbf{L}%
^{\mathbf{n}(h)}}(s_{1})\mathbf{I}_{\mathbf{j}\left(  q\right)  -,\mathbf{L}%
^{\mathbf{m}(q)-}}(s_{1})\cdot L^{m_{q}}(s_{1})dW_{j_{q}}(s_{1})\nonumber\\
&  +\int_{0}^{\Delta}\mathbf{I}_{\mathbf{i}\left(  h\right)  -,\mathbf{L}%
^{\mathbf{n}(h)-}}(s_{1})\mathbf{I}_{\mathbf{j}\left(  q\right)
,\mathbf{L}^{\mathbf{m}(q)}}(s_{1})\cdot L^{n_{h}}(s_{1})dW_{i_{h}}%
(s_{1})\nonumber\\
&  +\int_{0}^{\Delta}\mathbf{I}_{\mathbf{i}\left(  h\right)  -,\mathbf{L}%
^{\mathbf{n}(h)-}}(s_{1})\mathbf{I}_{\mathbf{j}\left(  q\right)
-,\mathbf{L}^{\mathbf{m}(q)-}}(s_{1})\cdot L^{n_{h}+m_{q}}(s_{1}%
)\cdot1_{\left\{  i_{h}=j_{q}=1\right\}  }ds_{1},
\label{product_multi_integral}%
\end{align}
\textbf{ }where the second equation follows from the It\^{o} product formula%
\begin{align*}
&  \int_{0}^{\Delta}f(s_{1})dW_{i_{1}}\left(  s_{1}\right)  \cdot\int%
_{0}^{\Delta}g(s_{1})dW_{j_{1}}\left(  s_{1}\right) \\
=  &  \int_{0}^{\Delta}\int_{0}^{s_{1}}f(s_{2})dW_{i_{1}}\left(  s_{2}\right)
g(s_{1})dW_{j_{1}}\left(  s_{1}\right)  +\int_{0}^{\Delta}\int_{0}^{s_{1}%
}g(s_{2})dW_{j_{1}}\left(  s_{2}\right)  f(s_{1})dW_{i_{1}}\left(
s_{1}\right) \\
&  +\int_{0}^{\Delta}f(s_{1})g(s_{1})1_{\left\{  i_{1}=j_{1}=1\right\}
}ds_{1}%
\end{align*}
and $1_{\left\{  i_{h}=j_{q}=1\right\}  }$ is the indicator function defined
as
\[
1_{\{i_{h}=j_{q}=1\}}=\left\{
\begin{array}
[c]{l}%
1,\text{ if }i_{h}=j_{q}=1,\\
0,\text{ otherwise.}%
\end{array}
\right.
\]
By iterative applications of the relation (\ref{product_multi_integral}), the
product of $I_{\mathbf{i}\left(  h\right)  ,\mathbf{L}^{\mathbf{n}(h)}}%
(\Delta)$ and $I_{\mathbf{j}\left(  q\right)  ,\mathbf{L}^{\mathbf{m}(q)}%
}(\Delta)$ can be expressed as a linear combination of the iterated It\^{o}
integrals defined by (\ref{multi_integral}).

Next, we show that the expansion terms $X_{j_{1}+1}(\Delta),X_{j_{2}+1}%
(\Delta),\ldots,X_{j_{\ell}+1}(\Delta)$ in (\ref{K(z)}) can be expressed as a
linear combination of iterated It\^{o} integrals $I_{\mathbf{i}\left(
h\right)  ,\mathbf{L}^{\mathbf{n}(h)}}(\Delta)$ defined in
(\ref{multi_integral}), with coefficients depending on $\mu\left(
x_{0}\right)  $, $\sigma(x_{0})$ and their\ higher-order derivatives evaluated
at $x_{0}$. Based on this, it follows from (\ref{product_multi_integral}) that
the multiplication $\prod\nolimits_{i=1}^{\ell}X_{j_{i}+1}(\Delta)$ can be
converted into a linear combination of $I_{\mathbf{i}\left(  h\right)
,\mathbf{L}^{\mathbf{n}(h)}}(\Delta)$ defined in (\ref{multi_integral}). To do
this, in what follows, we illustrate that $X_{m}(\Delta)$ admits the
aforementioned linear combination form\ for $m\geq1$. By the notation
(\ref{multi_integral}), $X_{1}(\Delta)$ in (\ref{X_1}) can be written as%
\begin{equation}
X_{1}(\Delta)=\mu\left(  x_{0}\right)  \mathbf{I}_{(0),\mathbf{L}^{(0)}%
}(\Delta)+\sigma\left(  x_{0}\right)  \mathbf{I}_{(1),\mathbf{L}^{(0)}}%
(\Delta)+L(\Delta), \label{X1_eg}%
\end{equation}
which admits the linear combination form. For $m\geq1$, we notice from
(\ref{X_m}) that
\begin{equation}
X_{m+1}(\Delta)=\int_{0}^{\Delta}\mu_{m}(s)ds+\int_{0}^{\Delta}\sigma
_{m}(s)dW(s), \label{Xm_eg}%
\end{equation}
with $\mu_{m}(s)$ and $\sigma_{m}(s)$ defined by (\ref{mu_m(t)}) --
(\ref{sigmam(t)}). Since both $\mu_{m}(s)$ and $\sigma_{m}(s)$ are linear
combinations of the products of the terms chosen among $\left\{
X_{1}(s),X_{2}(s),\ldots,X_{m}(s)\right\}  $ (cf. Section
\ref{subsec: expan_tran_density}), by iterative applications of
(\ref{product_multi_integral}), (\ref{X1_eg}) and (\ref{Xm_eg}), we can also
derive $X_{m+1}(\Delta)$ for $m\geq1$ as a linear combination of iterated
It\^{o} integrals $\mathbf{I}_{\mathbf{i}\left(  h\right)  ,\mathbf{L}%
^{\mathbf{n}(h)}}(\Delta)$ for $h\leq m+1$\ formed as (\ref{multi_integral}),
with the coefficients depending on $\mu\left(  x_{0}\right)  $, $\sigma
(x_{0})$ and their\ higher-order derivatives evaluated at $x_{0}$.

In summary, to calculate $K_{\left(  \ell,\mathbf{j}\left(  \ell\right)
\right)  }\left(  z_{1},z_{2}\right)  $ in (\ref{K(z)}) for every fixed
$\ell\geq1$ and $\mathbf{j}\left(  \ell\right)  =\left(  j_{1},j_{2}%
,\ldots,j_{\ell}\right)  $ with $j_{i}\geq1$, it suffices to focus on the
following type of conditional expectation
\begin{equation}
E\left.  \left(  \mathbf{I}_{\mathbf{i}\left(  h\right)  ,\mathbf{L}%
^{\mathbf{n}(h)}}(\Delta)\left\vert W\left(  \Delta\right)  ,L\left(
\Delta\right)  \right.  \right)  \right\vert _{W\left(  \Delta\right)
=z_{1}\sqrt{\Delta},L\left(  \Delta\right)  =z_{2}} \label{bi_Cond_Eps}%
\end{equation}
with $\mathbf{I}_{\mathbf{i}\left(  h\right)  ,\mathbf{L}^{\mathbf{n}(h)}%
}(\Delta)$ defined by (\ref{multi_integral}).

\subsection{Simplification of the conditional expectation (\ref{bi_Cond_Eps})
via Brownian bridge}

\label{subsec:Step2}

Starting from this part, we focus on calculating the following conditional
expectation%
\begin{align}
&  E\left.  \left(  \mathbf{I}_{\mathbf{i}\left(  h\right)  ,\mathbf{L}%
^{\mathbf{n}(h)}}(\Delta)\left\vert W\left(  \Delta\right)  ,L\left(
\Delta\right)  \right.  \right)  \right\vert _{W\left(  \Delta\right)
=z_{1}\sqrt{\Delta},L\left(  \Delta\right)  =z_{2}}\nonumber\\
=  &  E\left(  \int_{0}^{\Delta}\int_{0}^{s_{h}}\cdots\int_{0}^{s_{2}}%
L^{n_{1}}(s_{1})\cdots L^{n_{h-1}}(s_{h-1})L^{n_{h}}(s_{h})\right. \nonumber\\
&  \left.  \left.  \quad\times dW_{i_{1}}(s_{1})\cdots dW_{i_{h-1}}%
(s_{h-1})dW_{i_{h}}(s_{h})\left\vert W\left(  \Delta\right)  ,L\left(
\Delta\right)  \right.  \right)  \right\vert _{W\left(  \Delta\right)
=z_{1}\sqrt{\Delta},L\left(  \Delta\right)  =z_{2}}, \label{bi_multi_Cond_Eps}%
\end{align}
where the iterated It\^{o} integral $\mathbf{I}_{\mathbf{i}\left(  h\right)
,\mathbf{L}^{\mathbf{n}(h)}}(\Delta)$ is defined by (\ref{multi_integral})
with $i_{1},i_{2},\ldots,i_{h}\in\left\{  0,1\right\}  $, $W_{0}(t)=t$ and
$W_{1}(t)=W(t)$.

To simplify (\ref{bi_multi_Cond_Eps}), we utilize the following representation
of Brownian bridge, i.e.,%
\begin{equation}
\left(  W(s)\left\vert W(\Delta)=z_{1}\sqrt{\Delta}\right.  \right)
\overset{d}{=}B^{z_{1}}(s):=B(s)-\frac{s}{\Delta}B\left(  \Delta\right)
+\frac{s}{\sqrt{\Delta}}z_{1} \label{BM bridge}%
\end{equation}
for $0\leq s\leq\Delta$, where the symbol \textquotedblleft$\overset{d}{=}%
$\textquotedblright\ means distributional identity and $B\left(  \cdot\right)
$ is a 1-dimensional standard Brownian motion. Then by the independence
between $W\left(  \cdot\right)  $ and $L(\cdot)$, (\ref{bi_multi_Cond_Eps})
can be equivalently expressed as%
\begin{align}
&  E\left.  \left(  \mathbf{I}_{\mathbf{i}\left(  h\right)  ,\mathbf{L}%
^{\mathbf{n}(h)}}(\Delta)\left\vert W\left(  \Delta\right)  ,L\left(
\Delta\right)  \right.  \right)  \right\vert _{W\left(  \Delta\right)
=z_{1}\sqrt{\Delta},L\left(  \Delta\right)  =z_{2}}\nonumber\\
=  &  E\left(  \int_{0}^{\Delta}\int_{0}^{s_{h}}\cdots\int_{0}^{s_{2}}%
L^{n_{1}}(s_{1})\cdots L^{n_{h-1}}(s_{h-1})\right. \nonumber\\
&  \left.  \left.  \left.  \times L^{n_{h}}(s_{h})dB_{i_{1}}^{z_{1}}%
(s_{1})\cdots dB_{i_{h-1}}^{z_{1}}(s_{h-1})dB_{i_{h}}^{z_{1}}(s_{h}%
)\right\vert L\left(  \Delta\right)  \right)  \right\vert _{L\left(
\Delta\right)  =z_{2}}, \label{int3_1}%
\end{align}
where $B{}_{1}^{z_{1}}(s):=B^{z_{1}}(s)$ and $B{}_{0}^{z_{1}}(s):=s$.
Therefore, we only need focus on the conditional expectation%
\begin{equation}
E\left(  \left.  \int_{0}^{\Delta}\cdots\int_{0}^{s_{2}}L^{n_{1}}(s_{1})\cdots
L^{n_{h}}(s_{h})dB_{i_{1}}^{z_{1}}(s_{1})\cdots dB_{i_{h}}^{z_{1}}%
(s_{h})\right\vert L\left(  \Delta\right)  \right)  , \label{int3_2}%
\end{equation}
from which (\ref{int3_1}) can be obtained by letting $L\left(  \Delta\right)
=z_{2}$. For the sake of simplicity, we denote by $E_{L}\left(  \cdot\right)
:=E\left(  \left.  \cdot\right\vert L\left(  \Delta\right)  \right)  $ the
conditional expectation given $L\left(  \Delta\right)  $ hereafter. By
plugging (\ref{BM bridge}) into (\ref{int3_2}), we obtain that%
\begin{align}
&  \text{ \ \ \ \ }E\left(  \left.  \int_{0}^{\Delta}\int_{0}^{s_{h}}%
\cdots\int_{0}^{s_{2}}L^{n_{1}}(s_{1})\cdots L^{n_{h-1}}(s_{h-1})L^{n_{h}%
}(s_{h})dB_{i_{1}}^{z_{1}}(s_{1})\cdots dB_{i_{h-1}}^{z_{1}}(s_{h-1}%
)dB_{i_{h}}^{z_{1}}(s_{h})\right\vert L\left(  \Delta\right)  \right)
\nonumber\\
&  =E_{L}\left(  \int_{0}^{\Delta}\int_{0}^{s_{h}}\cdots\int_{0}^{s_{2}%
}L^{n_{1}}(s_{1})dB_{i_{1}}^{z}(s_{1})\cdots L^{n_{h-1}}(s_{h-1})dB_{i_{h-1}%
}^{z_{1}}(s_{h-1})L^{n_{h}}(s_{h})dB_{i_{h}}^{z_{1}}(s_{h})\right) \nonumber\\
&  =E_{L}\left(  \int_{0}^{\Delta}\int_{0}^{s_{h}}\cdots\int_{0}^{s_{2}%
}L^{n_{1}}(s_{1})\left\{  1_{\left\{  i_{1}=1\right\}  }\left(  dB(s_{1}%
)-\frac{B\left(  \Delta\right)  }{\Delta}ds_{1}+\frac{z_{1}}{\sqrt{\Delta}%
}ds_{1}\right)  +1_{\left\{  i_{1}=0\right\}  }ds_{1}\right\}  \right.
\nonumber\\
&  \text{ \ \ \ \ \ \ }\left.  \text{ }\times\cdots\times L^{n_{h-1}}%
(s_{h-1})\left\{  1_{\left\{  i_{h-1}=1\right\}  }\left(  dB(s_{h-1}%
)-\frac{B\left(  \Delta\right)  }{\Delta}ds_{h-1}+\frac{z_{1}}{\sqrt{\Delta}%
}ds_{h-1}\right)  +1_{\left\{  i_{h-1}=0\right\}  }ds_{h-1}\right\}  \right.
\nonumber\\
&  \text{ \ \ \ \ \ \ }\left.  \text{ }\times L^{n_{h}}(s_{h})\left\{
1_{\left\{  i_{h}=1\right\}  }\left(  dB(s_{h})-\frac{B\left(  \Delta\right)
}{\Delta}ds_{h}+\frac{z_{1}}{\sqrt{\Delta}}ds_{h}\right)  +1_{\left\{
i_{h}=0\right\}  }ds_{h}\right\}  \right)  . \label{int3_3}%
\end{align}

In order to derive the explicit expression of (\ref{int3_3}), for any
$h$-dimensional index $\mathbf{i}\left(  h\right)  =\left(  i_{1},i_{2}%
,\ldots,i_{h}\right)  $ with $i_{1},i_{2},\ldots,i_{h}\in\left\{  0,1\right\}
$ and $h$-dimensional vector $\mathbf{L}^{\mathbf{n}(h)}(t)$ in
(\ref{multi_gamma process}), we define an iterated stochastic integral%
\begin{align}
&  \text{ \ \ \ \ \ }\mathbf{J}_{\mathbf{i}\left(  h\right)  ,\mathbf{L}%
^{\mathbf{n}(h)}}(\Delta)\nonumber\\
&  :=\int_{0}^{\Delta}\int_{0}^{s_{h}}\cdots\int_{0}^{s_{2}}L^{n_{1}}%
(s_{1})\cdots L^{n_{h-1}}(s_{h-1})L^{n_{h}}(s_{h})dB_{i_{1}}(s_{1})\cdots
B_{i_{h-1}}(s_{h-1})dB_{i_{h}}(s_{h}), \label{J_multi_integral}%
\end{align}
where $B_{0}(t):=t$ and $B_{1}(t):=B(t)$, with the Brownian motion $B(t)$
introduced in (\ref{BM bridge}). Then we fully expand the product of the
differential forms in the last equation of (\ref{int3_3}) and find that it
suffices to calculate the following two kinds of conditional expectations%
\begin{equation}
\left(  \frac{z_{1}}{\sqrt{\Delta}}\right)  ^{k_{1}}E_{L}\left(
\mathbf{J}_{\mathbf{i}\left(  h\right)  ,\mathbf{L}^{\mathbf{n}(h)}}%
(\Delta)\right)  \label{int3_5}%
\end{equation}
and%
\begin{equation}
\left(  \frac{z_{1}}{\sqrt{\Delta}}\right)  ^{k_{2}}E_{L}\left(
B(\Delta)^{k_{3}}\cdot\mathbf{J}_{\mathbf{i}\left(  h\right)  ,\mathbf{L}%
^{\mathbf{n}(h)}}(\Delta)\right)  , \label{int3_6}%
\end{equation}
where the integers $k_{1},k_{2},k_{3}\in\left\{  0,1,\ldots,h\right\}  $
satisfying the condition $k_{2}+k_{3}\leq h$. To precede, we notice the
following relation%
\begin{align}
B\left(  \Delta\right)  \mathbf{J}_{\mathbf{i}\left(  h\right)  ,\mathbf{L}%
^{\mathbf{n}(h)}}(\Delta)  &  =\left.  \sum_{m=1}^{h+1}\mathbf{J}%
_{\mathbf{(}i_{1},\ldots,i_{m-1},1,i_{m},\ldots,i_{h}\mathbf{)},(L^{n_{1}%
}(\Delta),\ldots,L^{n_{m-1}}(\Delta),1,L^{n_{m}}(\Delta),\ldots,L^{n_{h}%
}(\Delta))}(\Delta)\right. \nonumber\\
\text{ \ \ \ }  &  \left.  \text{ \ \ }+\sum_{m=1}^{h}1_{\left\{
i_{m}=1\right\}  }\mathbf{J}_{\mathbf{(}i_{1},\ldots,i_{m-1},0,i_{m+1}%
,\ldots,i_{h}\mathbf{)},\mathbf{L}^{\mathbf{n}(h)}}(\Delta)\right.  ,
\label{product_BM_J}%
\end{align}
which can be verified similarly as in Proposition 5.2.3 of
\cite{pe1992numerical}. By iterative applications of (\ref{product_BM_J}), the
conditional expectation (\ref{int3_6}) can be converted into a linear
combination of the conditional expectations uniformly represented as in
(\ref{int3_5}). Then from the martingale property of stochastic integrals and
the independence between the gamma process and Brownian motion, the
conditional expectation (\ref{int3_5}) equals to zero if there exists some
integer $m\in\left\{  1,2,\ldots,h\right\}  $ such that $i_{m}=1$. Therefore,
the conditional expectation (\ref{int3_2}) can be finally derived as a linear
combination of the terms uniformly represented as
\begin{equation}
z_{1}^{m}\cdot\int_{0}^{\Delta}\int_{0}^{s_{h}}\cdots\int_{0}^{s_{2}}E\left[
\left.  L^{n_{1}}(s_{1})\cdots L^{n_{h-1}}(s_{h-1})L^{n_{h}}(s_{h})\right\vert
L\left(  \Delta\right)  \right]  ds_{1}\cdots ds_{h-1}ds_{h},
\label{cond_Esp_gamma_pre}%
\end{equation}
for some $h\geq1$, $0\leq m\leq h$, $0<s_{1}<s_{2}<\cdots<s_{h}<\Delta$, and
nonnegative integers $n_{1},n_{2},\ldots,n_{h}$. Thus, to calculate
(\ref{int3_2}), it suffices to derive the conditional expectation in
(\ref{cond_Esp_gamma_pre}).

\subsection{Calculating the conditional expectation $E\left[  \left.
L^{n_{1}}(s_{1})L^{n_{2}}(s_{2})\cdots L^{n_{h}}(s_{h})\right\vert L\left(
\Delta\right)  \right]  $}

\label{subsec:Step3}

In this part, we focus on the\ following conditional expectation
\begin{equation}
E\left[  \left.  L^{n_{1}}(s_{1})L^{n_{2}}(s_{2})\cdots L^{n_{h}}%
(s_{h})\right\vert L\left(  \Delta\right)  \right]  \label{cond_Esp_gamma}%
\end{equation}
appeared in (\ref{cond_Esp_gamma_pre}), for some $0<s_{1}<s_{2}<\cdots
<s_{h}<\Delta$ and nonnegative integers $n_{1},n_{2},\ldots,n_{h}$. The
expectation (\ref{cond_Esp_gamma}) involves the product of values of the gamma
process $L\left(  \cdot\right)  $ evaluated at different intermediate times
conditional on the value of $L\left(  \cdot\right)  $ at the terminal time
$\Delta$, and can be represented as a function of $L\left(  \Delta\right)  $
by the following theorem.

\begin{theorem}
For $h\geq1$, $0<s_{1}<s_{2}<\cdots<s_{h}<\Delta$ and nonnegative integers
$n_{1},n_{2},\ldots,n_{h}$, we have
\begin{align}
&  E\left[  \left.  L^{n_{1}}(s_{1})L^{n_{2}}(s_{2})\cdots L^{n_{h}}%
(s_{h})\right\vert L(\Delta)\right] \nonumber\\
=  &  \frac{\prod\limits_{r=0}^{m_{1}-1}\left(  as_{1}+r\right)
\prod\limits_{r=m_{1}}^{m_{2}-1}\left(  as_{2}+r\right)  \cdots\prod
\limits_{r=m_{h-1}}^{m_{h}-1}\left(  as_{h}+r\right)  }{\prod\limits_{r=0}%
^{m_{h}-1}\left(  a\Delta+r\right)  }L^{m_{h}}(\Delta),
\label{key multiplication}%
\end{align}
where $m_{k}=n_{1}+n_{2}+\cdots+n_{k}$ for $k=1,2,\ldots,h$, and the parameter
$a$ is defined through the density function of $L(\cdot)$ in
(\ref{density_gamma process}). \label{prop:CE_gamma bridge}
\end{theorem}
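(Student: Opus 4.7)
The plan is to proceed by induction on $h$, combining the standard gamma-bridge distributional property cited from \cite{ribeiro2004valuing} with the Lévy/Markov property of the gamma process. For the base case $h=1$, the key fact is that $L(s_1)/L(\Delta)$ has a $\mathrm{Beta}(as_1, a(\Delta-s_1))$ distribution and is independent of $L(\Delta)$. The Beta-moment formula then yields
\[
E[L^{n_1}(s_1)\mid L(\Delta)] = \frac{\Gamma(as_1+n_1)\Gamma(a\Delta)}{\Gamma(as_1)\Gamma(a\Delta+n_1)}\, L^{n_1}(\Delta) = \frac{\prod_{r=0}^{n_1-1}(as_1+r)}{\prod_{r=0}^{n_1-1}(a\Delta+r)}\, L^{n_1}(\Delta),
\]
which matches the claimed formula under the convention $m_0=0$.

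For the inductive step, I would assume the formula holds for sequences of length $h-1$ and, via the tower property, insert a conditioning on $L(s_h)$:
\[
E\!\left[\prod_{k=1}^{h} L^{n_k}(s_k)\,\Big|\, L(\Delta)\right] = E\!\left[L^{n_h}(s_h)\cdot E\!\left[\prod_{k=1}^{h-1} L^{n_k}(s_k)\,\Big|\, L(s_h),L(\Delta)\right]\,\Big|\, L(\Delta)\right].
\]
Since $L$ has independent increments, $L(\Delta)-L(s_h)$ is independent of $(L(s_1),\ldots,L(s_h))$, so conditioning on the pair $(L(s_h),L(\Delta))$ reduces to conditioning on $L(s_h)$ alone for any function of $L(s_1),\ldots,L(s_{h-1})$. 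Applying the induction hypothesis to the shortened sequence $s_1<\cdots<s_{h-1}$ with terminal time $s_h$ gives
\[
E\!\left[\prod_{k=1}^{h-1} L^{n_k}(s_k)\,\Big|\, L(s_h)\right] = \frac{\prod_{k=1}^{h-1}\prod_{r=m_{k-1}}^{m_k-1}(as_k+r)}{\prod_{r=0}^{m_{h-1}-1}(as_h+r)}\, L^{m_{h-1}}(s_h).
\]

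Substituting back, the remaining outer expectation collapses to $E[L^{m_{h-1}+n_h}(s_h)\mid L(\Delta)] = E[L^{m_h}(s_h)\mid L(\Delta)]$, which is handled by the base case with exponent $m_h$. A telescoping identity
\[
\frac{\prod_{r=0}^{m_h-1}(as_h+r)}{\prod_{r=0}^{m_{h-1}-1}(as_h+r)} = \prod_{r=m_{h-1}}^{m_h-1}(as_h+r)
\]
then glues the two factors into the claimed numerator $\prod_{k=1}^{h}\prod_{r=m_{k-1}}^{m_k-1}(as_k+r)$, while the denominator $\prod_{r=0}^{m_h-1}(a\Delta+r)$ comes directly from the base case, completing the induction.

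The main obstacle is the Markov-type reduction $E[\,\cdot\mid L(s_h),L(\Delta)\,]=E[\,\cdot\mid L(s_h)\,]$ for functions of $(L(s_1),\ldots,L(s_{h-1}))$. This rests on the identity $\sigma(L(s_h),L(\Delta))=\sigma(L(s_h),L(\Delta)-L(s_h))$ together with independence of the increment $L(\Delta)-L(s_h)$ from $\sigma(L(s_j):j\le h)$, both of which follow from the Lévy property of the gamma process. Once this conditional independence is secured, the rest is a routine telescoping of the product formulas and the Beta moment computation.
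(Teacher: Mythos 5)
Your proposal is correct and uses essentially the same ingredients as the paper's proof: the gamma-bridge/Beta representation for moments of $L(s)/L(\Delta)$, a Markov-type conditional independence reduction (the paper invokes the harness property of L\'{e}vy processes where you use independent increments), and a telescoping of the resulting products. The only organizational difference is that you run a formal induction peeling off the largest intermediate time $s_h$, whereas the paper iterates forward by peeling off the smallest time $s_1$ first; the computations are otherwise identical.
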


\begin{proof}
We first notice a fact that for the gamma process $L(\cdot)$ with density
function given by (\ref{density_gamma process}) and $t_{0}<t_{1}<t_{2}$,
conditional on $L\left(  t_{0}\right)  =v_{0}$ and $L\left(  t_{2}\right)
=v_{2}$, we have (cf. \cite{ribeiro2004valuing})%
\begin{equation}
L\left(  t_{1}\right)  \overset{d}{=}v_{0}+p\left(  v_{2}-v_{0}\right)  ,
\label{gamma bridge}%
\end{equation}
where $p$ is a random variable following Beta distribution as $p\sim
\mathcal{B}\left(  a\left(  t_{1}-t_{0}\right)  ,a\left(  t_{2}-t_{1}\right)
\right)  $.

Now we return to the proof of this lemma. For $0<s_{1}<s_{2}<\cdots
<s_{h}<\Delta$, by the property of iterated expectation and $L(0)=0$, we can
get%
\begin{align}
&  E\left[  \left.  L^{n_{1}}(s_{1})L^{n_{2}}(s_{2})\cdots L^{n_{h}}%
(s_{h})\right\vert L(\Delta)\right] \nonumber\\
=  &  E\left\{  \left.  E\left[  \left.  L^{n_{1}}(s_{1})L^{n_{2}}%
(s_{2})\cdots L^{n_{h}}(s_{h})\right\vert L(s_{2}),\ldots,L(s_{h}%
),L(\Delta)\right]  \right\vert L(\Delta)\right\} \nonumber\\
=  &  E\left\{  \left.  L^{n_{2}}(s_{2})\cdots L^{n_{h}}(s_{h})E\left[
\left.  L^{n_{1}}(s_{1})\right\vert L(s_{2}),\ldots,L(s_{h}),L(\Delta)\right]
\right\vert L(\Delta)\right\} \nonumber\\
=  &  E\left\{  \left.  L^{n_{2}}(s_{2})\cdots L^{n_{h}}(s_{h})E\left[
\left.  L^{n_{1}}(s_{1})\right\vert L(s_{2})\right]  \right\vert
L(\Delta)\right\}  , \label{int3_7}%
\end{align}
where the last equality follows from the harness property of general L\'{e}vy
process (see, for example, in Section 11.2.7 of \cite{Jeanblanc09}). To
calculate $E\left[  \left.  L^{n_{1}}(s_{1})\right\vert L(s_{2})\right]  $ in
the last line of (\ref{int3_7}), we see from (\ref{gamma bridge}) that given
$L(s_{2})$,%
\[
L(s_{1})\overset{d}{=}p_{1}L(s_{2}),\text{ where }p_{1}\sim\mathcal{B}\left(
as_{1},a\left(  s_{2}-s_{1}\right)  \right)  ,
\]
from which $E\left[  \left.  L^{n_{1}}(s_{1})\right\vert L(s_{2})\right]
=L^{n_{1}}(s_{2})E\left[  p_{1}^{n_{1}}\right]  $ and (\ref{int3_7}) can be
further calculated as%
\begin{equation}
E\left[  \left.  L^{n_{1}}(s_{1})L^{n_{2}}(s_{2})\cdots L^{n_{h}}%
(s_{h})\right\vert L(\Delta)\right]  =E\left[  p_{1}^{n_{1}}\right]  E\left[
\left.  L^{n_{1}+n_{2}}(s_{2})L^{n_{3}}(s_{3})\cdots L^{n_{h}}(s_{h}%
)\right\vert L(\Delta)\right]  . \label{int3_8}%
\end{equation}

Similarly, in the right-hand side of (\ref{int3_8}), we notice that
\begin{align*}
&  \text{ \ \ \ }E\left[  \left.  L^{n_{1}+n_{2}}(s_{2})L^{n_{3}}(s_{3})\cdots
L^{n_{h}}(s_{h})\right\vert L(\Delta)\right]  \\
&  =E\left\{  \left.  L^{n_{3}}(s_{3})\cdots L^{n_{h}}(s_{h})E\left[  \left.
L^{n_{1}+n_{2}}(s_{2})\right\vert L(s_{3})\right]  \right\vert L(\Delta
)\right\}  \\
&  =E\left[  p_{2}^{n_{1}+n_{2}}\right]  E\left[  \left.  L^{n_{1}+n_{2}%
+n_{3}}(s_{3})L^{n_{4}}(s_{4})\cdots L^{n_{h}}(s_{h})\right\vert
L(\Delta)\right]  ,
\end{align*}
where $p_{2}\sim\mathcal{B}\left(  as_{2},a\left(  s_{3}-s_{2}\right)
\right)  $, so that%
\[
E\left[  \left.  L^{n_{1}}(s_{1})L^{n_{2}}(s_{2})\cdots L^{n_{h}}%
(s_{h})\right\vert L(\Delta)\right]  =E\left[  p_{1}^{n_{1}}\right]  E\left[
p_{2}^{n_{1}+n_{2}}\right]  E\left[  \left.  L^{n_{1}+n_{2}+n_{3}}%
(s_{3})L^{n_{4}}(s_{4})\cdots L^{n_{h}}(s_{h})\right\vert L(\Delta)\right]  .
\]
Continuing the above procedure in a similar manner, for $h\geq2$, we deduce
that%
\begin{align}
&  E\left[  \left.  L^{n_{1}}(s_{1})L^{n_{2}}(s_{2})\cdots L^{n_{h}}%
(s_{h})\right\vert L(\Delta)\right]  \nonumber\\
= &  E\left[  p_{1}^{n_{1}}\right]  E\left[  p_{2}^{n_{1}+n_{2}}\right]
\cdots E\left[  p_{h}^{n_{1}+n_{2}+\cdots+n_{h}}\right]  L^{n_{1}+n_{2}%
+\cdots+n_{h}}(\Delta)\nonumber\\
\triangleq &  E\left[  p_{1}^{m_{1}}\right]  E\left[  p_{2}^{m_{2}}\right]
\cdots E\left[  p_{h}^{m_{h}}\right]  L^{m_{h}}(\Delta),\label{int3_9}%
\end{align}
where
\begin{equation}
p_{k}\sim\mathcal{B}\left(  as_{k},a\left(  s_{k+1}-s_{k}\right)  \right)
,\text{ for }1\leq k\leq h-1\label{p_k}%
\end{equation}
and
\begin{equation}
p_{h}\sim\mathcal{B}\left(  as_{h},a\left(  \Delta-s_{h}\right)  \right)
,\label{p_h}%
\end{equation}
with $m_{k}=n_{1}+n_{2}+\cdots+n_{k}$ for $1\leq k\leq h$.

To evaluate the expectation $E\left[  p_{k}^{m_{k}}\right]  $ for
$k=1,2,\ldots,h$ in (\ref{int3_9}),\ we notice that for a random variable
$X\sim\mathcal{B}\left(  \alpha,\beta\right)  $,%
\[
E\left[  X^{k}\right]  =\frac{\alpha^{\left(  k\right)  }}{\left(
\alpha+\beta\right)  ^{\left(  k\right)  }}:=\prod\nolimits_{r=0}^{k-1}%
\frac{\alpha+r}{\alpha+\beta+r}%
\]
holds for any positive integer $k$. Then it follows from (\ref{p_k}) and
(\ref{p_h}) that
\[
E\left[  p_{k}^{m_{k}}\right]  =\prod\nolimits_{r=0}^{m_{k}-1}\frac{as_{k}%
+r}{as_{k+1}+r}=\frac{\prod\nolimits_{r=0}^{m_{k}-1}\left(  as_{k}+r\right)
}{\prod\nolimits_{r=0}^{m_{k}-1}\left(  as_{k+1}+r\right)  },\text{ for }1\leq
k\leq h-1,
\]
and%
\[
E\left[  p_{h}^{m_{h}}\right]  =\prod\nolimits_{r=0}^{m_{h}-1}\frac{as_{h}%
+r}{a\Delta+r}=\frac{\prod\nolimits_{r=0}^{m_{h}-1}\left(  as_{h}+r\right)
}{\prod\nolimits_{r=0}^{m_{h}-1}\left(  a\Delta+r\right)  }.
\]
Plugging the above two equations into (\ref{int3_9}), we obtain that%
\begin{align*}
&  \text{ \ \ \ }E\left[  \left.  L^{n_{1}}(s_{1})L^{n_{2}}\left(
s_{2}\right)  \cdots L^{n_{h}}\left(  s_{h}\right)  \right\vert L\left(
\Delta\right)  \right] \\
&  =\frac{\prod\nolimits_{r=0}^{m_{1}-1}\left(  as_{1}+r\right)  }%
{\prod\nolimits_{r=0}^{m_{1}-1}\left(  as_{2}+r\right)  }\frac{\prod
\nolimits_{r=0}^{m_{2}-1}\left(  as_{2}+r\right)  }{\prod\nolimits_{r=0}%
^{m_{2}-1}\left(  as_{3}+r\right)  }\cdots\frac{\prod\nolimits_{r=0}%
^{m_{h-1}-1}\left(  as_{h-1}+r\right)  }{\prod\nolimits_{r=0}^{m_{h-1}%
-1}\left(  as_{h}+r\right)  }\frac{\prod\nolimits_{r=0}^{m_{h}-1}\left(
as_{h}+r\right)  }{\prod\nolimits_{r=0}^{m_{h}-1}\left(  a\Delta+r\right)
}L^{m_{h}}\left(  \Delta\right) \\
&  =\frac{\prod\nolimits_{r=0}^{m_{1}-1}\left(  as_{1}+r\right)
\prod\nolimits_{r=m_{1}}^{m_{2}-1}\left(  as_{2}+r\right)  \cdots
\prod\nolimits_{r=m_{h-1}}^{m_{h}-1}\left(  as_{h}+r\right)  }{\prod
\nolimits_{r=0}^{m_{h}-1}\left(  a\Delta+r\right)  }L^{m_{h}}\left(
\Delta\right)  ,
\end{align*}
which concludes the proof.
\end{proof}

Based on the previous calculations, the function $K_{\left(  \ell
,\mathbf{j}\left(  \ell\right)  \right)  }\left(  z_{1},z_{2}\right)  $
defined by (\ref{K(z)}) can be expressed as a linear combination of
$z_{1}^{n_{1}}z_{2}^{n_{2}}$ for $n_{1},n_{2}\geq0$ with the coefficients
depending on functions $\mu\left(  x_{0}\right)  $, $\sigma(x_{0})$ and
their\ higher-order derivatives evaluated at $x_{0}$. According to the
definition of the partial differential operators $\mathcal{D}_{1}^{(\ell
)}\left(  \cdot\right)  $ for $\ell\geq1$ in (\ref{itera derivative}), the
expression of $\mathcal{D}_{1}^{(\ell)}\left(  K_{\left(  \ell,\mathbf{j}%
\left(  \ell\right)  \right)  }\left(  z_{1},z_{2}\right)  \right)  $ can also
be established as a linear combination of $z_{1}^{n_{1}}z_{2}^{n_{2}}$ for
$n_{1},n_{2}\geq0$, denoted by%
\[
\mathcal{D}_{1}^{(\ell)}\left(  K_{\left(  \ell,\mathbf{j}\left(  \ell\right)
\right)  }\left(  z_{1},z_{2}\right)  \right)  =%
{\textstyle\sum\limits_{n_{1},n_{2}\geq0}}
P_{\left(  \ell,\mathbf{j}\left(  \ell\right)  \right)  }^{\mu,\sigma}\left(
n_{1},n_{2}\right)  z_{1}^{n_{1}}z_{2}^{n_{2}}%
\]
for some coefficient functions $P_{\left(  \ell,\mathbf{j}\left(  \ell\right)
\right)  }^{\mu,\sigma}\left(  n_{1},n_{2}\right)  $ and $n_{1},n_{2}\geq0$.
Then it follows from (\ref{Omega_m(y)}) -- (\ref{z1}) that $\Omega_{m}\left(
y\right)  $ can be finally represented as
\[
\Omega_{m}(y)=\sum_{\left(  \ell,\left(  j_{1},j_{2},\cdots,j_{\ell}\right)
\right)  \in\mathcal{S}_{m}}\frac{(-1)^{^{\ell}}}{\ell!}\frac{1}{(\sigma
(x_{0})\sqrt{\Delta})^{\ell}}%
{\textstyle\sum\limits_{n_{1},n_{2}\geq0}}
P_{\left(  \ell,\mathbf{j}\left(  \ell\right)  \right)  }^{\mu,\sigma}\left(
n_{1},n_{2}\right)  \int_{0}^{+\infty}z_{1}^{n_{1}}z_{2}^{n_{2}}\phi
(z_{1})p_{L\left(  \Delta\right)  }(z_{2})dz_{2},
\]
where $z_{1}=y-\frac{\mu\left(  x_{0}\right)  \Delta+z_{2}}{\sigma(x_{0}%
)\sqrt{\Delta}}$ and $y=\frac{x-x_{0}}{\sigma(x_{0})\sqrt{\Delta}}$, the index
set $\mathcal{S}_{m}$ is defined in (\ref{Sm}), $\mu\left(  \cdot\right)  $
and $\sigma\left(  \cdot\right)  $ are defined through the SDE (\ref{model}),
$\phi(\cdot)$ is the standard normal density function\ and $p_{L\left(
\Delta\right)  }(\cdot)$ is the density function of $L(\Delta)$ in
(\ref{density_gamma process}).

\begin{remark}
For the special case of the SDE (\ref{model}) with $\sigma
(X(t);\bm{\theta})\equiv0$ in Remark \ref{remark:PJ}, to calculate $\Omega
_{m}(y)$ for $m=0,1,2,...$ in (\ref{pX_M_expan_PJ}), we skip the Step 2 in the
general algorithm stated prior to Section \ref{subsec:Step1}. The remaining
procedures are performed in a similar manner.
\end{remark}

\subsection{Examples}

\label{subsec:examples}

In this section, we consider the pure jump OU model, the constant diffusion
model and the square-root diffusion model as three examples of SDE
(\ref{model}) to give the concrete expressions of the first several expansion
terms of $\left\{  \Omega_{m}(y),m\geq0\right\}  $ in (\ref{pX_M_expan}).

The first model below is the pure jump OU process which is a special case of
the non-Gaussian OU processes proposed by \cite{barndorff2001non}. The pure
jump OU process is widely used in finance and statistical analysis, e.g., to
specify the stochastic volatility driving the dynamics of asset prices. We
refer to \cite{barndorff2001non}, \cite{schoutens03} and \cite{ContTankov04}
for more details of the non-Gaussian OU processes.

Model 1 (Pure jump OU model). By taking $\bm{\theta}=\left\{  \kappa
,\theta\right\}  $ and\ letting $\mu\left(  x;\bm{\theta}\right)
=\kappa\left(  \theta-x\right)  $ and $\sigma\left(  x;\bm{\theta}\right)
\equiv0$ in (\ref{model}), we obtain the pure jump model
\begin{equation}
dX(t)=\kappa\left(  \theta-X(t)\right)  dt+dL(t),\text{ }X\left(  0\right)
=x_{0}. \label{PJ_OU}%
\end{equation}
The function $\Omega_{m}\left(  y\right)  $ for $m=0,1,2,3$ in
(\ref{pX_M_expan_PJ}) can be calculated as
\begin{align*}
\Omega_{0}\left(  y\right)   &  =\frac{b^{a\Delta}\left(  y-\eta\Delta\right)
^{a\Delta-1}e^{-b\left(  y-\eta\Delta\right)  }}{\Gamma\left(  a\Delta\right)
},\\
\Omega_{1}\left(  y\right)   &  =-\frac{b^{a\Delta}\left(  y-\eta
\Delta\right)  ^{a\Delta-2}e^{-b\left(  y-\eta\Delta\right)  }}{2\Gamma\left(
a\Delta\right)  }\kappa\Delta\left[  \left(  by-a\Delta\right)  y+\eta
\Delta\left(  1-by\right)  \right]  ,\\
\Omega_{2}\left(  y\right)   &  =\frac{b^{a\Delta}\left(  y-\eta\Delta\right)
^{a\Delta-3}e^{-b\left(  y-\eta\Delta\right)  }}{24\left(  1+a\Delta\right)
\Gamma\left(  a\Delta\right)  }\kappa^{2}\Delta^{2}\left[  b^{2}\left(
y-\eta\Delta\right)  ^{2}\left(  y^{2}\left(  4+3a\Delta\right)  -2y\eta
\Delta+\eta^{2}\Delta^{2}\right)  \right. \\
&  \left.  \text{ \ \ }-2b\left(  1+a\Delta\right)  \left(  y-\eta
\Delta\right)  \left(  y^{2}\left(  2+3a\Delta\right)  -6\eta\Delta
y+\kappa^{2}\eta^{2}\right)  \right. \\
&  \left.  \text{ \ \ }+\left(  1+a\Delta\right)  \left(  3a^{2}y^{2}%
\Delta^{2}+2\left(  2-5a\Delta\right)  \eta\Delta y+\left(  2+a\Delta\right)
\eta^{2}\Delta^{2}\right)  \right]  ,
\end{align*}
and%
\begin{align*}
\Omega_{3}\left(  y\right)   &  =-\frac{b^{a\Delta}\left(  y-\eta
\Delta\right)  ^{a\Delta-4}e^{-b\left(  y-\eta\Delta\right)  }}{48\left(
1+a\Delta\right)  \Gamma\left(  a\Delta\right)  }\kappa^{3}\Delta^{3}%
\times\left\{  \left[  b^{3}\left(  2+a\Delta\right)  y^{3}-b^{2}\left(
6+a\Delta\left(  8+3a\Delta\right)  \right)  y^{2}\right.  \right. \\
&  \left.  \text{ \ }+b\left(  1+a\Delta\right)  \left(  2+a\Delta\left(
4+3a\Delta\right)  \right)  y-a^{3}\Delta^{3}\left(  1+a\Delta\right)
\right]  y^{3}\\
&  \left.  \text{ \ }-\eta\Delta\left[  b^{3}\left(  8+3a\Delta\right)
y^{3}-b^{2}\left(  26+3a\Delta\left(  9+2a\Delta\right)  \right)
y^{2}+b\left(  1+a\Delta\right)  \left(  6+a\Delta\left(  20+3a\Delta\right)
\right)  y\right.  \right. \\
&  \left.  \left.  \text{ \ }-\left(  1+a\Delta\right)  \left(  2+a\Delta
\left(  -6+7a\Delta\right)  \right)  \right]  y^{2}+\eta^{2}\Delta^{2}\left[
b^{3}\left(  13+3a\Delta\right)  y^{3}-b^{2}\left(  40+3a\Delta\left(
11+a\Delta\right)  \right)  y^{2}\right.  \right. \\
&  \left.  \left.  \text{ \ }+b\left(  1+a\Delta\right)  \left(
14+19a\Delta\right)  y-\left(  1+a\Delta\right)  \left(  -4+a\Delta\left(
6+a\Delta\right)  \right)  \right]  y\right. \\
&  \left.  \text{ \ }-\eta^{3}\Delta^{3}\left[  b^{3}\left(  11+a\Delta
\right)  y^{3}-b^{2}\left(  27+17a\Delta\right)  y^{2}+3b\left(
1+a\Delta\right)  \left(  4+a\Delta\right)  y-a\Delta\left(  1+a\Delta\right)
\right]  \right. \\
&  \left.  \text{ \ }+b\eta^{4}\Delta^{4}\left[  by\left(  -8-3a\Delta
+5by\right)  +2\left(  1+a\Delta\right)  \right]  +b^{2}\eta^{5}\Delta
^{5}\left(  1-by\right)  \right\}  ,
\end{align*}
where $\eta:=\kappa\left(  \theta-x_{0}\right)  $.

The following two models generalize the pure jump OU process (\ref{PJ_OU}) in
Model 1 with extra innovation driven by the Brownian motion, specified as the
constant diffusion and square-root diffusion respectively. We refer to
\cite{kunita2019stochastic} for more advanced descriptions of the
jump-diffusion SDEs driven by general L\'{e}vy processes.

Model 2 (Constant diffusion model). By taking $\bm{\theta}=\left\{
\kappa,\theta,\sigma\right\}  $ and\ letting $\mu\left(  x;\bm{\theta}\right)
=\kappa\left(  \theta-x\right)  $ and $\sigma\left(  x;\bm{\theta}\right)
\equiv\sigma>0$ in (\ref{model}), we obtain the constant diffusion model%

\begin{equation}
dX(t)=\kappa\left(  \theta-X(t)\right)  dt+\sigma dW(t)+dL(t),\text{ }X\left(
0\right)  =x_{0}. \label{cons_diffu_OU}%
\end{equation}
The function $\Omega_{m}(y)$ for $m=0,1,2,3$ in (\ref{pX_M_expan}) can be
calculated as%
\begin{align*}
&  \left.  \Omega_{0}(y)=S_{0}(y),\right. \\
&  \left.  \Omega_{1}(y)=\frac{\kappa\Delta^{1/2}}{2\sigma}\left\{
yS_{1}(y)+\left[  \eta\Delta y+\sigma\Delta^{1/2}\left(  1-y^{2}\right)
\right]  S_{0}(y)\right\}  ,\right. \\
&  \left.  \Omega_{2}(y)=\frac{\kappa^{2}}{24\sigma^{4}\left(  1+a\Delta
\right)  }\right. \\
&  \text{ \ \ \ \ \ \ \ \ \ \ }\times\left\{  S_{4}(y)+2\left(  \eta
\Delta-\sigma\Delta^{1/2}y\right)  S_{3}(y)+\left[  \eta^{2}\Delta-2\eta
\sigma\Delta^{1/2}y+\sigma^{2}\left(  a\Delta+\left(  4+3a\Delta\right)
y^{2}\right)  \right]  \Delta S_{2}(y)\right. \\
&  \left.  \text{ \ \ \ \ \ \ \ \ \ }+2\sigma^{2}\Delta^{3/2}\left(
1+a\Delta\right)  \left[  \eta\left(  1+3y^{2}\right)  \Delta^{1/2}%
+3\sigma\left(  1-y^{2}\right)  y\right]  S_{1}(y)\right. \\
&  \left.  \text{ \ \ \ \ \ \ \ \ \ \ \ \ }\sigma^{2}\Delta^{2}\left(
1+a\Delta\right)  \left[  \eta^{2}\left(  1+3y^{2}\right)  \Delta+6\eta
\sigma\Delta^{1/2}\left(  1-y^{2}\right)  y+\sigma^{2}\left(  1-10y^{2}%
+3y^{4}\right)  \right]  S_{0}(y)\right\}  ,
\end{align*}
and%
\begin{align*}
&  \Omega_{3}(y)=\frac{\kappa^{3}\Delta^{1/2}}{336\sigma^{5}\left(
1+a\Delta\right)  }\times\left\{  7yS_{5}(y)+\left[  21\eta\Delta
y-\sigma\Delta^{1/2}\left(  -4+3a\Delta+21y^{2}\right)  \right]
S_{4}(y)\right. \\
&  \left.  \text{ \ \ \ \ \ \ }+\left[  21\eta^{2}\Delta y+\eta\sigma
\Delta^{1/2}\left(  5-9a\Delta-42y^{2}\right)  +\sigma^{2}\left(
-19+16a\Delta+7\left(  4+a\Delta\right)  y^{2}\right)  y\right]  \Delta
S_{3}(y)\right. \\
&  \left.  \text{ \ \ \ \ \ \ }+\left[  7\eta^{3}\Delta^{3/2}y-\eta^{2}%
\sigma\Delta\left(  2+9a\Delta+21y^{2}\right)  +\eta\sigma^{2}\Delta
^{1/2}y\left(  4+39a\Delta+21\left(  2+a\Delta\right)  y^{2}\right)  \right.
\right. \\
&  \left.  \left.  \text{ \ \ \ \ \ \ }+\sigma^{3}\left(  9+16a\Delta
-7(4+3a\Delta)y^{4}+\left(  33+5a\Delta\right)  y^{2}\right)  \right]
\Delta^{3/2}S_{2}(y)\right. \\
&  \left.  \text{ \ \ \ \ \ \ }+\sigma\Delta^{2}\left(  1+a\Delta\right)
\left[  -3\eta^{3}\Delta^{3/2}+3\eta^{2}\sigma\Delta\left(  10+7y^{2}\right)
y+\eta\sigma^{2}\Delta^{1/2}\left(  23+19y^{2}-42y^{4}\right)  \right.
\right. \\
&  \left.  \left.  \text{ \ \ \ \ \ \ }+\sigma^{3}\left(  -16-67y^{2}%
+21y^{4}\right)  y\right]  S_{1}(y)+7\sigma^{2}\Delta^{5/2}\left(
1+a\Delta\right)  \right. \\
&  \text{ \ \ \ \ \ \ \ }\times\left[  \eta^{3}\Delta^{3/2}\left(
1+y^{2}\right)  y+\eta^{2}\sigma\Delta\left(  1+2y^{2}-3y^{4}\right)  \right.
\\
&  \left.  \left.  \text{ \ \ \ \ \ \ }+\eta\sigma^{2}\Delta^{1/2}\left(
-1-10y^{2}+3y^{4}\right)  y-\sigma^{3}\left(  1+5y^{2}-7y^{4}+y^{6}\right)
\right]  S_{0}(y)\right\}  ,
\end{align*}
where $\eta:=\kappa\left(  \theta-x_{0}\right)  $ and for $m=0,1,2,\ldots$, \
\begin{align*}
&  S_{m}(y):=\frac{1}{2\sqrt{2\pi}}\frac{b^{a\Delta}}{\Gamma\left(
a\Delta\right)  }A^{-1-r_{m}/2}\exp\left[  -\frac{1}{2}\left(  y-\kappa\left(
\theta-x_{0}\right)  \Delta^{1/2}/\sigma\right)  ^{2}\right] \\
&  \text{ \ \ \ \ \ \ \ \ \ \ \ }\times\left[  B\Gamma\left(  1+\frac{r_{m}%
}{2}\right)  {}_{1}F_{1}\left(  1+\frac{r_{m}}{2},\frac{3}{2};\frac{B^{2}}%
{4A}\right)  +\sqrt{A}\Gamma\left(  \frac{1+r_{m}}{2}\right)  {}_{1}%
F_{1}\left(  \frac{1+r_{m}}{2},\frac{1}{2};\frac{B^{2}}{4A}\right)  \right]
\end{align*}
with $r_{m}:=m+a\Delta-1$, $A:=1/\left(  2\sigma^{2}\Delta\right)  $,
$B:=\frac{y}{\sigma\Delta^{1/2}}-\frac{\kappa\left(  \theta-x_{0}\right)
}{\sigma^{2}}-b$, and
\[
_{1}F_{1}\left(  a,b;z\right)  :=\sum\nolimits_{k=0}^{\infty}\left(
(a)_{k}/(b)_{k}\right)  z^{k}/k!
\]
as the Kummer confluent hypergeometric function.

Model 3 (Square-root diffusion model). By taking $\bm{\theta}=\left\{
\kappa,\theta,\sigma\right\}  $ and\ letting $\mu\left(  x;\bm{\theta}\right)
=\kappa\left(  \theta-x\right)  $ and $\sigma\left(  x;\bm{\theta}\right)
\equiv\sigma\sqrt{x}$ with $\sigma>0$\ in (\ref{model}), we obtain the
square-root diffusion model
\begin{equation}
dX(t)=\kappa\left(  \theta-X(t)\right)  dt+\sigma\sqrt{X(t)}dW(t)+dL(t),\text{
}X\left(  0\right)  =x_{0}. \label{SQRT_diffu_OU}%
\end{equation}
The function $\Omega_{m}(y)$ for $m=0,1,2$ in (\ref{pX_M_expan}) can be
calculated as%
\begin{align*}
\Omega_{0}(y)  &  =S_{0}(y),\\
\Omega_{1}(y)  &  =\frac{1}{4\sigma x_{0}^{2}\sqrt{\Delta}}\left\{
yS_{2}(y)+\left[  2\kappa\theta\Delta y+2\sigma\left(  x_{0}-y^{2}\right)
\sqrt{\Delta}\right]  S_{1}(y)\right. \\
&  \left.  \text{ \ \ }\left[  \kappa\eta\left(  \theta+x_{0}\right)
\Delta^{2}y+2\kappa\theta\sigma\left(  x_{0}-y^{2}\right)  \Delta^{3/2}%
+\sigma^{2}\left(  -3x_{0}+y^{2}\right)  \Delta y\right]  S_{0}(y)\right\}  ,
\end{align*}
and%
\begin{align*}
&  \Omega_{2}(y)=\frac{1}{480\sigma^{4}x_{0}^{4}\Delta^{2}\left(
1+a\Delta\right)  }\left\{  5S_{6}(y)+20\left(  \kappa\theta\Delta-\sigma
\sqrt{\Delta}y\right)  S_{5}(y)\right. \\
&  \text{ \ \ \ \ \ \ \ \ \ }+\left[  10\kappa^{2}\left(  3\theta^{2}%
-x_{0}^{2}\right)  \Delta^{2}-60\kappa\theta\sigma\Delta^{3/2}y+\left(
15\left(  3+a\Delta\right)  y^{2}+\left(  -12+17a\Delta\right)  x_{0}\right)
\sigma^{2}\Delta\right]  S_{4}(y)\\
&  \text{ \ \ \ \ \ \ \ \ \ }+20\kappa^{3}\theta\left(  \theta^{2}-x_{0}%
^{2}\right)  \Delta^{3}-20\kappa^{2}\sigma\left(  3\theta^{2}-x_{0}%
^{2}\right)  \Delta^{5/2}y\\
&  \text{ \ \ \ \ \ \ \ \ \ }-2\kappa\sigma^{2}\left(  \left(  7+6a\Delta
\right)  x_{0}^{2}+\theta\left(  x_{0}\left(  1-28a\Delta\right)  -30\left(
2+a\Delta\right)  y^{2}\right)  \right)  \Delta^{2}\\
&  \left.  \text{ \ \ \ \ \ \ \ \ }+2\sigma^{3}\left(  x_{0}\left(
38+9a\Delta\right)  -10\left(  4+3a\Delta\right)  y^{2}\right)  \Delta
^{3/2}y\right]  S_{3}(y)\\
&  \text{ \ \ \ \ \ \ \ \ \ }+\left[  5\kappa^{4}\left(  \theta^{2}-x_{0}%
^{2}\right)  ^{2}\Delta^{2}-20\kappa^{3}\sigma\theta\left(  \theta^{2}%
-x_{0}^{2}\right)  \Delta^{3/2}y\right. \\
&  \left.  \text{ \ \ \ \ \ \ \ \ }+\left(  \eta^{2}\left(  30y^{2}\left(
4+3a\Delta\right)  +x_{0}\left(  37+66a\Delta\right)  \right)  +12\kappa\eta
x_{0}\left(  5y^{2}\left(  4+3a\Delta\right)  +x_{0}\left(  4+9a\Delta\right)
\right)  \right.  \right. \\
&  \text{ \ \ \ \ \ \ \ \ \ }+\left.  20\kappa^{2}x_{0}^{2}\left(
x_{0}a\Delta+\left(  4+3a\Delta\right)  y^{2}\right)  \right)  \sigma
^{2}\Delta+2\sigma^{3}\eta\left(  -10y^{2}\left(  10+9a\Delta\right)
+x_{0}\left(  77+48a\Delta\right)  \right)  \sqrt{\Delta}y\\
&  \text{ \ \ \ \ \ \ \ \ \ }+4\kappa\sigma^{3}x_{0}\left(  -5y^{2}\left(
10+9a\Delta\right)  +x_{0}\left(  48+33a\Delta\right)  \right)  \sqrt{\Delta
}y\\
&  \text{ \ \ \ \ \ \ \ \ \ }-\sigma^{4}\left(  -5\left(  19+18a\Delta\right)
y^{4}+\left(  281+252a\Delta\right)  x_{0}y^{2}\right. \\
&  \left.  \left.  \text{ \ \ \ \ \ \ \ \ }+\left(  9+23a\Delta\right)
x_{0}^{2}\right)  \right]  \Delta^{2}S_{2}(y)+\sigma^{2}\left(  1+a\Delta
\right)  \Delta^{5/2}\left[  -4\kappa^{3}\left(  \theta^{2}-x_{0}^{2}\right)
\left(  3x_{0}^{2}-8\theta x_{0}-15\theta y^{2}\right)  \Delta^{3/2}\right. \\
&  \text{ \ \ \ \ \ \ \ \ \ }+6\sigma\left(  \eta^{2}\left(  23x_{0}%
-30y^{2}\right)  +4\kappa\eta x_{0}\left(  13x_{0}-15y^{2}\right)
+20\kappa^{2}x_{0}^{2}\left(  x_{0}-y^{2}\right)  \right)  \Delta y\\
&  \text{ \ \ \ \ \ \ \ \ \ }+\sigma^{2}\left(  \eta\left(  7x_{0}%
^{2}-552x_{0}y^{2}+180y^{4}\right)  +36\kappa x_{0}\left(  x_{0}^{2}%
-16x_{0}y^{2}+5y^{4}\right)  \right)  \sqrt{\Delta}\\
&  \left.  \text{ \ \ \ \ \ \ \ \ }+\sigma^{3}\left(  -241x_{0}^{2}%
+382x_{0}y^{2}-60y^{4}\right)  y\right]  S_{1}(y)\\
&  \left.  \text{ \ \ \ \ \ \ \ \ }+5\sigma^{2}\left(  1+a\Delta\right)
\Delta^{3}\kappa^{2}\eta^{2}\left(  x_{0}+3y^{2}\right)  \left(  \theta
+x_{0}\right)  ^{2}\Delta^{2}+12\kappa^{2}\theta\sigma\eta\left(  x_{0}%
-y^{2}\right)  \left(  \theta+x_{0}\right)  \Delta^{3/2}y\right. \\
&  \left.  \text{ \ \ \ \ \ \ \ \ }+2\sigma^{2}\Delta\left(  x_{0}^{2}%
-10x_{0}y^{2}+3y^{4}\right)  \left(  3\eta^{2}+6\kappa\eta x_{0}+2\kappa
^{2}x_{0}^{2}\right)  -4\kappa\theta\sigma^{3}\left(  15x_{0}^{2}-20x_{0}%
y^{2}+3y^{4}\right)  \sqrt{\Delta}y\right. \\
&  \left.  \left.  \text{ \ \ \ \ \ \ \ \ }+3\sigma^{4}\left(  -3x_{0}%
^{3}+21x_{0}^{2}y^{2}-11x_{0}y^{4}+y^{6}\right)  \right]  S_{0}(y)\right\}  ,
\end{align*}
where $\eta:=\kappa\left(  \theta-x_{0}\right)  $ and for $m=0,1,2,\ldots$,%
\begin{align*}
&  S_{m}(y):=\frac{1}{2\sqrt{2\pi}}\frac{b^{a\Delta}}{\Gamma\left(
a\Delta\right)  }A^{-1-r_{m}/2}\exp\left[  -\frac{1}{2}\left(  y-\frac
{\kappa\left(  \theta-x_{0}\right)  \Delta^{1/2}}{\sigma\sqrt{x_{0}}}\right)
^{2}\right] \\
&  \text{ \ \ \ \ \ \ \ \ \ \ \ }\times\left[  B\Gamma\left(  1+\frac{r_{m}%
}{2}\right)  {}_{1}F_{1}\left(  1+\frac{r_{m}}{2},\frac{3}{2};\frac{B^{2}}%
{4A}\right)  +\sqrt{A}\Gamma\left(  \frac{1+r_{m}}{2}\right)  {}_{1}%
F_{1}\left(  \frac{1+r_{m}}{2},\frac{1}{2};\frac{B^{2}}{4A}\right)  \right]
\end{align*}
with $r_{m}:=m+a\Delta-1$, $A:=1/\left(  2\sigma^{2}x_{0}\Delta\right)  $, and
$B:=\frac{y}{\sigma\sqrt{x_{0}}\Delta^{1/2}}-\frac{\kappa\left(  \theta
-x_{0}\right)  }{\sigma^{2}x_{0}}-b$.

\section{Numerical performance}

\label{sec:numerical} In this section, we demonstrate the performance of the
approximations for transition densities via the previous introduced pure jump
OU model, constant diffusion model and square-root diffusion model in Section
\ref{subsec:examples} as examples. To test the accuracy of the asymptotic
expansion methodology, we calculate the true transition density by inverse
Fourier transform of its known characteristic function as the benchmark for
each of the above three models. Here, we use the numerical inverse Fourier
transform method proposed by \cite{Abate_Whitt}, which has been proved to be
efficient and accurate. By comparing the approximated transition densities
using our asymptotic expansion method with the true densities obtained by
inverse Fourier transform, we show that the approximation errors decrease
quickly as the approximation order $M$ in (\ref{pX_M_expan}) increases.

For each example, the true transition density of $X(\Delta)$ can be obtained
via its characteristic function $\phi\left(  \Delta;\omega\right)
:=Ee^{i\omega X(\Delta)}$ by%
\begin{align}
p_{X\left(  \Delta\right)  }\left(  x|x_{0};\theta\right)   &  =\frac{1}{2\pi
}\int_{-\infty}^{+\infty}e^{-ix\omega}\phi\left(  \Delta;\omega\right)
d\omega\nonumber\\
&  =\frac{1}{\pi}\int_{0}^{+\infty}\left[  \cos\left(  x\omega\right)
\operatorname{Re}\left(  \phi\right)  \left(  \Delta;\omega\right)
+\sin\left(  x\omega\right)  \operatorname{Im}\left(  \phi\right)  \left(
\Delta;\omega\right)  \right]  d\omega, \label{true density}%
\end{align}
which can be efficiently approximated via the following Euler summation as%
\[
E\left(  m,n,x\right)  =\sum\limits_{k=1}^{m}\left(
\begin{array}
[c]{c}%
m\\
k
\end{array}
\right)  2^{-m}s_{n+k}(x),
\]
where the truncated series is defined by%
\[
s_{n}(x):=\frac{h}{2\pi}+\frac{h}{\pi}\sum\limits_{k=1}^{n}\left[
\operatorname{Re}\left(  \phi\right)  \left(  \Delta;kh\right)  \cos\left(
khx\right)  +\operatorname{Im}\left(  \phi\right)  \left(  \Delta;kh\right)
\sin\left(  khx\right)  \right]  .
\]
We refer to \cite{Abate_Whitt} for more technical details.

By using (\ref{characteristic exponent}), we can derive the explicit
expressions of the characteristic functions for the above three models as
follows. For the pure jump OU model, we have%
\[
\phi\left(  t;\omega\right)  =\exp\left\{  i\omega\left[  e^{-\kappa t}%
x_{0}+\theta\left(  1-e^{-\kappa t}\right)  \right]  -\frac{a}{\kappa}\left[
\text{Li}_{2}\left(  \frac{i\omega e^{-\kappa t}}{b}\right)  -\text{Li}%
_{2}\left(  \frac{i\omega}{b}\right)  \right]  \right\}  ,
\]
where Li$_{s}(z):=\sum\nolimits_{k=1}^{+\infty}z^{k}/k^{s}$ is the
polylogarithm function. For the constant diffusion model, we have%
\[
\phi\left(  t;\omega\right)  =\exp\left\{  i\omega\left[  e^{-\kappa t}%
x_{0}+\theta\left(  1-e^{-\kappa t}\right)  \right]  -\frac{\omega^{2}%
\sigma^{2}\left(  1-e^{-2\kappa t}\right)  }{4\kappa}-\frac{a}{\kappa}\left[
\text{Li}_{2}\left(  \frac{i\omega e^{-\kappa t}}{b}\right)  -\text{Li}%
_{2}\left(  \frac{i\omega}{b}\right)  \right]  \right\}  .
\]
For the square-root diffusion model, we have%
\[
\phi\left(  t;\omega\right)  =\mathbb{E}\left[  e^{i\omega X\left(  t\right)
}|X(0)=x_{0}\right]  =e^{\alpha\left(  t\right)  +\beta\left(  t\right)
x_{0}},
\]
where $\beta\left(  t\right)  =\frac{2i\omega\kappa}{2\kappa e^{\kappa
t}+i\omega\sigma^{2}\left(  1-e^{\kappa t}\right)  }$ and%
\begin{align*}
\alpha\left(  t\right)   &  =\frac{1}{\sigma^{2}}\left\{  2\kappa^{2}\theta
t+a\sigma^{2}t\log\left(  b\right)  -\left(  2\kappa\theta+a\sigma
^{2}t\right)  \log\left[  1-e^{\kappa t}\left(  1-\frac{2\kappa}{i\omega
\sigma^{2}}\right)  \right]  +2\kappa\theta\log\left(  \frac{2\kappa}%
{i\omega\sigma^{2}}\right)  \right. \\
&  \left.  \text{ \ \ }+a\sigma^{2}t\left[  \log\left(  1-\frac{be^{\kappa
t}\left(  2\kappa-i\omega\sigma^{2}\right)  }{i\omega\left(  2\kappa
-b\sigma^{2}\right)  }\right)  -\log\left(  b-\frac{2i\omega\kappa}%
{i\omega\sigma^{2}\left(  1-e^{\kappa t}\right)  +2\kappa e^{\kappa t}%
}\right)  \right]  \right\} \\
&  \text{ \ \ \ }+\frac{a}{\kappa}\left[  \text{Li}_{2}\left(  1-\frac
{2\kappa}{i\omega\sigma^{2}}\right)  -\text{Li}_{2}\left(  e^{\kappa t}\left(
1-\frac{2\kappa}{i\omega\sigma^{2}}\right)  \right)  \right. \\
&  \left.  \text{ \ \ }-\text{Li}_{2}\left(  \frac{b\left(  2\kappa
-i\omega\sigma^{2}\right)  }{i\omega\left(  2\kappa-b\sigma^{2}\right)
}\right)  +\text{Li}_{2}\left(  \frac{be^{\kappa t}\left(  2\kappa
-i\omega\sigma^{2}\right)  }{i\omega\left(  2\kappa-b\sigma^{2}\right)
}\right)  \right]  .
\end{align*}

For the numerical comparison, according to \cite{barndorff2001non},
\cite{li2016estimating} and \cite{james2013exact},\ we set the parameters of
the above three examples as follows. For the pure jump OU model in
(\ref{PJ_OU}), we set $\kappa=0.6$, $\theta=0.02$, $a=100$, and $b=10$. For
the constant diffusion model in (\ref{cons_diffu_OU}), we set $\kappa=0.6$,
$\theta=0.02$, $\sigma=0.3$, $a=100$, and $b=10$. For the square-root
diffusion model in (\ref{SQRT_diffu_OU}), we set $\kappa=0.6$, $\theta=0.02$,
$\sigma=0.3$, $a=100$, and $b=10$. In each model, we set the initial value
$x_{0}=0.3$.

For each model, given the true transition density $p_{X\left(  \Delta\right)
}\left(  x|x_{0};\theta\right)  $ in (\ref{true density}) and the approximated
density $p_{X\left(  \Delta\right)  }^{\left(  M\right)  }\left(  \left.
x\right\vert x_{0};\theta\right)  $ derived by (\ref{pX_M_expan}) or
(\ref{pX_M_expan_PJ}), we denote by%
\begin{equation}
e_{M}\left(  \left.  \Delta,x\right\vert x_{0};\theta\right)  =p_{X\left(
\Delta\right)  }\left(  x|x_{0};\theta\right)  -p_{X\left(  \Delta\right)
}^{\left(  M\right)  }\left(  \left.  x\right\vert x_{0};\theta\right)
\label{approximated error}%
\end{equation}
the $M$-th order approximation error and define the maximum relative error as
\[
\max_{x\in\mathcal{D}}\left\vert \frac{e_{M}\left(  \left.  \Delta
,x\right\vert x_{0};\theta\right)  }{p_{X\left(  \Delta\right)  }\left(
x|x_{0};\theta\right)  }\right\vert
\]
over a region $\mathcal{D}$.

We consider monthly, weekly, and daily monitoring frequencies ($\Delta
=1/12,1/52,1/252$) and plot the maximum relative errors of order $M=0,1,2,3$
for each model in Figure \ref{fig:Max_relative_abso_err}. It is easy to
observe that the maximum relative errors decrease quickly as the order of
expansion increases. For example, when we choose $\Delta=1/252$ and the order
of $M=2$, the maximum relative error of each model can attain the level as
$10^{-5}$. Besides, when the monitoring frequency rises, i.e. the time
interval $\Delta$ becomes smaller, the maximum relative error will decrease
correspondingly for each model.%

\begin{figure}[H]%
\begin{center}%
\begin{tabular}
[c]{ccc}%
{\parbox[b]{2.1003in}{\begin{center}
\includegraphics[
height=1.7996in,
width=2.1003in
]%
{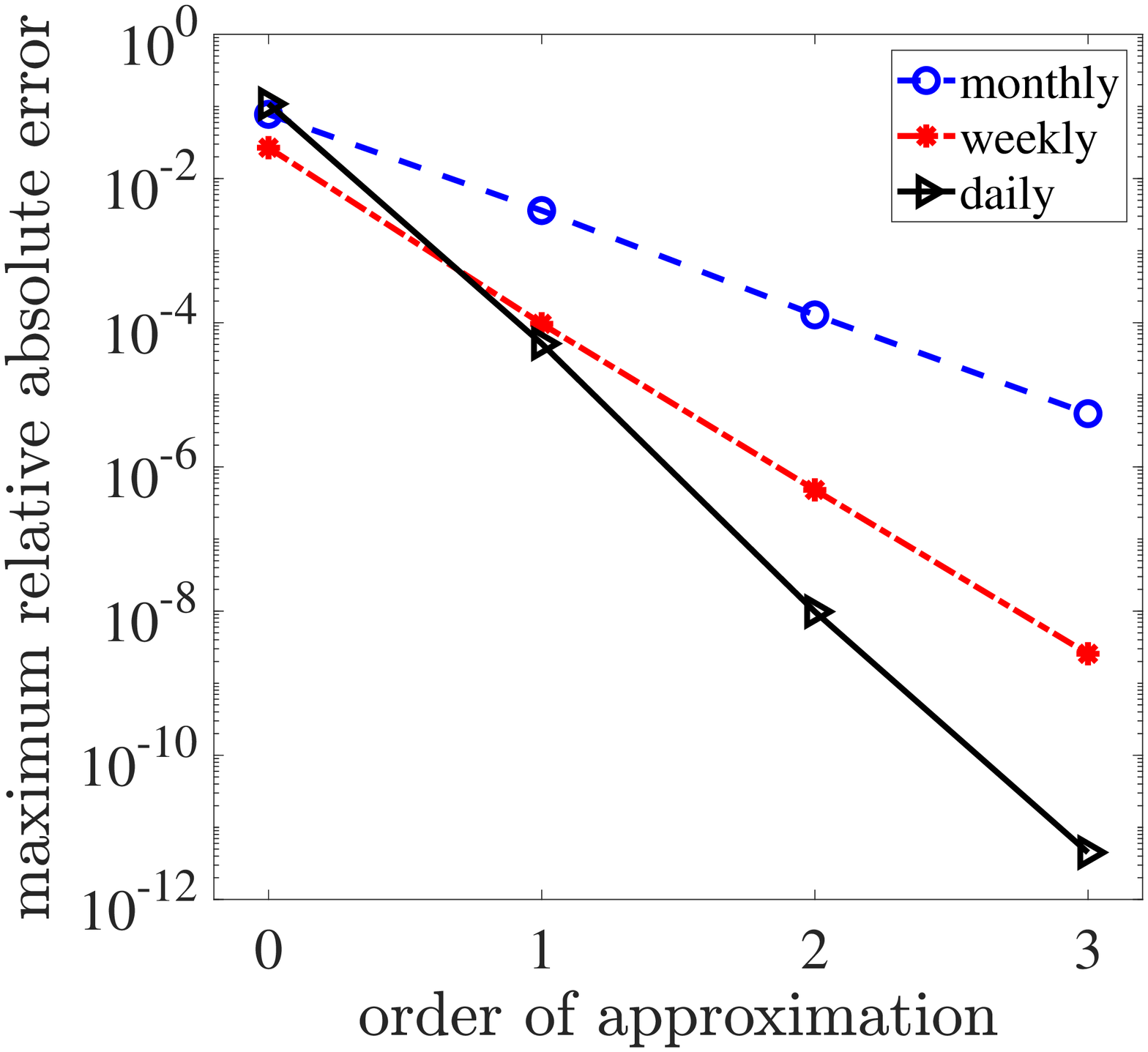}%
\\
(a) Pure jump OU model.
\end{center}}}
&
{\parbox[b]{2.1003in}{\begin{center}
\includegraphics[
height=1.7996in,
width=2.1003in
]%
{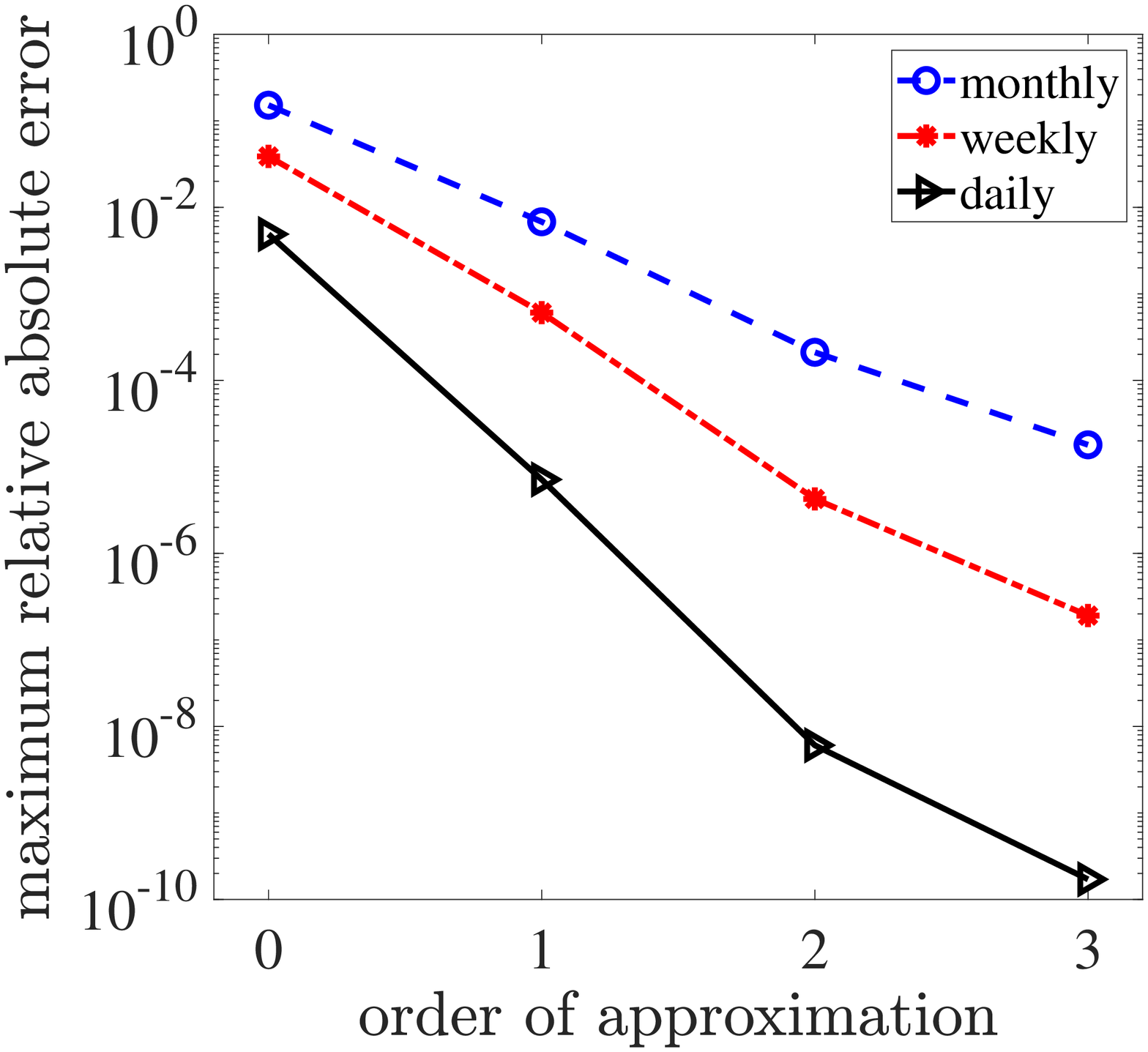}%
\\
(b) Constant diffusion model.
\end{center}}}
&
{\parbox[b]{2.1003in}{\begin{center}
\includegraphics[
height=1.7996in,
width=2.1003in
]%
{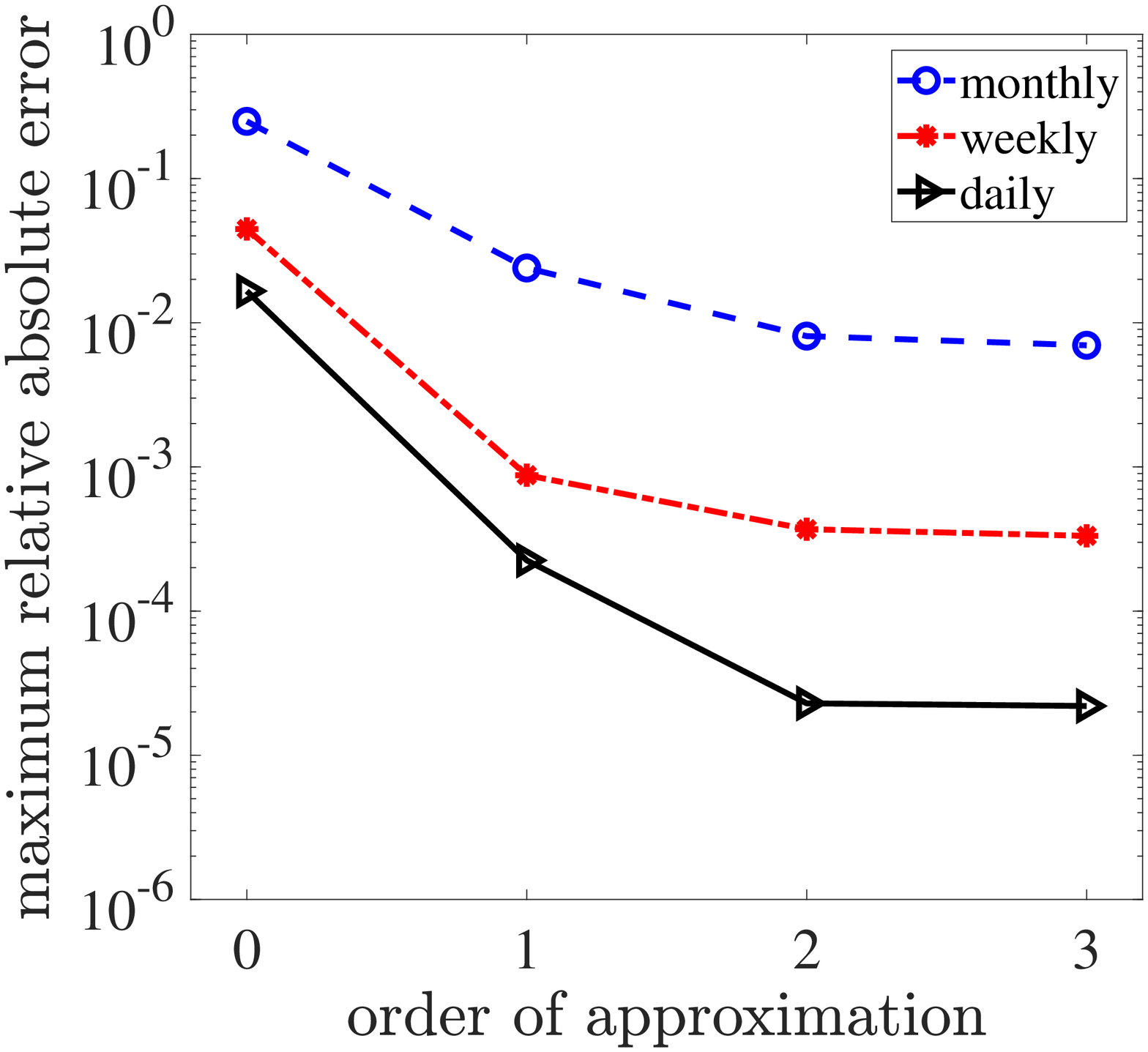}%
\\
(c) Square-root diffusion model.
\end{center}}}
\end{tabular}%
\caption
{Maximum relative absolute errors of density approximation for Models 1, 2 and 3 with orders $M=0,1,2,3$.}%
\label{fig:Max_relative_abso_err}%
\end{center}%
%

\renewcommand{\baselinestretch}{1.0}%
\noindent
\end{figure}%

In Figures \ref{fig:error Model 1}, \ref{fig:error Model 2} and
\ref{fig:error Model 3}, we plot the series of approximation errors defined by
(\ref{approximated error}) for the three models respectively. For each model,
we consider the case of $\Delta=1/52$ and denote by $e_{0},e_{1},e_{2},e_{3}$
the abbreviation of $e_{M}\left(  \left.  \Delta,x\right\vert x_{0}%
;\theta\right)  $ in (\ref{approximated error}) for $M=0,1,2,3$ respectively.
We observe from Figures \ref{fig:error Model 1} -- \ref{fig:error Model 3}
that the approximation errors decrease quickly and consistently as the order
of expansion increases.%

\begin{figure}[H]%
\begin{center}%
\begin{tabular}
[c]{cccc}%
{\parbox[b]{1.5003in}{\begin{center}
\includegraphics[
height=1.2994in,
width=1.5003in
]%
{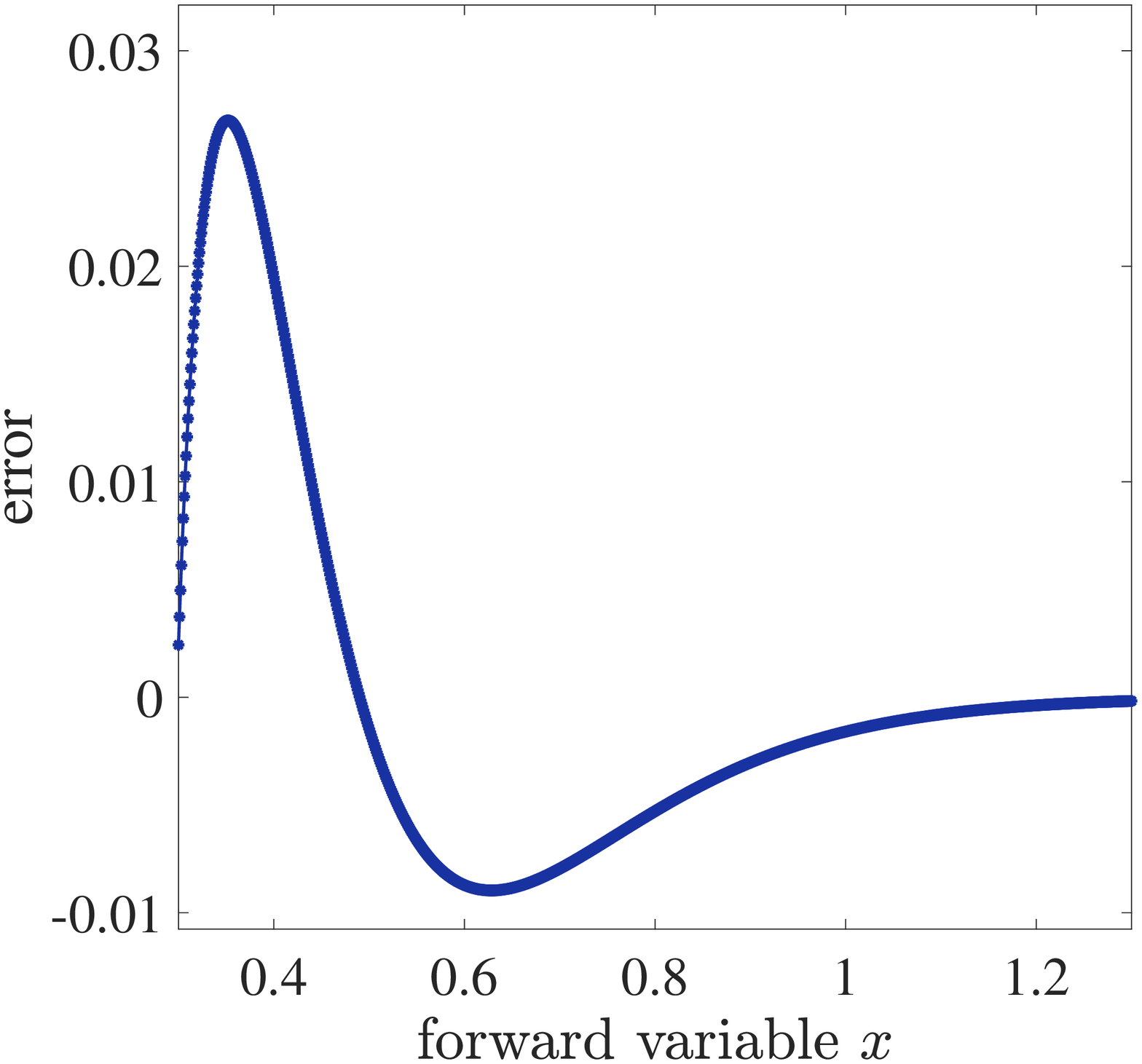}%
\\
(a) $e_{0}$.
\end{center}}}
&
{\parbox[b]{1.5003in}{\begin{center}
\includegraphics[
height=1.2994in,
width=1.5003in
]%
{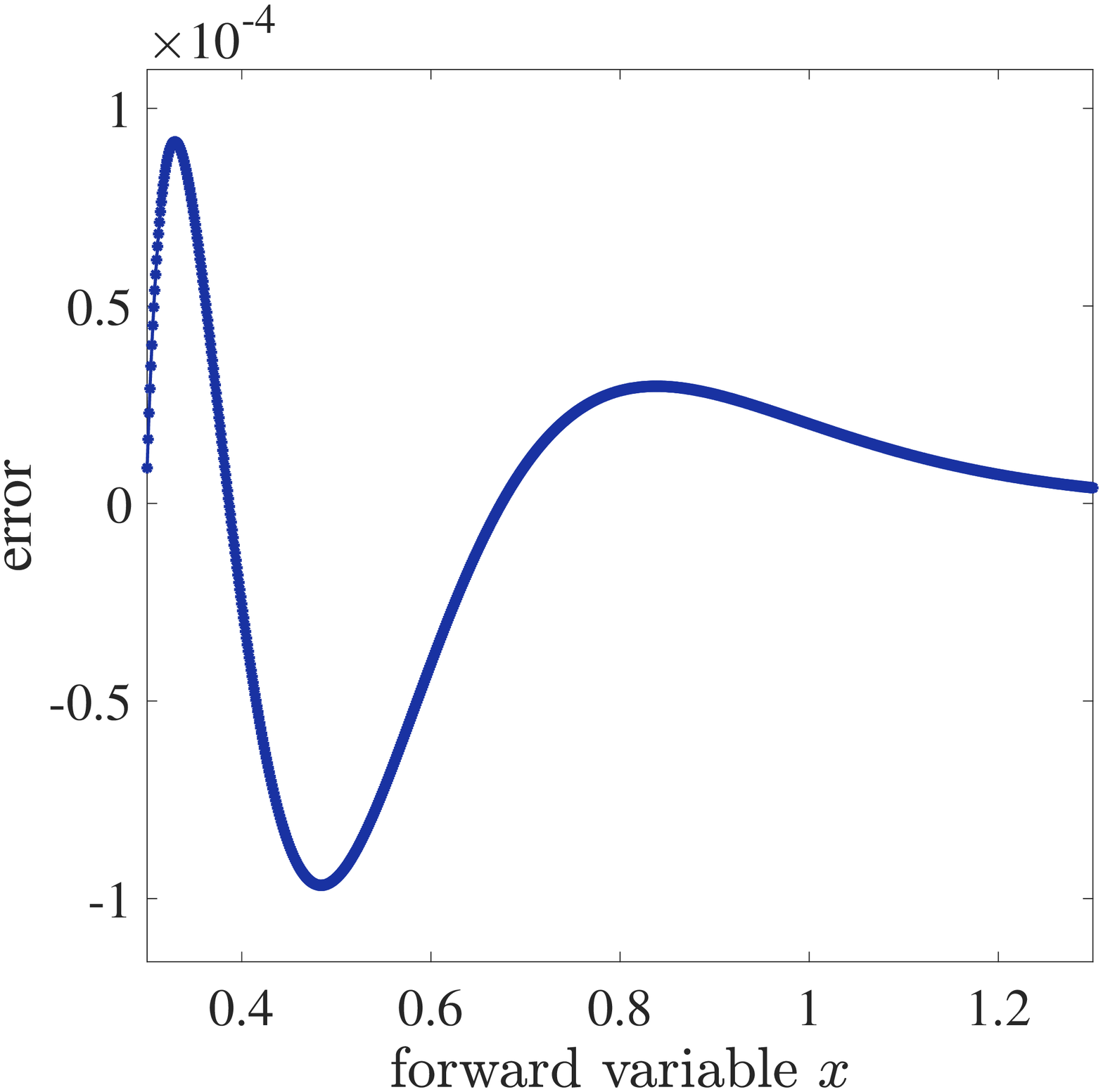}%
\\
(b) $e_{1}$.
\end{center}}}
&
{\parbox[b]{1.5003in}{\begin{center}
\includegraphics[
height=1.2994in,
width=1.5003in
]%
{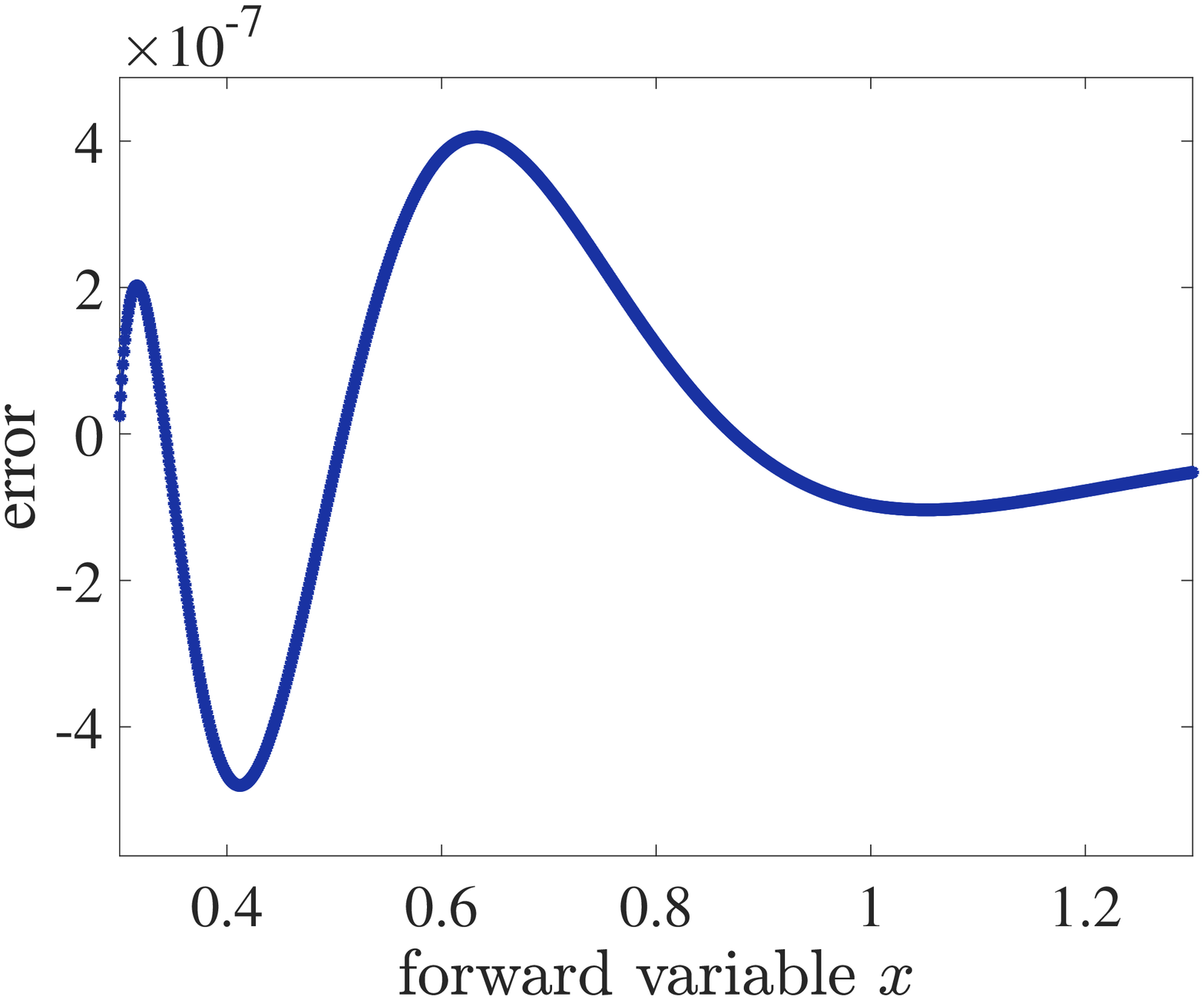}%
\\
(b) $e_{2}$.
\end{center}}}
&
{\parbox[b]{1.5003in}{\begin{center}
\includegraphics[
height=1.2994in,
width=1.5003in
]%
{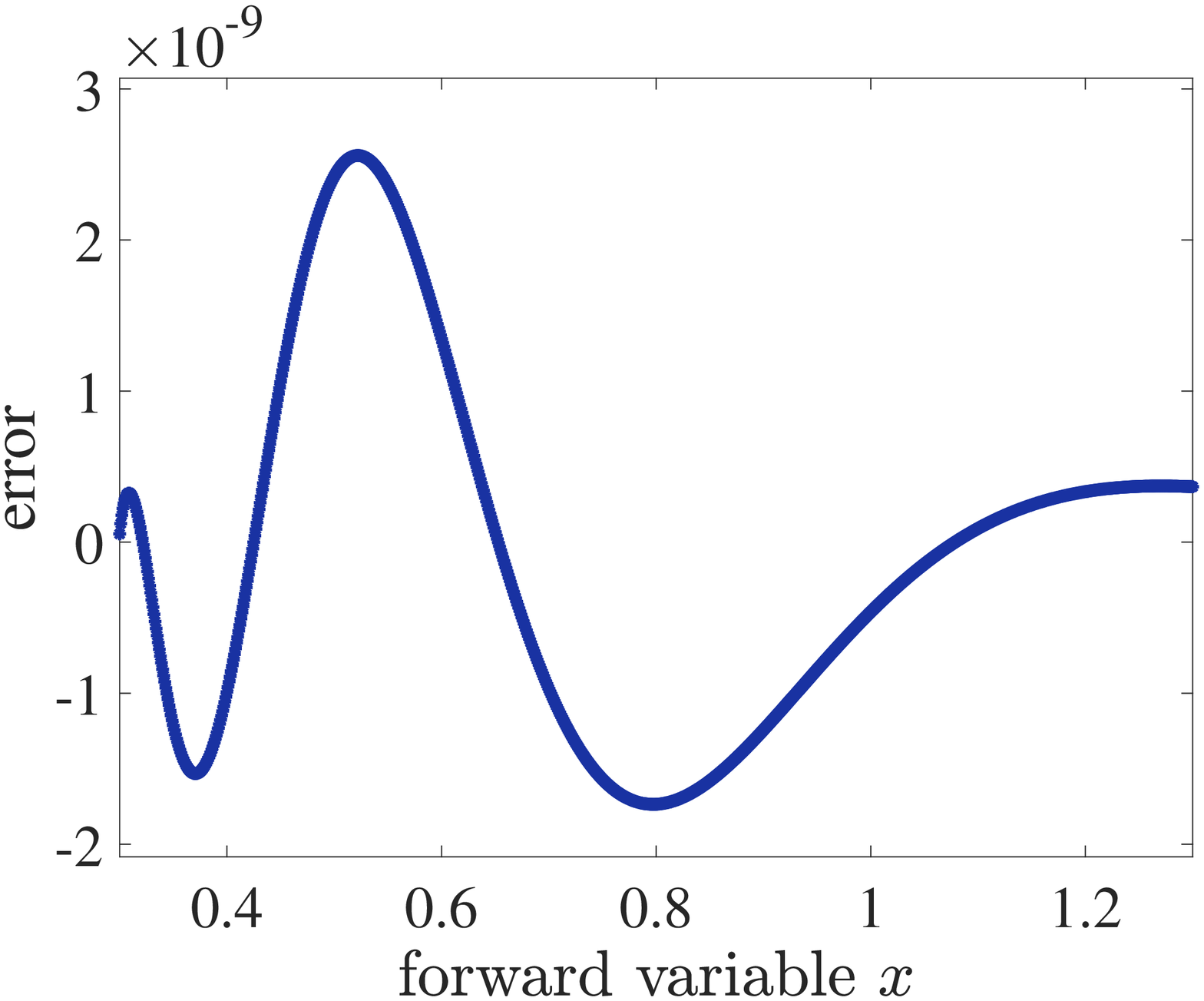}%
\\
(c) $e_{3}$.
\end{center}}}
\end{tabular}%
\caption
{Errors of density approximation for weekly monitoring frequency ($\Delta
=1/52$) in pure jump OU model, i.e., $e_{0},e_{1},e_{2},e_{3}%
$ as the approximation errors with respect to the expansion orders $M=0,1,2,3$ respectively.}%
\label{fig:error Model 1}%
\end{center}%
%

\renewcommand{\baselinestretch}{1.0}%
\noindent
\end{figure}%
%

\begin{figure}[H]%
\begin{center}%
\begin{tabular}
[c]{cccc}%
{\parbox[b]{1.5003in}{\begin{center}
\includegraphics[
height=1.2994in,
width=1.5003in
]%
{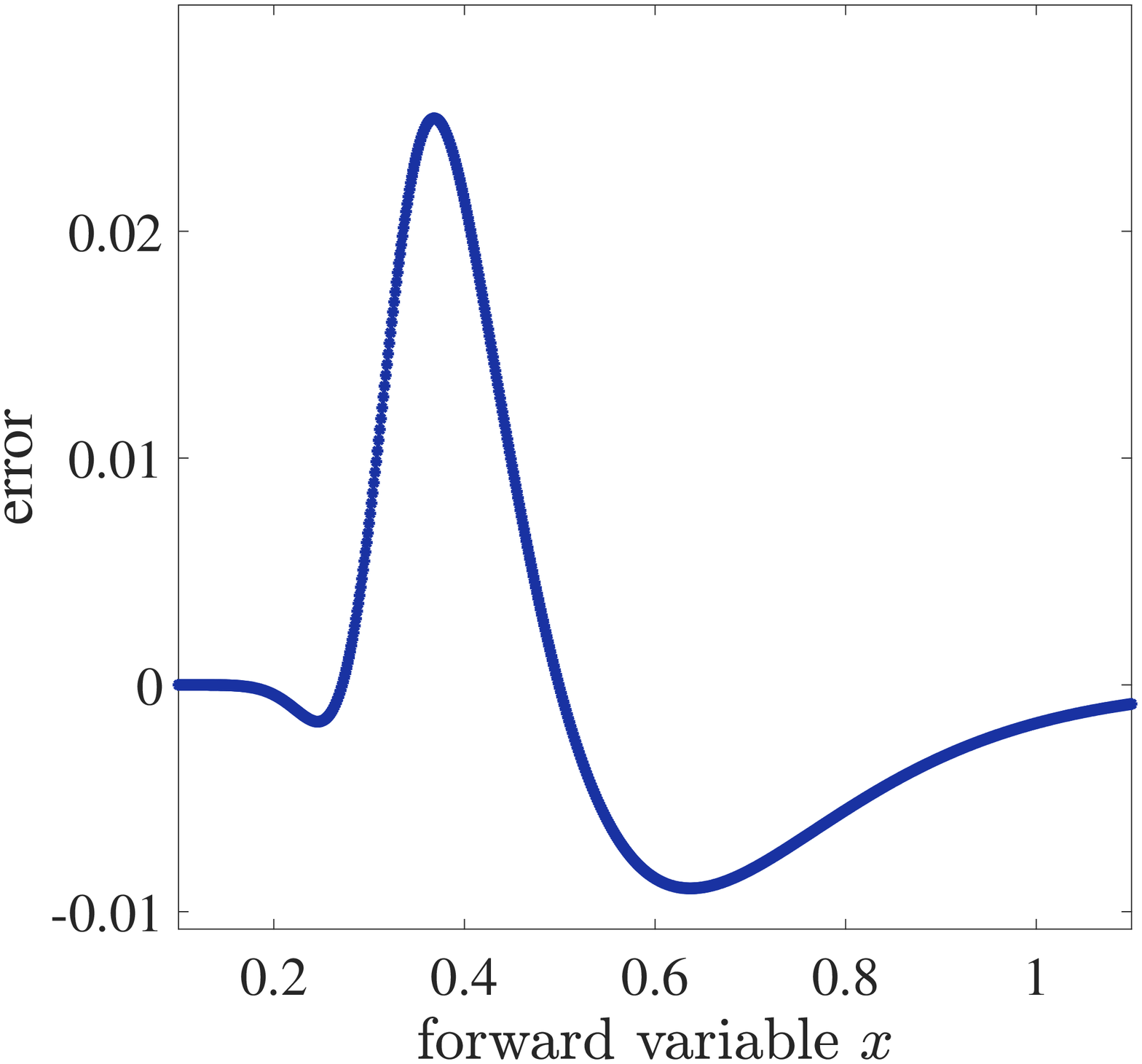}%
\\
(a) $e_{0}$.
\end{center}}}
&
{\parbox[b]{1.5003in}{\begin{center}
\includegraphics[
height=1.2994in,
width=1.5003in
]%
{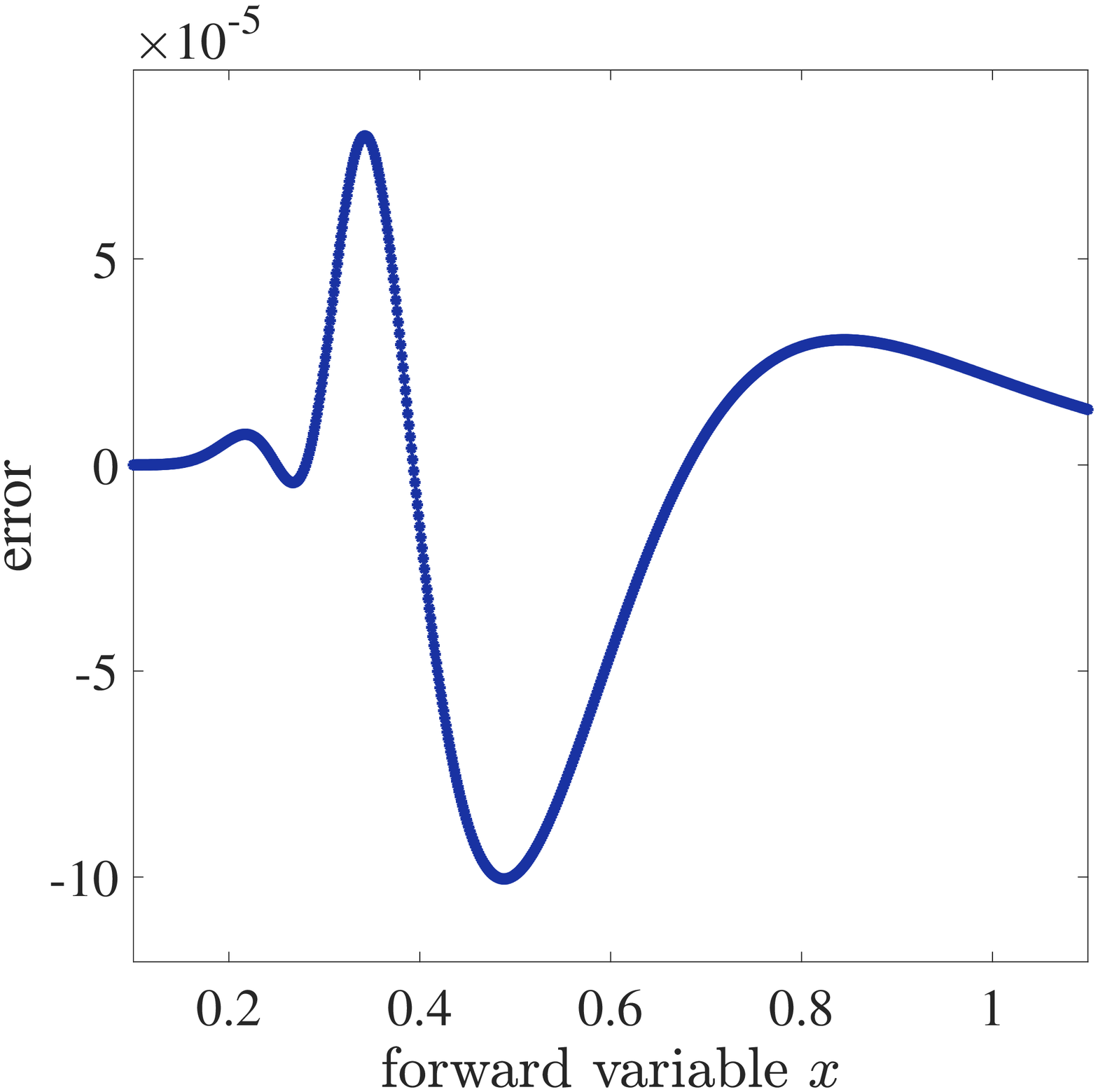}%
\\
(b) $e_{1}$.
\end{center}}}
&
{\parbox[b]{1.5003in}{\begin{center}
\includegraphics[
height=1.2994in,
width=1.5003in
]%
{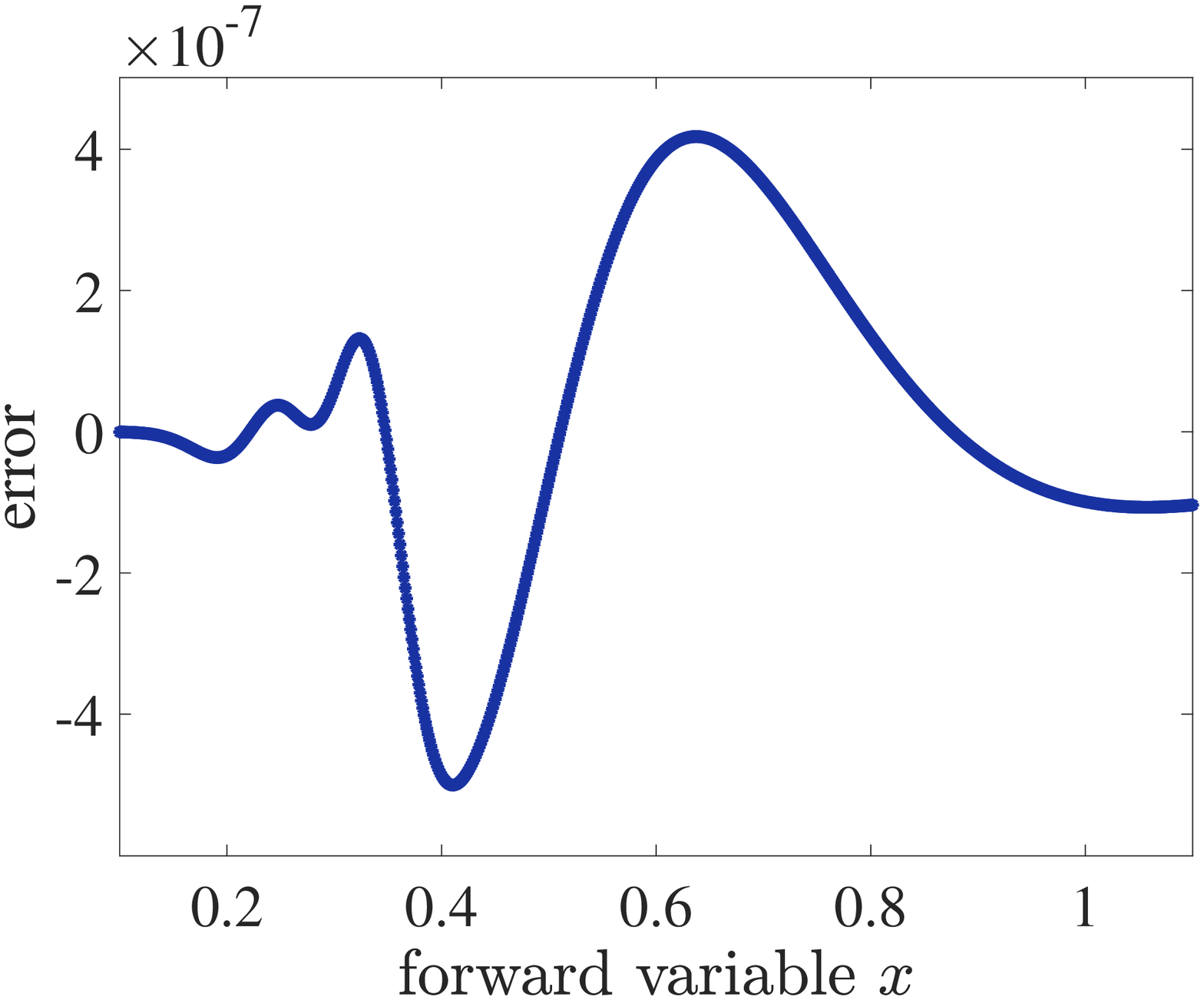}%
\\
(b) $e_{2}$.
\end{center}}}
&
{\parbox[b]{1.5003in}{\begin{center}
\includegraphics[
height=1.2994in,
width=1.5003in
]%
{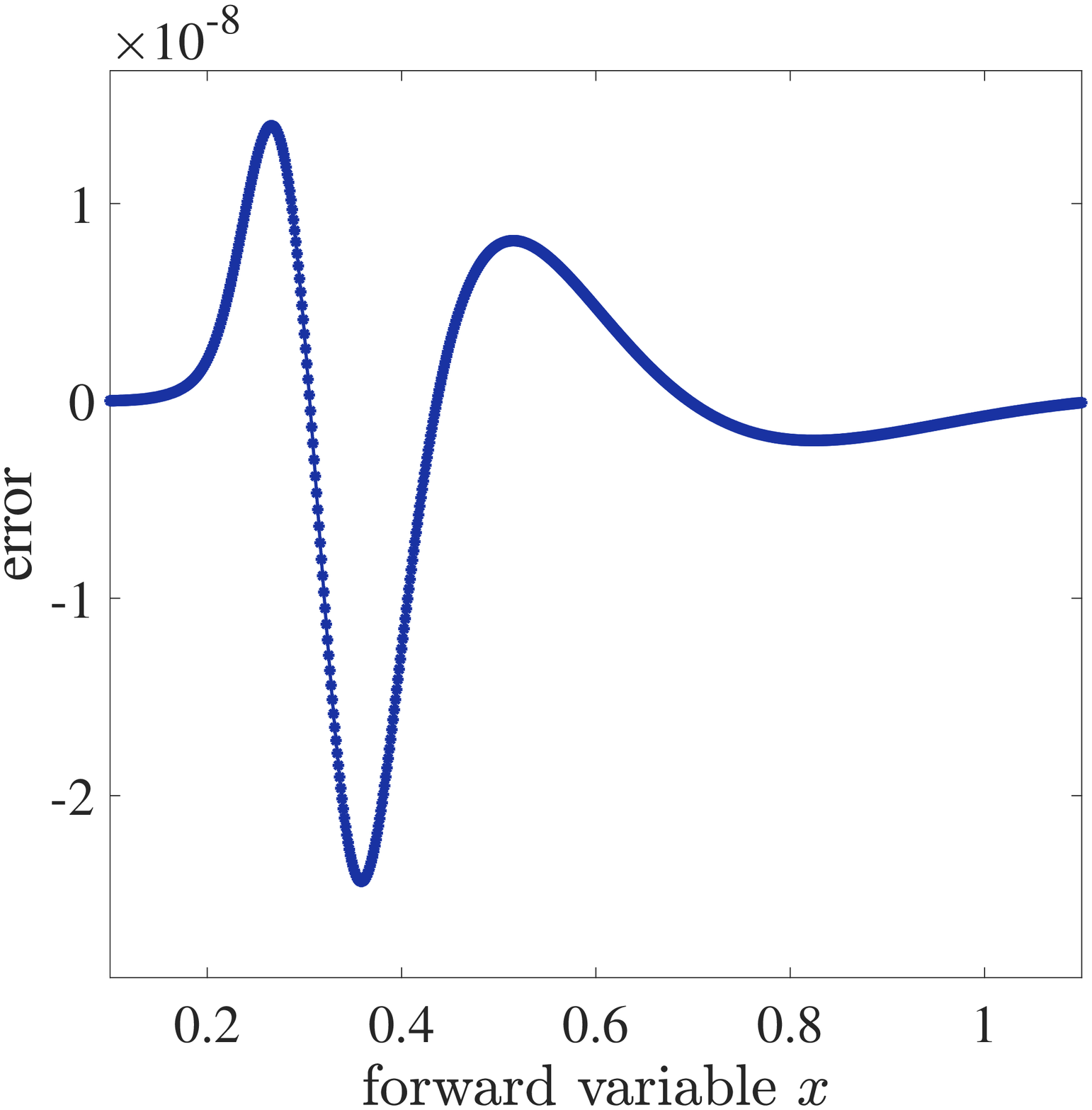}%
\\
(c) $e_{3}$.
\end{center}}}
\end{tabular}%
\caption
{Errors of density approximation for weekly monitoring frequency ($\Delta
=1/52$) in constant diffusion model, i.e., $e_{0},e_{1},e_{2},e_{3}%
$ as the approximation errors with respect to the expansion orders $M=0,1,2,3$ respectively.}%
\label{fig:error Model 2}%
\end{center}%
%

\renewcommand{\baselinestretch}{1.0}%
\noindent
\end{figure}%
%

\begin{figure}[H]%
\begin{center}%
\begin{tabular}
[c]{cccc}%
{\parbox[b]{1.5003in}{\begin{center}
\includegraphics[
height=1.2994in,
width=1.5003in
]%
{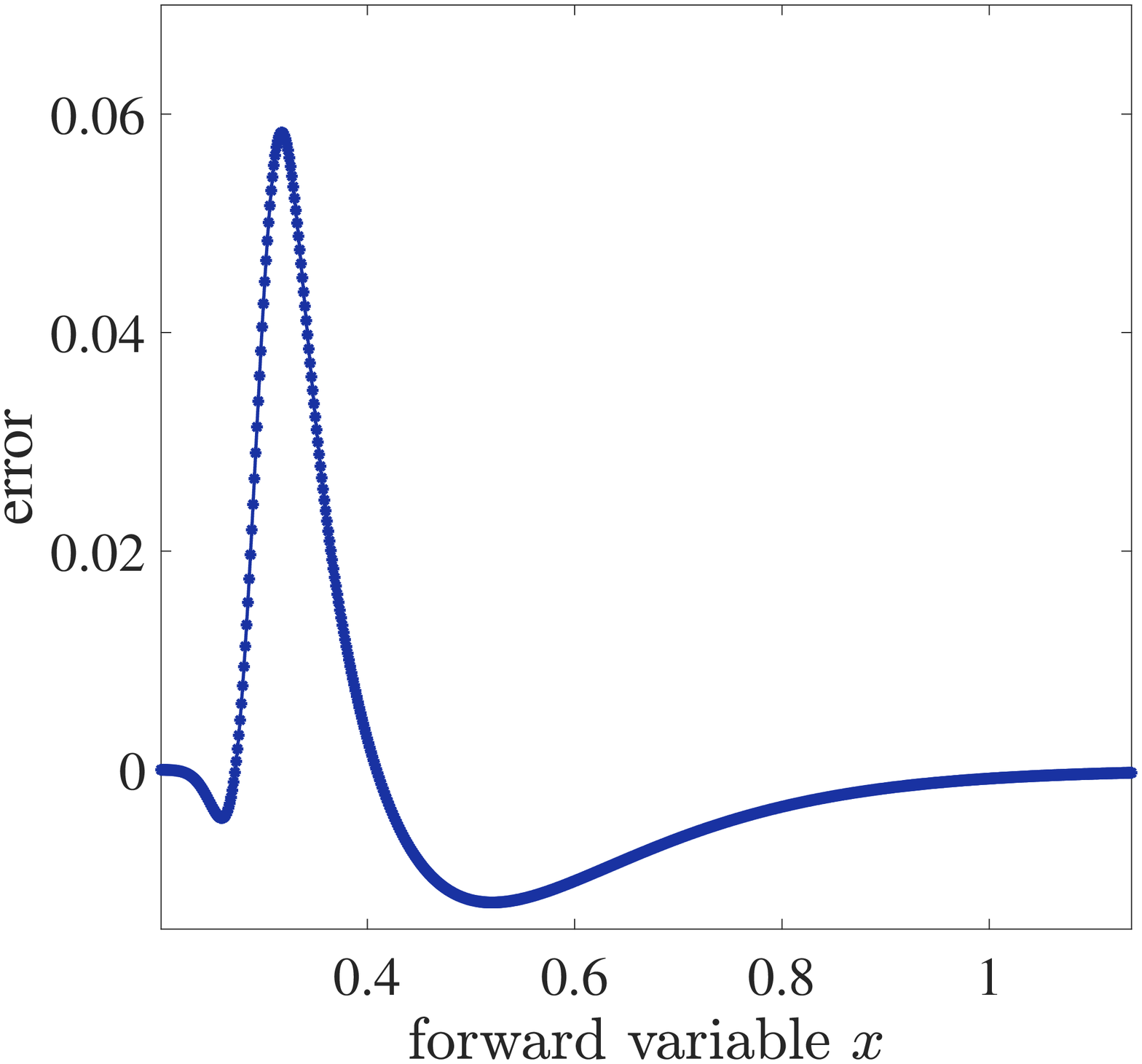}%
\\
(a) $e_{0}$.
\end{center}}}
&
{\parbox[b]{1.5003in}{\begin{center}
\includegraphics[
height=1.2994in,
width=1.5003in
]%
{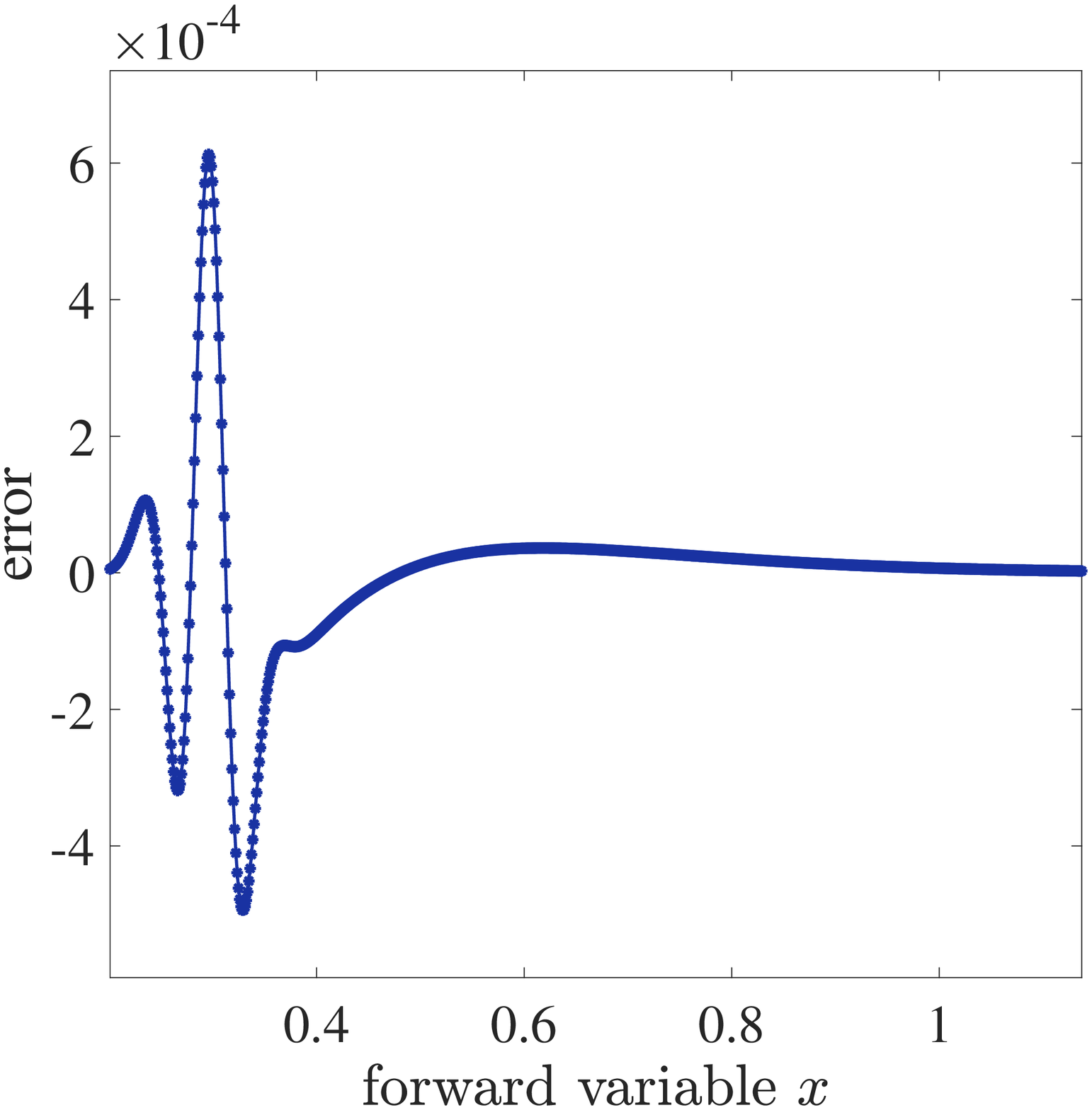}%
\\
(b) $e_{1}$.
\end{center}}}
&
{\parbox[b]{1.5003in}{\begin{center}
\includegraphics[
height=1.2994in,
width=1.5003in
]%
{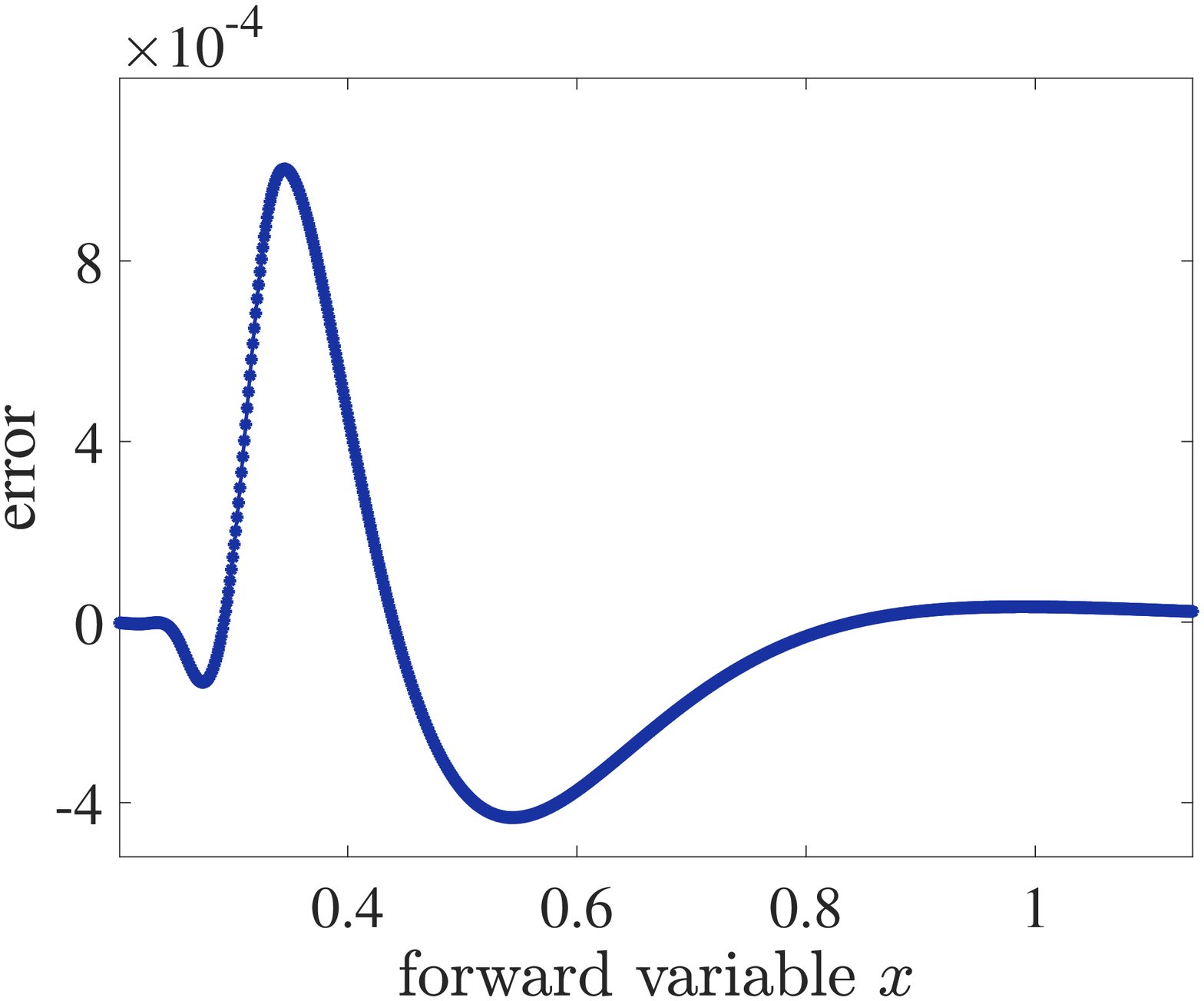}%
\\
(b) $e_{2}$.
\end{center}}}
&
{\parbox[b]{1.5003in}{\begin{center}
\includegraphics[
height=1.2994in,
width=1.5003in
]%
{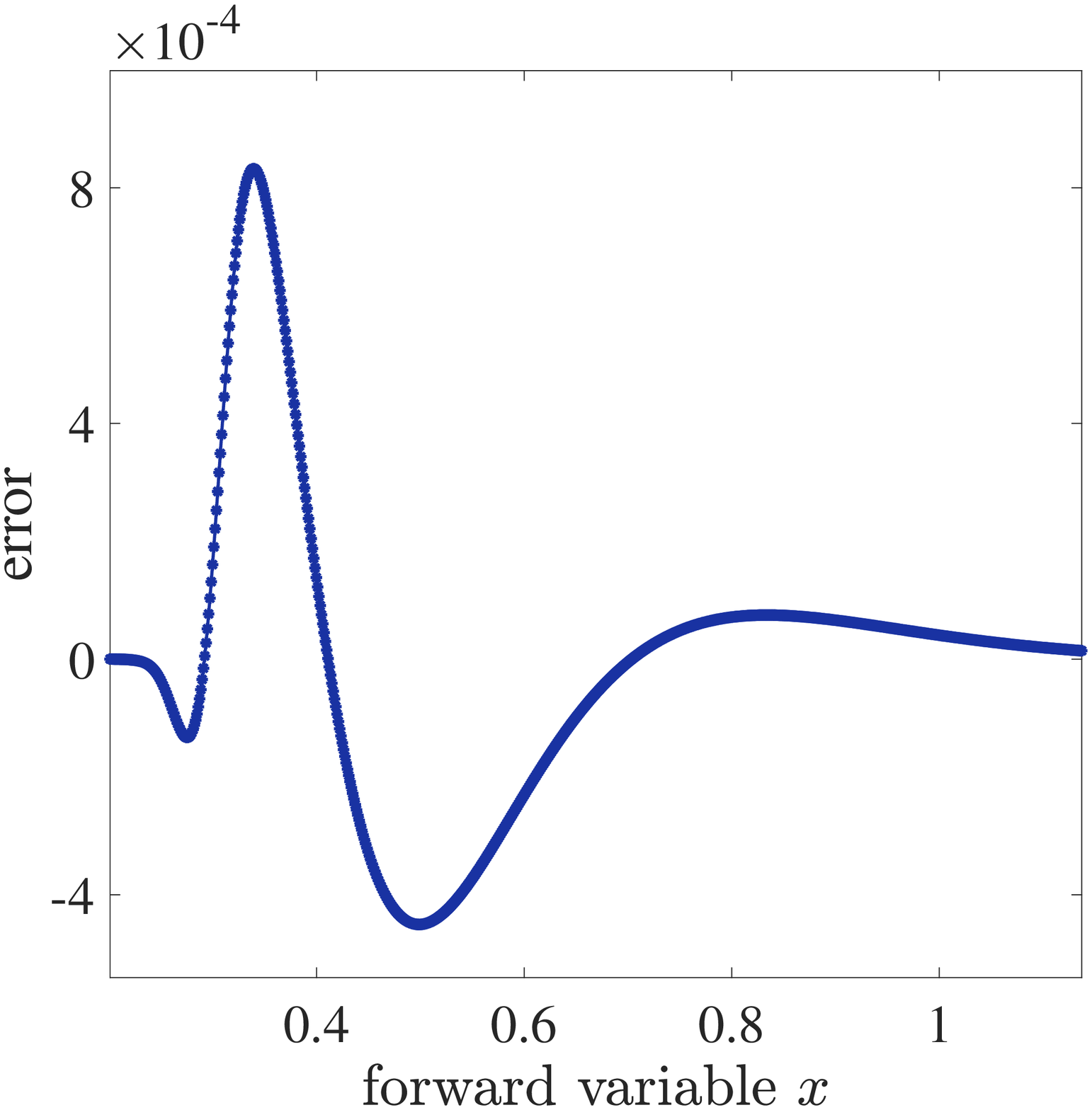}%
\\
(c) $e_{3}$.
\end{center}}}
\end{tabular}%
\caption
{Errors of density approximation for weekly monitoring frequency ($\Delta
=1/52$) in square-root diffusion model, i.e., $e_{0},e_{1},e_{2}%
,e_{3}%
$ as the approximation errors with respect to the expansion orders $M=0,1,2,3$ respectively.}%
\label{fig:error Model 3}%
\end{center}%
%

\renewcommand{\baselinestretch}{1.0}%
\noindent
\end{figure}%

The advantages of our asymptotic expansion method over the method of inverse
Fourier transform can be summarized as follows:

\begin{description}
\item[(1)] When the solution of SDE (\ref{model}) does not admit an explicit
expression of $X(t)$ or characteristic function $\phi\left(  t;\omega\right)
$, our method can still be used to approximate the transition density.

\item[(2)] The approximation errors decrease quickly as the approximation
order $M$ in (\ref{pX_M_expan}) increases, thus it suffices to use the first
several expansion terms for the approximation.

\item[(3)] Since the SDE driven by gamma process involves the fat-tail
characteristic, the characteristic function $\phi\left(  t;\omega\right)  $
for the pure jump SDE decreases slowly when $\omega\rightarrow+\infty$, which
induces heavy computation burden in the inverse Fourier transform method. In
contrast, our expansion terms for the pure jump SDE can achieve quick
convergence and be evaluated in a few seconds for any rational initial value
of $X_{0}$.
\end{description}

\section{Concluding remarks}

\label{sec:conclusion}

In this paper, we propose a closed-form asymptotic expansion to approximate
the transition density of the jump-diffusion SDE driven by the gamma process.
We employ three examples with known characteristic functions for numerical
illustrations and comparisons. Compared with the method of calculating the
transition density by inverse Fourier transform of the characteristic
function, our method is more efficient while achieving low approximation
errors. In terms of the applications in financial engineering, our
approximation method can be directly applied for option pricing and hedging to
obtain analytically tractable results, which is left for further study.

\subsection*{Acknowledgements}

Jiang and Yang's research was supported by the National Natural Science
Foundation of China (Grants No. 11671021).

\bibliographystyle{elsevier}
\bibliography{gammaexpansion}

\end{document}